\newtheorem{theorem}{Theorem}
\newtheorem{lemma}[theorem]{Lemma}
\newtheorem{claim}[theorem]{Claim}
\newtheorem{observation}[theorem]{Observation}
\theoremstyle{definition}
\newcommand{\tw}{\mathrm{tw}}
\newcommand{\cutsim}{\mathrm{cutsim}}
\newcommand{\simw}{\mathrm{simw}}
\newcommand{\cutmim}{\mathrm{cutmim}}
\newcommand{\mimw}{\mathrm{mimw}}
\newcommand{\ssi}{\subseteq_i}
\newcommand{\tin}{\mathrm{tree}\textnormal{-}\alpha}
\renewcommand{\ssi}{\subseteq_i}
\newcommand{\comment}[1]{}
\newcommand*{\myproofname}{Proof}
\newenvironment{claimproof}[1][\myproofname]{\begin{proof}[#1]}{\end{proof}}
\providecommand{\keywords}[1]
{
  \small	
  \textbf{\textit{Keywords---}} #1
}
\newtheorem{proposition}[theorem]{Proposition}
\newcommand{\Mod}[1]{\ (\mathrm{mod}\ #1)}
\newcommand{\si}{\supseteq_i}
\newtheorem{open}{Open Problem}
\begin{document}

\author[1]{Andrea Munaro\thanks{andrea.munaro@unipr.it}}
\author[2]{Shizhou Yang\thanks{syang22@qub.ac.uk}}
\affil[1]{Department of Mathematical, Physical and Computer Sciences, University of Parma, Italy}
\affil[2]{School of Mathematics and Physics, Queen’s University Belfast, United Kingdom}

%\author{James A. Long Jr., Kevin G. Milans, Andrea Munaro}
\title{On algorithmic applications of sim-width and mim-width of $(H_1, H_2)$-free graphs}
\date{\today}

\maketitle

\begin{abstract} Mim-width and sim-width are among the most powerful graph width parameters, with sim-width more powerful than mim-width, which is in turn more powerful than clique-width. While several $\mathsf{NP}$-hard graph problems become tractable for graph classes whose mim-width is bounded and quickly computable, no algorithmic applications of boundedness of sim-width are known. In [Kang et al., A width parameter useful for chordal and co-comparability graphs, \textit{Theoretical Computer Science}, 704:1-17, 2017], it is asked whether \textsc{Independent Set} and \textsc{$3$-Colouring} are $\mathsf{NP}$-complete on graphs of sim-width at most $1$. We observe that, for each $k \in \mathbb{N}$, \textsc{List $k$-Colouring} is polynomial-time solvable for graph classes whose sim-width is bounded and quickly computable. Moreover, we show that if the same holds for \textsc{Independent Set}, then \textsc{Independent $\mathcal{H}$-Packing} is polynomial-time solvable for graph classes whose sim-width is bounded and quickly computable. This problem is a common generalisation of \textsc{Independent Set}, \textsc{Induced Matching}, \textsc{Dissociation Set} and \textsc{$k$-Separator}.   

We also make progress toward classifying the mim-width of $(H_1,H_2)$-free graphs in the case $H_1$ is complete or edgeless. Our results  solve some open problems in [Brettell et al., Bounding the mim-width of hereditary graph classes, \textit{Journal of Graph Theory}, 99(1):117-151, 2022]. 
\end{abstract}

\keywords{Width parameter, mim-width, sim-width, hereditary graph class, $\mathsf{XP}$ algorithm}

\section{Introduction}

Over the last decades, graph width parameters have proven to be an extremely successful tool in algorithmic graph theory. Arguably the most important reason explaining the jump from computational hardness of a graph problem to tractability, after restricting the input to some graph class~${\cal G}$, is that ${\cal G}$ has bounded ``width'', for some width parameter $p$. That is, there exists a constant $c$ such that, for each graph $G \in {\cal G}$, $p(G) \leq c$. A large number of width parameters have been introduced, and these parameters typically differ in strength. We say that a width parameter~$p$ {\it dominates} a width parameter~$q$ if there is a function~$f$ such that $p(G)\leq f(q(G))$ for all graphs~$G$. If~$p$ dominates~$q$ but $q$ does not dominate $p$, then~$p$ is said to be {\it more powerful} than~$q$. If both $p$ and $q$ dominate each other, then~$p$ and~$q$ are {\it equivalent}. For instance, the equivalent parameters boolean-width, clique-width, module-width, NLC-width and rank-width \cite{BTV11,Jo98,OS06,Ra08} are more powerful than the equivalent parameters branch-width, treewidth and mm-width \citep{CO00,JST18,RS91,Vat12} but less powerful than mim-width \cite{Vat12}, which is less powerful than sim-width \cite{KKST17}. We also mention that the recently introduced tree-independence number \citep{DMS21} is more powerful than treewidth, less powerful than sim-width and incomparable with both clique-width and mim-width (see \Cref{ourresults}). The \textit{tree-independence number of a graph $G$}, denoted $\tin(G)$, is defined as the minimum independence number over all tree decompositions of $G$, where the independence number of a tree decomposition of $G$ is the maximum independence number over all subgraphs of $G$ induced by some bag of the tree decomposition.

In this paper, we focus on mim-width and sim-width, both defined using the framework of branch decompositions. A \textit{branch decomposition} of a graph~$G$ is a pair $(T, \delta)$, where $T$ is a subcubic tree and $\delta$ is a bijection from~$V(G)$ to the leaves of $T$.  
Every edge $e \in E(T)$ partitions the leaves of $T$ into two classes, $L_e$ and $\overline{L_e}$, depending on which component of $T-e$ they belong to.
Hence, $e$ induces a partition $(A_e, \overline{A_e})$ of $V(G)$, where $\delta(A_e) = L_e$ and $\delta(\overline{A_e}) = \overline{L_e}$.
We let $G[A_e,\overline{A_e}]$ denote the bipartite subgraph of $G$ induced by the edges with one endpoint in $A_e$ and the other in $\overline{A_e}$.
A matching $F \subseteq E(G)$ of $G$ is {\it induced} if there is no edge in $G$ between vertices of different edges of $F$.
We let $\cutmim_{G}(A_{e}, \overline{A_{e}})$ denote the maximum size of an induced matching in $G[A_{e}, \overline{A_{e}}]$ and $\cutsim_{G}(A_{e}, \overline{A_{e}})$ denote the maximum size of an induced matching between $A_e$ and $\overline{A_{e}}$ in $G$ (equivalently, $\cutsim_{G}(A_{e}, \overline{A_{e}})$ is the maximum size of an induced matching in $G[A_{e}, \overline{A_{e}}]$ such that in addition there are no edges in $G$ between any two endpoints of matching edges that both belong to either $A_e$ or $\overline{A_e}$).   
The \emph{mim-width} of $(T, \delta)$, denoted $\mimw_{G}(T, \delta)$, is the maximum value of $\cutmim_{G}(A_{e}, \overline{A_{e}})$ over all edges $e\in E(T)$ and the \emph{mim-width} of $G$, denoted $\mimw(G)$, is the minimum value of $\mimw_{G}(T, \delta)$ over all branch decompositions $(T, \delta)$ of $G$. Similarly, the \emph{sim-width} of $(T, \delta)$, denoted $\simw_{G}(T, \delta)$, is the maximum value of $\cutsim_{G}(A_{e}, \overline{A_{e}})$ over all edges $e\in E(T)$ and the \emph{sim-width} of $G$, denoted $\simw(G)$, is the minimum value of $\simw_{G}(T, \delta)$ over all branch decompositions $(T, \delta)$ of $G$. Clearly, $\simw(G) \leq \mimw(G)$, for any graph $G$.  

We now briefly review the algorithmic implications of boundedness of mim-width, sim-width and tree-independence number. We begin with a recent and remarkable meta-theorem provided by \citet{BDJ22}. They showed that all problems expressible in $\mathsf{A\& C \ DN}$ logic, an extension of existential $\mathsf{MSO}_1$ logic, can be solved in $\mathsf{XP}$ time parameterized by the mim-width of a given branch decomposition of the input graph. This result, which can be viewed as the mim-width analogue of the famous meta-theorems for treewidth \citep{Cou90} and clique-width \citep{CMR00}, generalises essentially all the previously known $\mathsf{XP}$ algorithms parameterized by mim-width, as $\mathsf{A\& C \ DN}$ logic captures both local and non-local problems. Just to name few problems falling into this framework, we have all Locally Checkable Vertex Subset and Vertex Partitioning problems \citep{BV13,BTV13}, their distance versions \citep{JKST19} and their connectivity and acyclicity versions \citep{BK19}, \textsc{Longest Induced Path} and \textsc{Induced Disjoint Paths} \citep{JKT}, \textsc{Feedback Vertex Set} \citep{JKT20}, \textsc{Semitotal Dominating Set} \citep{GMR20}. Boundedness of tree-independence number has interesting algorithmic implications as well. \citet{DMS21} showed that, for any fixed finite set $\mathcal{H}$ of connected graphs, \textsc{Maximum Weight Independent $\mathcal{H}$-Packing}, a common generalisation of \textsc{Maximum Weight Independent Set} and \textsc{Maximum Weight Induced Matching} first defined in \citep{CH06}, can be solved in $\mathsf{XP}$ time parameterized by the independence number of a given tree decomposition of the input graph. They also showed that \textsc{$k$-Clique} and \textsc{List $k$-Colouring} admit linear-time algorithms for every graph class with bounded tree-independence number. This result holds more generally for every $(\tw,\omega)$-bounded graph class admitting a computable binding function, as shown by \citet{CZ17}, where a graph class $\mathcal{G}$ is \textit{$(\tw, \omega)$-bounded} if there exists a function $f$ (called a binding function) such that the treewidth of any graph $G \in \mathcal{G}$ is at most $f(\omega(G))$ and the same holds for all induced subgraphs of $G$. In \citep{DMS21}, it was observed that in every graph class with bounded tree-independence number, the treewidth is bounded by an explicit polynomial function of the clique number, and hence bounded tree-independence number implies $(\tw,\omega)$-boundedness.   

The trade-off of working with a more powerful width parameter is that, typically, fewer problems admit a polynomial-time algorithm when the parameter is bounded. Consider, for example, mim-width and the more powerful sim-width. \textsc{Dominating Set} is in $\mathsf{XP}$ parameterized by mim-width \citep{BTV13}. However, \textsc{Dominating Set} is $\mathsf{NP}$-complete on chordal graphs, a class of graphs of sim-width at most $1$ \citep{KKST17}. On the other hand, it is known that one can solve \textsc{Independent Set} and \textsc{$3$-Colouring} in polynomial time on both chordal graphs and co-comparability graphs, two classes of sim-width at most $1$, as shown by \citet{KKST17}. This led them to ask whether any of \textsc{Independent Set} and \textsc{$3$-Colouring} is $\mathsf{NP}$-complete on graphs of sim-width at most $1$ \citep[Question~2]{KKST17}. For convenience, we reformulate this question as follows: 

\begin{open}\label{opensim} Is any of \textsc{Independent Set} and \textsc{$3$-Colouring} in $\mathsf{XP}$ parameterized by the sim-width of a given branch decomposition of the input graph?
\end{open}

\noindent To the best of our knowledge, no problem $\mathsf{NP}$-complete on general graphs is known to be in $\mathsf{XP}$ parameterized by the sim-width of a given branch decomposition of the input graph.

In view of the discussion above, if we are interested in the computational complexity of a certain graph problem restricted to a special graph class, it is useful to know whether the mim-width of the class is bounded or not and, in the case of a positive answer to \Cref{opensim}, the same is true for sim-width. A systematic study on the boundedness of mim-width for hereditary graph classes, comparable to similar studies on the boundedness of clique-width (see, e.g., \citep{DJP19}) and treewidth \citep{LR22}, was recently initiated in \citep{BHMPP22} (see also \citep{BHMPP1}). Recall that a graph class is {\it hereditary} if it is closed under vertex deletion. It is well known that hereditary graph classes are exactly those classes characterised by a (unique) set ${\cal F}$ of minimal forbidden induced subgraphs. If $|{\cal F}|=1$ or $|{\cal F}|=2$, we say that the hereditary graph class is {\it monogenic} or {\it bigenic}, respectively.
In \citep{BHMPP22}, boundedness or unboundedness of mim-width has been determined for all monogenic classes and a large number of bigenic classes. 

In general, computing the mim-width is $\mathsf{NP}$-hard, deciding if the mim-width is at most~$k$ is $\mathsf{W}[1]$-hard when parameterized by $k$, and there is no polynomial-time algorithm for approximating the mim-width of a graph to within a constant factor of the optimal unless $\mathsf{NP} = \mathsf{ZPP}$~\cite{SV16}. Moreover, it remains a challenging open problem to obtain, for fixed $k$, a polynomial-time algorithm for computing a branch decomposition with mim-width $f(k)$ of a graph with mim-width~$k$; a similar problem for sim-width is open as well (see, e.g., \citep{Jaf20}). Therefore, in contrast to algorithms for graph classes of bounded treewidth or rank-width \citep{Bod96,OS06}, algorithms for classes of bounded mim-width require a branch decomposition of constant mim-width as part of the input. Obtaining such branch decompositions in polynomial time has been shown possible for several special graph classes~${\cal G}$ (see, e.g., \citep{BV13,BHMPP22}). In this case, we say that the mim-width of~${\cal G}$ is {\it quickly computable}.

Mim-width has proven to be particularly effective in tackling colouring problems. For instance, \citet{Kw20} showed the following (see also \citep{BHMP22}):

\begin{theorem}[\citet{Kw20}]\label{kwon}
For every $k\geq 1$, {\sc List $k$-Colouring} is polynomial-time solvable for every graph class whose mim-width is bounded and quickly computable.
\end{theorem}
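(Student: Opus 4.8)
To prove \Cref{kwon} I would run a dynamic‑programming algorithm over a branch decomposition $(T,\delta)$ of the input graph $G$ of mim-width at most $w$, where $w$ is the constant bounding the mim-width of the class; since the mim-width of the class is quickly computable, such a decomposition is available in polynomial time, so it suffices to make the dynamic programming run in time $n^{O(wk)}$. The combinatorial engine is the well-known fact that a cut of bounded mim-width has few distinct neighbourhoods: if $\cutmim_G(A,\overline{A})\le w$, then $\{\,N(X)\cap\overline{A}:X\subseteq A\,\}$ has at most $n^{O(w)}$ members. Indeed, any such set $Y$ has an inclusion-minimal $X_Y\subseteq A$ with $N(X_Y)\cap\overline{A}=Y$, and minimality provides, for each $x\in X_Y$, a neighbour in $Y$ private to $x$; these private neighbours together with $X_Y$ form an induced matching of $G[A,\overline{A}]$, so $|X_Y|\le w$, and since $Y$ is determined by $X_Y$ there are only $n^{O(w)}$ possibilities; symmetrically $\{\,N(X)\cap A:X\subseteq\overline{A}\,\}$ has size $n^{O(w)}$.

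Root $T$, subdividing an edge if necessary, so that every internal node has two children and the leaves are in bijection with $V(G)$, and for a node $t$ let $V_t$ be the set of vertices at leaves below $t$. The main difficulty is choosing the right dynamic-programming state: tracking only the ``outgoing'' information of a partial colouring does not suffice, since at a merge one must also detect monochromatic edges \emph{across} the cut, and that requires ``incoming'' information about which boundary vertices are already forced by the other side. I would therefore let a state at $t$ be a pair $(\sigma,\pi)$ with $\sigma=(S_1,\dots,S_k)$, each $S_i\subseteq\overline{V_t}$, and $\pi=(P_1,\dots,P_k)$, each $P_i\subseteq V_t$, and call $(\sigma,\pi)$ \emph{realisable} if there is a proper colouring $c$ of $G[V_t]$ respecting the lists with $N(c^{-1}(i))\cap\overline{V_t}=S_i$ for every $i$ and $c^{-1}(i)\cap P_i=\varnothing$ for every $i$ (the latter expressing that $c$ is consistent with ``the vertices of $P_i$ get a colour-$i$ neighbour from outside $V_t$''). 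By the counting fact above it is enough to consider each $S_i$ among the $n^{O(w)}$ sets of the form $N(X)\cap\overline{V_t}$ and each $P_i$ among the $n^{O(w)}$ sets of the form $N(X)\cap V_t$, so each table has $n^{O(wk)}$ entries, and the algorithm decides, for every node, which of these states are realisable.

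The base case (a leaf) is immediate. For an internal node $t$ with children $a,b$, so $V_t=V_a\cup V_b$ disjointly, a colouring $c$ realising $(\sigma^t,\pi^t)$ restricts to colourings $c_a,c_b$ of $G[V_a],G[V_b]$, and the states they realise at $a,b$ are obtained from $\sigma^a,\sigma^b,\sigma^t,\pi^t$ via $S^t_i=(S^a_i\cup S^b_i)\cap\overline{V_t}$, $P^a_i=(S^b_i\cap V_a)\cup(P^t_i\cap V_a)$ and $P^b_i=(S^a_i\cap V_b)\cup(P^t_i\cap V_b)$, using $N(c_b^{-1}(i))\cap V_a=S^b_i\cap V_a$ and the symmetric identity. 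The crucial point is that ``$c_a^{-1}(i)\cap P^a_i=\varnothing$'' then says exactly that no colour-$i$ edge runs between $V_a$ and $V_b$ and that $c_a$ respects $\pi^t$ on $V_a$ (and symmetrically for $b$), so $c_a,c_b$ being proper list colourings of their parts, plus the promise conditions, is equivalent to $c$ being a proper list colouring of $G[V_t]$ consistent with $\pi^t$. Hence a state of $t$ is realisable iff it arises from a realisable state of $a$ and a realisable state of $b$ satisfying the three identities, which is checked by looping over all pairs of table entries in $n^{O(wk)}$ time; over the $O(n)$ nodes this gives total time $n^{O(wk)}$, polynomial since $w$ and $k$ are constants, and $G$ has a list $k$-colouring iff the all-$\varnothing$ state is realisable at the root. (More quickly but less self-containedly, one could instead observe that ``$G$ has a list $k$-colouring'' is expressible in existential $\mathsf{MSO}_1$, hence in $\mathsf{A\& C \ DN}$ logic, over the vertex-labelled graph encoding the lists, and invoke the meta-theorem of \citet{BDJ22} on a quickly computed branch decomposition.)
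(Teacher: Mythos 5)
The paper does not prove this statement: \Cref{kwon} is imported verbatim from \citet{Kw20} and used as a black box, so there is no internal proof to compare against. Judged on its own, your argument is correct and is essentially a self-contained rendition of the standard proof. The counting lemma (a cut of mim-value $w$ has only $n^{O(w)}$ distinct neighbourhoods $N(X)\cap\overline{A}$, via inclusion-minimal representatives and private neighbours forming an induced matching) is exactly the engine behind the Bui-Xuan--Telle--Vatshelle framework, and your two-sided state $(\sigma,\pi)$ is the usual ``inner class / outer expectation'' pairing needed to detect monochromatic edges across a join; your update identities and the equivalence at a merge node check out in both directions (in particular, the derived child sets $P^a_i=(S^b_i\cap V_a)\cup(P^t_i\cap V_a)$ are again of the form $N(X)\cap V_a$ for $X\subseteq\overline{V_a}$, so the forward direction really does land inside the restricted table). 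The only cosmetic gaps are the degree-$2$ internal nodes that a subcubic tree may contain (handled by a trivial unary update) and the fully explicit leaf case, neither of which affects correctness. Your closing remark is also apt: \textsc{List $k$-Colouring} is a locally checkable vertex partitioning problem, hence expressible in the logic of \citet{BDJ22}, which gives a one-line derivation consistent with how this paper's surrounding results are obtained.
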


Notice however that {\sc Colouring} (and hence {\sc List Colouring}) is $\mathsf{NP}$-complete for circular-arc graphs~\cite{GJMP80}, a class of graphs of mim-width at most~$2$ and for which mim-width is quickly computable~\cite{BV13}. The complexity of {\sc $k$-Colouring} restricted to $H$-free graphs has not yet been settled and there are infinitely many open cases when $H$ is a {\it linear forest}, that is, a disjoint union of paths. An extensive body of work has been devoted to studying whether forbidding certain linear forests makes \textsc{$k$-Colouring} and its generalisation \textsc{List $k$-Colouring} easy. We refer to~\cite{GJPS17} for a survey and to~\cite{CSZ20,HLS21,KMMNPS20} for updated summaries and briefly highlight below the connections with mim-width.  

For $r\geq 1$ and $s\geq 1$, let $K_{r,s}$ denote the complete bipartite graph with partition classes of size $r$ and $s$. The {\it $1$-subdivision} of a graph $G$ is the graph obtained from $G$ by subdividing each edge exactly once. The $1$-subdivision of $K_{1,s}$ is denoted by $K_{1,s}^1$; in particular $K_{1,2}^1=P_5$. \citet{BHMP22} showed that a number of known polynomial-time results for {\sc $k$-Colouring} and {\sc List $k$-Colouring} on hereditary classes \citep{CSZ20,CGKP15,GPS14b,HKLSS10} can be obtained, and strengthened, by combining \Cref{kwon} with the following:  

\begin{theorem}[\citet{BHMP22}]\label{t-new}
For every $r\geq 1$, $s\geq 1$ and $t\geq 1$, the mim-width of the class of $(K_r,K_{1,s}^1,P_t)$-free graphs is bounded and quickly computable.
\end{theorem}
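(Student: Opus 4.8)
The plan is to produce, for every graph $G$ in the class, a branch decomposition of bounded mim-width by peeling $G$ into breadth-first layers; ``quickly computable'' is then immediate, since the construction is a breadth-first search plus polynomial bookkeeping, and the bound on the width is proved by a purely existential (Ramsey-type) argument, so no search for a good decomposition is needed. As the mim-width of a graph is the maximum over its components and branch decompositions of components combine with no loss, we may assume $G$ is connected. A shortest path is an induced path, so a connected $P_t$-free graph has diameter at most $t-2$; fixing a vertex $v$ and letting $L_0=\{v\},L_1,\dots,L_d$ be the breadth-first layers from $v$, we get $d\le t-2$. We use repeatedly that every edge of $G$ joins two vertices in the same layer or in consecutive layers, so $L_a$ and $L_b$ are anticomplete whenever $|a-b|\ge 2$, and that every $G[L_i]$ is again $(K_r,K_{1,s}^1,P_t)$-free.

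The decomposition is \emph{layered}: a tree whose ``spine'' runs through $L_0,\dots,L_d$ in order, the vertices of each $L_i$ being attached along a contiguous stretch of the spine according to a branch decomposition of $G[L_i]$ produced recursively. Because edges connect only equal or consecutive layers, every cut of this tree has one side of the form $L_0\cup\dots\cup L_{i-1}\cup A$ with $A\subseteq L_i$; hence an induced matching $M$ crossing the cut decomposes into at most three induced matchings --- one inside $G[L_{i-1},L_i]$, one inside $G[L_i,L_{i+1}]$, and one inside $G[L_i]$ crossing the split $(A,L_i\setminus A)$ --- so $|M|$ is bounded by the sum of the three. It therefore suffices to bound (i) the maximum induced matching in each bipartite graph $G[L_{i-1},L_i]$, and (ii) the mim-width, read off inside $G[L_i]$, of the recursive decomposition of $G[L_i]$.

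Claim (i) is the combinatorial heart, and it is where $K_r$- and $K_{1,s}^1$-freeness do the work ($P_t$-freeness only keeps the number of layers bounded). Let $u_1w_1,\dots,u_mw_m$ be a large induced matching of $G[L_{i-1},L_i]$ with $u_j\in L_{i-1}$, $w_j\in L_i$ (the case $i=1$ gives a star, hence is trivial). Applying the bound ``a $K_r$-free graph on $N$ vertices has an independent set of size $\Omega(N^{1/(r-1)})$'' twice, pass to a still-large sub-collection with the $u_j$'s pairwise non-adjacent and the $w_j$'s pairwise non-adjacent. A vertex $p\in L_{i-2}$ adjacent to $s$ of the $u_j$'s would, together with those $u_j$'s as legs and their partners $w_j$'s as feet --- and using that $p$ is anticomplete to $L_i$, so $p\not\sim w_j$ --- induce $K_{1,s}^1$; hence each vertex of $L_{i-2}$ has fewer than $s$ neighbours among the $u_j$'s. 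Since every $u_j$ does have a neighbour in $L_{i-2}$, this bounded-degree condition lets us extract a still-large sub-collection $J$ with distinct $p_j\in L_{i-2}$ such that $p_j\sim u_j$ and $p_j\not\sim u_k$ for distinct $j,k\in J$; one more Ramsey reduction makes the $p_j$'s pairwise non-adjacent. Iterating --- pulling a pairwise-non-adjacent, privately attached system of representatives back one layer at a time, each round establishing the bounded-degree condition via a $K_{1,s}^1$ centred one layer further back --- reaches $L_1$ after at most $t-2$ rounds, whereupon a last step with centre $v$ (using anticompleteness of $L_0$ and $L_2$) would force a $K_{1,|J|}^1$ and hence leaves fewer than $s$ surviving indices; tracing the bounded losses back through the $O(t)$ rounds bounds $m$ in terms of $r,s,t$.

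The part I expect to be the main obstacle is (ii): $G[L_i]$ is again a member of the class, so bounding its mim-width cannot avoid using the ambient structure, namely the neighbours in $L_{i-1}$. The natural route is to cover $L_i$ by closed neighbourhoods $N[x_1],\dots,N[x_p]$ of vertices $x_1,\dots,x_p\in L_{i-1}$, to observe that each $G[N(x_j)]$, lying inside the neighbourhood of a vertex of a $K_r$-free graph, is $K_{r-1}$-free and so is handled by induction on $r$, to split $L_i$ accordingly, and to assemble the pieces; every internal cut of the resulting decomposition is then controlled by the $(r{-}1)$-bound, by the cross-layer contributions already bounded in (i), and by the size of an induced matching of $G[L_i]$ running between different pieces, the last again to be bounded by a variant of the argument for (i) together with $K_{1,s}^1$-freeness. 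The delicate points are to make this accounting of crossing matchings airtight and to set up the induction on $r$ --- possibly combined with the reduction of $P_t$- to $P_{t-2}$-freeness on a connected dominating set provided by the Bacs\'o--Tuza theorem in the form of Camby and Schaudt --- so that the recursion is shallow and the accumulated width stays bounded by a function of $r$, $s$, and $t$. The remaining ingredients --- the reduction to connected graphs, the layered tree, the three-piece split of crossing matchings, and the polynomial running time --- are routine.
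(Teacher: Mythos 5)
This theorem is imported by the paper from \citep{BHMP22} and stated without proof, so there is no in-paper argument to compare yours against; I can only judge the proposal on its own terms. Your reduction to connected graphs, the bound $d\le t-2$ on the number of BFS layers, the three-piece split of a matching crossing a spine cut, and the whole of claim (i) are sound: the Ramsey passage to independent endpoint sets, the observation that a vertex of $L_{i-2}$ with $s$ neighbours among the $u_j$'s yields an induced $K_{1,s}^1$ (its anticompleteness to $L_i$ supplying the missing non-edges to the feet), the greedy extraction of distinct privately attached representatives under the resulting degree bound, and the terminal contradiction at the root $v$, which dominates $L_1$, all check out, and the per-round losses (a Ramsey power and a factor $s$) compound over only $O(t)$ rounds.

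The gap is in (ii), exactly where you flag it, and it is not mere bookkeeping. $G[L_i]$ lies in the same class, so you cannot recurse on it directly, and your substitute --- covering $L_i$ by the traces of neighbourhoods of vertices of $L_{i-1}$ and inducting on $r$ via $K_{r-1}$-freeness --- produces an \emph{unbounded} number $p$ of parts. A bound in the style of \Cref{boundmim}, summing over all pairs of parts, therefore gives $p^2$ times a constant, which is useless; what you actually need is that for every \emph{prefix} of the parts (in the order in which they are peeled off), the induced matching of $G[L_i]$ running from that prefix to its complement is bounded by a function of $r,s,t$. That statement is provable --- the key point, which your text does not isolate, is that when the parts are defined greedily as $P_a=N(x_a)\cap L_i\setminus\bigcup_{a'<a}P_{a'}$, every prefix centre $x_a$ is automatically non-adjacent to \emph{all} suffix endpoints of the matching, so the $K_{1,s}^1$ degree bound and the representative-pulling iteration from (i) can be rerun with the prefix centres playing the role of $L_{i-2}$ --- but as written your ``variant of the argument for (i)'' is a promissory note: you have not shown that the left endpoints' representatives are privately attached (a centre $x_{a_j}$ may be adjacent to $y_k$ whenever $a_k\le a_j$), nor how the induction on $r$ interleaves with the layering so that the accumulated width closes into a bound depending only on $r$, $s$ and $t$. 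Until the prefix-versus-suffix matchings inside each layer are actually bounded and that double recursion is set up and terminated, the proof is incomplete.
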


The trivial but useful observation is that each yes-instance of {\sc List $k$-Colouring} is $K_{k+1}$-free, and so we obtain that, for every $k\geq 1$, $s\geq 1$ and $t\geq 1$, {\sc List $k$-Colouring} is polynomial-time solvable for $(K_{1,s}^1,P_t)$-free graphs \citep{BHMP22}. Hence, in the context of colouring problems on hereditary classes, it makes sense to investigate the mim-width of subclasses of $K_r$-free graphs. A first step is to consider the mim-width of $(K_r, H)$-free graphs, for some graph $H$. For any $H$ such that the mim-width of $(K_r,H)$-free graphs is bounded and quickly computable, \textsc{List $k$-Colouring} is polynomial-time solvable for all $k < r$. More generally, for problems admitting polynomial-time algorithms when mim-width is bounded and quickly computable, we obtain $\mathsf{XP}$ algorithms parameterized by $\omega(G)$ when restricted to $H$-free graphs. For example, \citet{CKPRS21} showed that for $P_5$-free graphs, there exists an $n^{O(\omega(G))}$-time algorithm for \textsc{Max Partial $H$-Colouring} (a common generalisation of \textsc{Maximum Independent Set} and \textsc{Odd Cycle Transversal} which is polynomial-time solvable when mim-width is bounded and quickly computable). \Cref{t-new} allows to generalise this, although with a worse running time (see \citep{BHMP22,CKPRS21}). 

From a merely structural point of view, the study of the mim-width of $(K_r, H)$-free graphs falls into the systematic study of the mim-width of bigenic classes mentioned above. For each $r \geq 4$, \citet{BHMPP22} completely classified the mim-width of the class of $(K_r, H)$-free graphs, except for one infinite family, and asked the following:

\begin{open}[\citet{BHMPP22}]\label{o-3}For each $r \ge 4$, and for each $t \geq 0$ and $u \geq 1$ such that $t+u \geq 2$, determine the (un)boundedness of mim-width of $(K_r, tP_2 + uP_3)$-free graphs.
\end{open}

Consider now the class of $(rP_1,H)$-free graphs. If the mim-width of such a class is bounded and quickly computable, we obtain, for many problems, $\mathsf{XP}$ algorithms parameterized by $\alpha(G)$ for the class of $H$-free graphs. For $r \geq 5$, \citet{BHMPP22} completely classified the mim-width of the class of $(rP_1, H)$-free graphs, except for one infinite family, and asked the following:

\begin{open}[\citet{BHMPP22}]\label{o-4}For each $r \ge 4$, and for each $s,t \geq 2$, determine the (un)boundedness of mim-width of $(rP_1, \overline{K_{s,t} + P_1})$-free graphs.
\end{open}

\begin{figure}[h!]
\centering
\includegraphics[scale=0.9]{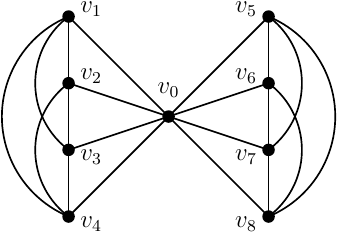}
\caption{The graph $\overline{K_{4,4}+P_1}$.}
\label{K44}
\end{figure}

\subsection{Our Results}\label{ourresults}

In this paper we observe that {\sc List $k$-Colouring} is polynomial-time solvable for every graph class whose sim-width is bounded and quickly computable, thus answering in the positive one half of \Cref{opensim}. We also show that if \textsc{Independent Set} is polynomial-time solvable for a given graph class whose sim-width is bounded and quickly computable, then the same is true for its generalisation \textsc{Independent $\mathcal{H}$-packing}. Finally, we completely resolve \Cref{o-4} and make considerable progress toward solving \Cref{o-3}. 

\subsubsection{Algorithmic implications of boundedness of sim-width}

Let us begin by discussing our results related to \Cref{opensim}. Let $K_t \boxminus K_t$ be the graph obtained from $2K_t$ by adding a perfect matching and let $K_t \boxminus S_t$ be the graph obtained from $K_t\boxminus K_t$ by removing all the edges in one of the complete graphs. Combining \Cref{kwon} with \citep[Proposition~4.2]{KKST17} stated below, we observe that \textsc{List $k$-Colouring} is in $\mathsf{XP}$ when parameterized by the sim-width of a given branch decomposition of the input graph. 

\begin{proposition}[see Proof of Proposition~4.2 in \citep{KKST17}]\label{cliquefree} Let $G$ be a graph with no induced subgraph isomorphic to $K_t \boxminus K_t$ and $K_t \boxminus S_t$ and let $(T,\delta)$ be a branch decomposition of $G$ with $\simw_{G}(T,\delta) = w$. Then $\mimw_{G}(T,\delta) \leq R(R(w + 1, t), R(t, t))$. 
\end{proposition}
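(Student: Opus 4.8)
The plan is to bound $\cutmim_G(A_e,\overline{A_e})$ for every edge $e\in E(T)$ by a two‑round Ramsey argument applied to the edges of a hypothetically large induced matching of the bipartite graph $G[A_e,\overline{A_e}]$. Fix $e\in E(T)$, write $(A,\overline{A})=(A_e,\overline{A_e})$, and suppose toward a contradiction that $\cutmim_G(A,\overline{A})\ge m:=R(R(w+1,t),R(t,t))$, witnessed by an induced matching $\{a_ib_i:i\in[m]\}$ of $G[A,\overline{A}]$ with $a_i\in A$ and $b_i\in\overline{A}$. The key structural observation is that, by definition of an induced matching of $G[A,\overline{A}]$, we have $a_ib_j\notin E(G)$ and $a_jb_i\notin E(G)$ whenever $i\ne j$, while the pairs $a_ia_j$ and $b_ib_j$ (each lying inside a single side of the cut) are entirely unconstrained. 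These are precisely the adjacencies we will pin down via Ramsey's theorem.

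First round: $2$‑colour each pair $\{i,j\}\in\binom{[m]}{2}$ according to whether $a_ia_j\in E(G)$. Since $m=R(R(w+1,t),R(t,t))$, there is $I\subseteq[m]$ with either (i) $|I|=R(w+1,t)$ and $\{a_i:i\in I\}$ independent, or (ii) $|I|=R(t,t)$ and $\{a_i:i\in I\}$ a clique. Second round: $2$‑colour each $\{i,j\}\in\binom{I}{2}$ according to whether $b_ib_j\in E(G)$, and extract a monochromatic $J\subseteq I$ of the appropriate size. Now read off the resulting configuration on $\{a_i,b_i:i\in J\}$, using $a_ib_i\in E(G)$ and the crossing non‑edges above. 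If $\{a_i\}_{i\in J}$ is a clique (so $|J|=t$), then $\{b_i\}_{i\in J}$ being a clique yields an induced $K_t\boxminus K_t$ and $\{b_i\}_{i\in J}$ being independent yields an induced $K_t\boxminus S_t$ — both excluded. If $\{a_i\}_{i\in J}$ is independent and $\{b_i\}_{i\in J}$ is a clique (so $|J|=t$), we again get an induced $K_t\boxminus S_t$. The only remaining case is $\{a_i\}_{i\in J}$ and $\{b_i\}_{i\in J}$ both independent with $|J|=w+1$; then $\{a_ib_i:i\in J\}$ is an induced matching of $G$ all of whose edges cross $(A,\overline{A})$, so $\cutsim_G(A,\overline{A})\ge w+1$, contradicting $\simw_G(T,\delta)=w$. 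Hence $\cutmim_G(A_e,\overline{A_e})<m$ for every $e\in E(T)$, so $\mimw_G(T,\delta)\le R(R(w+1,t),R(t,t))$.

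The argument is essentially bookkeeping once the two‑level Ramsey split is chosen correctly; the only point needing care is to route the ``large $A$‑clique'' branch through the $R(t,t)$ coordinate (so that both of its sub‑branches land in the forbidden family) and the ``large $A$‑independent'' branch through the $R(w+1,t)$ coordinate (so that its all‑independent sub‑branch produces a cut‑induced matching of size $w+1$, while its clique sub‑branch still produces a forbidden $K_t\boxminus S_t$). I expect no further obstacle; in particular, since no tightness is claimed, the off‑by‑one between ``$<m$'' and ``$\le m$'' is harmless.
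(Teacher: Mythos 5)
Your proof is correct and is essentially the argument the paper points to (the proof of Proposition~4.2 in \citep{KKST17}): a double Ramsey extraction on a hypothetical induced matching of size $R(R(w+1,t),R(t,t))$ across the cut, with the clique branch routed through $R(t,t)$ to force an induced $K_t\boxminus K_t$ or $K_t\boxminus S_t$, and the independent branch through $R(w+1,t)$ to force either $K_t\boxminus S_t$ or a cut-induced matching of size $w+1$ contradicting $\simw_G(T,\delta)=w$. The case analysis and the Ramsey-argument ordering (first coordinate for independent sets, second for cliques) both match the paper's conventions, so there is nothing to fix.
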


\begin{theorem}\label{listsim}
For every $k\geq 1$, {\sc List $k$-Colouring} is polynomial-time solvable for every graph class whose sim-width is bounded and quickly computable.
\end{theorem}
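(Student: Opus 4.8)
The plan is to combine the elementary observation that every yes-instance of {\sc List $k$-Colouring} is $K_{k+1}$-free with \Cref{cliquefree} and \Cref{kwon}. Fix $k \geq 1$ and let $\mathcal{G}$ be a graph class of sim-width at most some constant $w$ for which a branch decomposition of sim-width at most $w$ can be computed in polynomial time. Consider the subclass $\mathcal{G}'$ consisting of those $G \in \mathcal{G}$ that are $K_{k+1}$-free.

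First I would argue that $\mathcal{G}'$ has bounded and quickly computable mim-width. The point is that both $K_{k+1} \boxminus K_{k+1}$ and $K_{k+1} \boxminus S_{k+1}$ contain an induced $K_{k+1}$ (in each construction the ``clique side'' induces a $K_{k+1}$), so no graph in $\mathcal{G}'$ has an induced subgraph isomorphic to $K_{k+1} \boxminus K_{k+1}$ or to $K_{k+1} \boxminus S_{k+1}$. Hence, given $G \in \mathcal{G}'$, I would compute in polynomial time a branch decomposition $(T,\delta)$ of $G$ with $\simw_G(T,\delta) \leq w$; by \Cref{cliquefree} applied with $t = k+1$, this very branch decomposition satisfies $\mimw_G(T,\delta) \leq R(R(w+1, k+1), R(k+1, k+1))$, a constant depending only on $w$ and $k$. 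Thus \Cref{kwon} applies to $\mathcal{G}'$ and {\sc List $k$-Colouring} is polynomial-time solvable on $\mathcal{G}'$.

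It then remains to handle an arbitrary instance $(G, L)$ with $G \in \mathcal{G}$ and $n = |V(G)|$. In time $O(n^{k+1})$ one can test all $(k+1)$-subsets of $V(G)$ and decide whether $G$ is $K_{k+1}$-free. If it is not, then $G$ contains a clique on $k+1$ vertices, admits no proper $k$-colouring, and $(G,L)$ is a no-instance. If it is, then $G \in \mathcal{G}'$ and we run the algorithm of the previous paragraph. Overall this is polynomial in $n$.

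I do not expect a genuine obstacle: the two heavy ingredients, \Cref{cliquefree} (from \citep{KKST17}) and the algorithm underlying \Cref{kwon} (from \citep{Kw20}), are quoted, and what remains is the bookkeeping above. The only point that deserves a word of care is that \Cref{kwon}, although phrased for graph classes, is really powered by an algorithm taking a branch decomposition of bounded mim-width as part of its input; this is exactly what we supply, since the ``quickly computable'' sim-width branch decomposition of $\mathcal{G}$ automatically has bounded mim-width once $G$ is known to be $K_{k+1}$-free.
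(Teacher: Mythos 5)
Your proposal is correct and follows essentially the same route as the paper: test for an induced $K_{k+1}$ (rejecting if one is found), then use the fact that a $K_{k+1}$-free graph contains neither $K_{k+1}\boxminus K_{k+1}$ nor $K_{k+1}\boxminus S_{k+1}$ to apply \Cref{cliquefree} with $t=k+1$ to the given sim-width-$w$ branch decomposition, and finish with \Cref{kwon}. The only difference is cosmetic: you spell out why the two forbidden graphs contain an induced $K_{k+1}$, which the paper leaves implicit.
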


\begin{proof} Given an instance consisting of a graph $G$ and a $k$-list assignment $L$, together with a branch decomposition $(T, \delta)$ of $G$ with $\simw_{G}(T,\delta) = w$, we proceed as follows. We check in polynomial time whether $G$ contains a copy of $K_{k+1}$. If it does, then we have a no-instance. Otherwise, $G$ is $K_{k+1}$-free. Then, by \Cref{cliquefree}, $(T, \delta)$ has mim-width at most $R(R(w + 1, k+1), R(k+1, k+1))$, and we simply apply \Cref{kwon}. This concludes the proof.    
\end{proof}

It is worth noticing that \Cref{listsim} does not really give wider applicability when compared to \Cref{kwon}. Indeed, input graphs of {\sc List $k$-Colouring} can always be assumed to be $K_{k+1}$-free and every subclass of $K_{k+1}$-free graphs has bounded sim-width if and only if it has bounded mim-width: This follows from \Cref{cliquefree} and the fact that $\simw(G) \leq \mimw(G)$ for any graph $G$. Nevertheless, \Cref{listsim} has interesting consequences. Besides answering in the positive one half of \Cref{opensim}, it extends the result in \citep{DMS21} that \textsc{List $k$-Colouring} is polynomial-time solvable for every graph class whose tree-independence number is bounded and quickly computable. This is because of the following unpublished observation of Dallard, Krnc, Kwon, Milani\v{c}, Munaro and \v{S}torgel, which is part of a work in progress and whose proof we sketch for convenience.

\begin{lemma}\label{simw-tin} Let $G$ be a graph. Then $\simw(G)\le \tin(G)$.
\end{lemma}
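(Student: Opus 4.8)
The natural approach is to start from an optimal tree decomposition $(T, \{B_t\}_{t \in V(T)})$ of $G$ whose independence number equals $\alpha := \tin(G)$, i.e., every bag $B_t$ induces a subgraph of $G$ with independence number at most $\alpha$, and convert it into a branch decomposition $(T', \delta)$ of $G$ of sim-width at most $\alpha$. This mirrors the standard construction showing $\mimw(G) \le \tw(G) + 1$ and, more relevantly, the route by which tree-independence number is known to dominate sim-width. First I would preprocess $(T, \{B_t\})$ into a nice (or at least reduced) tree decomposition where $T$ has maximum degree $3$ and each vertex of $G$ is associated with the set of tree nodes whose bags contain it, forming a connected subtree; this does not increase the independence number of any bag. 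Then I would build $T'$ by attaching, to each node $t$ of $T$ (or to a subdivided path of such nodes), the leaves corresponding to the vertices $\delta^{-1}$ introduced at $t$, so that $\delta$ is a bijection from $V(G)$ to the leaves of $T'$ and $T'$ is subcubic.

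The key step is the cut analysis. Fix an edge $e \in E(T')$; it induces the bipartition $(A_e, \overline{A_e})$ of $V(G)$. Removing the corresponding edge of $T$ (or the appropriate edge on the subdivided structure) splits $T$ into two subtrees $T_1, T_2$, and there is a node $t^\star$ lying at the split whose bag $B_{t^\star}$ is, by the standard properties of tree decompositions, a separator: every edge of $G$ with one endpoint in $A_e$ and one in $\overline{A_e}$ has both endpoints touching $B_{t^\star}$, and more precisely $A_e \setminus B_{t^\star}$ and $\overline{A_e}\setminus B_{t^\star}$ are nonadjacent in $G$. Now suppose $M = \{x_1y_1, \dots, x_m y_m\}$ is an induced matching in $G$ with each $x_i \in A_e$, each $y_i \in \overline{A_e}$ — this is exactly what $\cutsim$ counts. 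Since each edge $x_iy_i$ crosses the separator, each edge has an endpoint in $B_{t^\star}$; pick such an endpoint $z_i \in B_{t^\star}$ for each $i$. Because $M$ is an \emph{induced} matching in $G$, the vertices $z_1, \dots, z_m$ are pairwise non-adjacent (any edge $z_iz_j$ with $i \ne j$ would be an edge between distinct edges of $M$), so $\{z_1,\dots,z_m\}$ is an independent set of $G$ contained in $B_{t^\star}$, whence $m \le \alpha(G[B_{t^\star}]) \le \alpha = \tin(G)$. Taking the maximum over all edges $e$ of $T'$ gives $\simw_G(T',\delta) \le \tin(G)$, and hence $\simw(G) \le \tin(G)$.

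The main obstacle, and the step I expect to need the most care, is the bookkeeping in the conversion of the tree decomposition into an honest branch decomposition: ensuring $T'$ stays subcubic while hanging the leaves for the vertices introduced at each node, and verifying that every edge of $T'$ really does correspond to a separator-inducing split of $T$ (this requires handling the newly-introduced "pendant" edges that attach single leaves, for which the crossing set is a single vertex and the bound is trivial, as well as the edges internal to $T$). A clean way to sidestep fiddly case analysis is to first pass to a tree decomposition in which every bag is nonempty and then use the well-known fact that any tree decomposition can be transformed, without increasing any bag's induced independence number, into a branch decomposition whose every cut is "dominated" by some bag in the above sense; this is exactly the argument underlying $\mimw(G) \le \tw(G)+1$, and only the final counting step — replacing "a cut of bounded size" by "a transversal that is independent because the matching is induced" — is specific to sim-width. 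Since the statement is attributed as a sketch of an unpublished observation, presenting the cut analysis in full and citing the standard decomposition-conversion for the routine part should suffice.
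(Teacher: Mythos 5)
Your proposal is correct and follows essentially the same route as the paper: convert an optimal tree decomposition into a branch decomposition whose every cut is covered by a bag (the paper cites \citep[Proposition~3.1]{KKST17} for this), and then observe that an induced matching in $G$ crossing the cut yields, via one endpoint per edge chosen in that bag, an independent set in the bag, bounding $\cutsim$ by $\tin(G)$. The only cosmetic difference is that the paper uses the stronger guarantee that all of $N_G(A_e)\cap \overline{A_e}$ lies in a single bag, whereas you only use that each crossing edge has an endpoint in the bag; both suffice for the same counting step.
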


\begin{proof}[Proof sketch] Given a tree decomposition $(F, \{B_t\}_{t \in V(F)})$ of $G$, the proof of Proposition~3.1 in \citep{KKST17} shows how to construct a branch decomposition $(T, \delta)$ of $G$ such that, for each $e \in E(T)$, either $N_G(A_e) \cap \overline{A_e}$ or $N_G(\overline{A_e}) \cap A_e$ is contained in a bag in $\{B_t\}_{t \in V(F)}$. Consider then a tree decomposition $(F, \{B_t\}_{t \in V(F)})$ of $G$ with tree-independence number $\tin(G)$ and the corresponding branch decomposition $(T, \delta)$ of $G$ satisfying the property above. Fix $e \in E(T)$ and suppose without loss of generality that $N_G(A_e) \cap \overline{A_e} \subseteq B_t$, for some $t \in V(F)$. This implies that the independence number of $G[N_G(A_e) \cap \overline{A_e}]$ is at most $\tin(G)$ and so $\cutsim_G(A_e, \overline{A_e}) \leq \tin(G)$. Since this holds for every $e \in E(T)$, we have that $\simw_G(T, \delta) \le \tin(G)$ and so $\simw(G) \le \tin(G)$.
\end{proof}

Together with the fact that complete bipartite graphs have bounded sim-width (in fact, bounded clique-width) but unbounded tree-independence number \citep{DMS21}, \Cref{simw-tin} implies that sim-width is more powerful than tree-independence number. Note also that graph classes of bounded sim-width are not necessarily $(\tw,\omega)$-bounded and so \Cref{listsim} cannot be deduced from the results in \citep{CZ17}. Indeed, it is easy to see that complete bipartite graphs, which have bounded sim-width, are not $(\tw,\omega)$-bounded. However, we do not know whether a $(\tw,\omega)$-bounded graph class has necessarily bounded sim-width.  

In \Cref{algoimpli}, we show that a positive answer to the other half of \Cref{opensim} would have important algorithmic implications for \textsc{Maximum Weight Independent $\mathcal{H}$-Packing}, a problem studied for example in \citep{CH06,DMS21}. Before formulating it, we state some definitions and results. Let $\mathcal{H}$ be a set of connected graphs. Given a graph $G$, let $\mathcal{H}_G$ be the set of all subgraphs of $G$ isomorphic to a member of $\mathcal{H}$. The \textit{$\mathcal{H}$-graph of $G$}, denoted $\mathcal{H}(G)$, is defined in \citep{CH06} as follows: the vertex set is $\mathcal{H}_G$ and two distinct subgraphs of $G$ isomorphic to a
member of $\mathcal{H}$ are adjacent if and only if they either have a vertex in common or there is an edge
in $G$ connecting them. \citet{CH06} showed that, for any set $\mathcal{H}$ of connected graphs, the $\mathcal{H}$-graph of any chordal graph is chordal. \citet{DMS21} generalised this by showing that mapping any graph $G$ to
its $\mathcal{H}$-graph does not increase the tree-independence number. We show that this operation does not increase the sim-width either.

\begin{restatable}{theorem}{operation}\label{operation}
Let $\mathcal{H}$ be a non-empty finite set of connected non-null graphs and let $r$ be the maximum number of vertices of a graph in $\mathcal{H}$. Let $G$ be a graph and let $(T,\delta)$ be a branch decomposition of $G$. If $|V(\mathcal{H}(G))| > 1$, then we can obtain in $O(|V(G)|^{r+1})$ time a branch decomposition $(T',\delta')$ of $\mathcal{H}(G)$ such that $\simw_{\mathcal{H}(G)}(T',\delta') \leq \simw_G(T,\delta)$.
\end{restatable}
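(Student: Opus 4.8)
The plan is to construct the branch decomposition $(T',\delta')$ of $\mathcal{H}(G)$ directly from $(T,\delta)$ by ``pushing'' each vertex of $\mathcal{H}(G)$ — that is, each subgraph $H'$ of $G$ isomorphic to a member of $\mathcal{H}$ — into the decomposition at (essentially) one of the vertices of $G$ that $H'$ uses. More precisely, for each $H' \in \mathcal{H}_G$ pick a representative vertex $\rho(H') \in V(H') \subseteq V(G)$; a natural choice is the $\delta$-image of $H'$ that is closest to the root after picking a root, or simply any fixed vertex of $H'$. Then build $T'$ by taking $T$ and, at the leaf $\delta^{-1}(v)$ for each $v \in V(G)$, attaching a small caterpillar (a path with pendant leaves) whose leaves are exactly the subgraphs $H'$ with $\rho(H') = v$, together with $v$ itself if one wants to keep a leaf for $v$ (though $v \notin V(\mathcal{H}(G))$, so actually we attach only the $H'$'s; if $v$ is the representative of no $H'$ we simply delete that leaf and suppress the resulting degree-$2$ vertex). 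This keeps $T'$ subcubic. The map $\delta'$ sends $H'$ to its assigned leaf. Since a graph on $r$ vertices can be found by brute force in $O(|V(G)|^{r})$ time and there are $O(|V(G)|^{r})$ members of $\mathcal{H}_G$, the whole construction, including choosing representatives and wiring up the caterpillars, runs in $O(|V(G)|^{r+1})$ time as claimed.

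The heart of the argument is the width bound. Fix an edge $e' \in E(T')$ and let $(\mathcal{A}, \overline{\mathcal{A}})$ be the induced bipartition of $V(\mathcal{H}(G))$. Every edge of $T'$ either lies inside one of the attached caterpillars or corresponds to an edge $e$ of the original tree $T$. In the first case, the bipartition separates one set of subgraphs all having the same representative $v$ from everything else; since all the $H'$ on the ``small'' side share the vertex $v$, they are pairwise adjacent in $\mathcal{H}(G)$, so any induced matching in $\mathcal{H}(G)$ between the two sides uses at most one vertex on the small side, giving $\cutsim \le 1 \le \simw_G(T,\delta)$ (the last inequality holds whenever $\mathcal{H}(G)$ has an edge, which one checks separately, and the case $|V(\mathcal{H}(G))|>1$ with no edges makes $\simw_{\mathcal{H}(G)}(T',\delta') = 0$ and is trivial). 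In the second case, $e' $ corresponds to some $e \in E(T)$, and the bipartition of $V(\mathcal{H}(G))$ is exactly $(\{H' : \rho(H') \in A_e\}, \{H' : \rho(H') \in \overline{A_e}\})$. Suppose $\{H'_1 R_1, \dots, H'_k R_k\}$ is an induced matching in $\mathcal{H}(G)$ with each $H'_i$ on the $A_e$-side and each $R_i$ on the $\overline{A_e}$-side. Since $H'_i$ is adjacent to $R_i$ in $\mathcal{H}(G)$, either they share a vertex of $G$ or there is a $G$-edge between them; in either case there are vertices $x_i \in V(H'_i)$ and $y_i \in V(R_i)$ with $x_iy_i \in E(G)$ or $x_i = y_i$. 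We want to extract from this a large induced matching (or independent set of suitable size) between $A_e$ and $\overline{A_e}$ in $G$ itself, contradicting $\simw_G(T,\delta) = w$ once $k$ is too large. The subtlety is that $x_i$ need not lie in $A_e$: the representative $\rho(H'_i)$ lies in $A_e$, but another vertex of $H'_i$ might be on the other side. This is where the choice of representative matters, and it is the step I expect to be the main obstacle.

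To handle it, I would exploit connectedness of the members of $\mathcal{H}$ together with a Ramsey-type pruning, in the spirit of \Cref{cliquefree}. Because each $H'_i$ is connected and has at most $r$ vertices, if $H'_i$ meets both sides of the cut $(A_e, \overline{A_e})$ then it contains a $G$-edge crossing the cut; collecting such crossing edges over all $i$, a large subfamily forms an induced matching crossing the cut in $G$ (after deleting a bounded number of members per $i$ and applying Ramsey to discard the ``conflicting'' pairs among distinct indices), again bounding $k$ in terms of $w$ and $r$. For the indices $i$ where $H'_i$ lies entirely on the $A_e$ side, the witness vertex $x_i$ is genuinely in $A_e$; symmetrically for $R_i$ entirely on the $\overline{A_e}$ side we get $y_i \in \overline{A_e}$; and the mixed cases are covered by the crossing-edge argument. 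Pairing up $x_i$ with $y_i$ (or using the shared vertex) and again thinning via Ramsey to kill edges between $\{x_i\}$ and $\{y_j\}$ for $i \ne j$ and within each side, we obtain an induced matching (or a pair of independent sets with no edges between non-matched elements, which is exactly the ``semi-induced'' structure counted by $\cutsim$) between $A_e$ and $\overline{A_e}$ in $G$ of size $k / f(r)$ for an explicit function $f$. Hence $k \le f(r) \cdot w$. Careful bookkeeping will in fact give the sharper bound $k \le w$ asserted in the statement; achieving the clean inequality $\simw_{\mathcal{H}(G)}(T',\delta') \le \simw_G(T,\delta)$ rather than $\le g(r)\cdot\simw_G(T,\delta)$ is the delicate part, and I expect it to follow by choosing representatives more cleverly — e.g. rooting $T$ and letting $\rho(H')$ be the vertex of $H'$ whose leaf is an ancestor of all the other leaves of $H'$ in $T'$, so that for the tree-edges $e$ the crossing behaviour is completely controlled and no Ramsey loss is incurred at all, with the bounded-loss argument above needed only as a sanity check that the construction is well-defined.
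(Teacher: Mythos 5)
Your construction of $(T',\delta')$ is essentially the one in the paper (representative vertex per subgraph, caterpillars attached at the corresponding leaves, clique argument for the caterpillar edges), and you correctly isolate the crux: for an edge $e$ of the original tree, the witness vertices of adjacency in $G$ need not lie on the ``correct'' sides of the induced cut of $V(G)$. But your proposed resolution is where the argument breaks. A Ramsey-type thinning only yields $\simw_{\mathcal{H}(G)}(T',\delta')\le f(r)\cdot\simw_G(T,\delta)$, which is not the statement, and the hope that ``careful bookkeeping'' or a cleverer choice of representatives recovers the clean inequality is unsubstantiated --- the fix has nothing to do with how representatives are chosen. The observation you are missing is that the induced-matching hypothesis in $\mathcal{H}(G)$ already hands you, for free, exactly the disjointness and anticompleteness you are trying to manufacture with Ramsey: if $\{x_1'y_1',\dots,x_k'y_k'\}$ is the semi-induced matching in $\mathcal{H}(G)$ and $S(h)$ denotes the vertex set in $G$ of the subgraph $h$, then for $i\neq j$ the sets $S(x_i')\cup S(y_i')$ and $S(x_j')\cup S(y_j')$ are disjoint and anticomplete in $G$ --- any shared vertex or connecting edge would make $x_i'$ or $y_i'$ adjacent to $x_j'$ or $y_j'$ in $\mathcal{H}(G)$. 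So whatever crossing edge you extract from the $i$-th pair is automatically non-adjacent to whatever you extract from the $j$-th pair; no pruning and no loss factor occur.

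The second ingredient you do not quite assemble is how the crossing edge is produced with the correct sidedness. Rather than case-splitting on whether $H_i'$ meets both sides of the cut, one uses connectivity globally: $G[S(x_i')\cup S(y_i')]$ is connected (each piece is connected since the members of $\mathcal{H}$ are, and the two pieces touch because $x_i'y_i'$ is an edge of $\mathcal{H}(G)$), the representative $f(x_i')$ lies in $B_e$ and $f(y_i')$ in $\overline{B_e}$ by the definition of the induced partition of $V(G)$, so any path between them inside this set crosses the cut, yielding $x_i\in B_e$, $y_i\in\overline{B_e}$ with $x_iy_i\in E(G)$. Combined with the anticompleteness above, the $k$ edges $x_iy_i$ form a semi-induced matching of size exactly $k$ across $(B_e,\overline{B_e})$, giving $\cutsim_G(B_e,\overline{B_e})\ge k$ directly. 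Without this, your write-up does not prove the stated bound, only a weakened multiplicative version of it.
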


Two subgraphs $H_1$ and $H_2$ of a graph $G$ are \textit{independent} if they are vertex-disjoint and no edge of $G$ joins a vertex of $H_1$ with a vertex of $H_2$. An \textit{independent $\mathcal{H}$-packing} in $G$ is a set of pairwise independent subgraphs from $\mathcal{H}_G$. Given a graph $G$, a weight function $w\colon \mathcal{H}_G \rightarrow \mathbb{Q}_{+}$ on the subgraphs in $\mathcal{H}_G$, and an independent $\mathcal{H}$-packing $P$ in $G$, the weight of $P$ is defined as
$\sum_{H\in P}w(H)$. Given a graph $G$ and a weight function $w\colon \mathcal{H}_G \rightarrow \mathbb{Q}_{+}$, the \textsc{Maximum Weight Independent $\mathcal{H}$-Packing} problem asks to find an independent $\mathcal{H}$-packing in $G$ of maximum weight. If all subgraphs in $\mathcal{H}_G$ have weight $1$, we obtain the special case \textsc{Independent $\mathcal{H}$-Packing}. \textsc{Maximum Weight Independent $\mathcal{H}$-Packing} is a common generalisation of several problems studied in the literature, including \textsc{Maximum Weight Independent Set}, \textsc{Maximum Weight Induced Matching}, \textsc{Dissociation Set} and \textsc{$k$-Separator} (we refer to \citep{DMS21} for a comprehensive literature review). 

\citet{CH06} showed that \textsc{Independent $\mathcal{H}$-packing} is polynomial-time solvable, among others, for the following graph classes: weakly chordal graphs and hence chordal graphs, AT-free graphs and hence co-comparability graphs, circular-arc graphs, circle graphs. \citet{DMS21} showed that \textsc{Maximum Weight Independent $\mathcal{H}$-Packing} is polynomial-time solvable for every graph class whose tree-independence number is bounded and quickly computable. With the aid of \Cref{operation}, we show the following.  

\begin{restatable}{corollary}{reduction}\label{reduction} Let $\mathcal{H}$ be a non-empty finite set of connected non-null graphs such that each graph in $\mathcal{H}$ has at most $r$ vertices. Let $\mathcal{G}$ be a graph class whose sim-width is bounded and quickly computable. If \textsc{Maximum Weight Independent Set} is polynomial-time solvable for $\mathcal{G}$, then \textsc{Maximum Weight Independent $\mathcal{H}$-Packing} is polynomial-time solvable for $\mathcal{G}$. Similarly, if \textsc{Independent Set} is polynomial-time solvable for $\mathcal{G}$, then \textsc{Independent $\mathcal{H}$-Packing} is polynomial-time solvable for $\mathcal{G}$. 
\end{restatable}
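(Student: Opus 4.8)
The plan is to turn \textsc{Maximum Weight Independent $\mathcal{H}$-Packing} on a graph $G\in\mathcal{G}$ into \textsc{Maximum Weight Independent Set} on the $\mathcal{H}$-graph $\mathcal{H}(G)$, and then to use \Cref{operation} to keep that instance inside the regime where the parameter is small and a branch decomposition is available. The key (essentially definitional) observation is that a set $P\subseteq\mathcal{H}_G$ is an independent $\mathcal{H}$-packing in $G$ if and only if $P$ is an independent set of $\mathcal{H}(G)$: by definition of $\mathcal{H}(G)$, two members of $\mathcal{H}_G$ fail to be adjacent in $\mathcal{H}(G)$ exactly when they are vertex-disjoint and no edge of $G$ joins them, which is precisely the condition that they be independent in $G$. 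Moreover, under this correspondence the weight $\sum_{H\in P}w(H)$ of the packing equals the total weight of $P$ seen as a set of vertices of $\mathcal{H}(G)$ with vertex weights $w\colon V(\mathcal{H}(G))=\mathcal{H}_G\to\mathbb{Q}_+$. Hence an optimal solution of one problem yields an optimal solution of the other, and likewise in the unweighted case (all weights equal to $1$) a maximum independent $\mathcal{H}$-packing of $G$ corresponds to a maximum independent set of $\mathcal{H}(G)$.

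Concretely, on input $(G,w)$ with $G\in\mathcal{G}$ I would proceed as follows. First, enumerate $\mathcal{H}_G$; since $\mathcal{H}$ is a fixed finite family of graphs on at most $r$ vertices, there are $O(|V(G)|^r)$ such subgraphs and they can be listed in polynomial time. If $\mathcal{H}_G=\emptyset$, output the empty packing; if $|V(\mathcal{H}(G))|=1$, output the unique member of $\mathcal{H}_G$ (optimal, since weights are positive); otherwise $|V(\mathcal{H}(G))|>1$. Using that the sim-width of $\mathcal{G}$ is quickly computable, compute in polynomial time a branch decomposition $(T,\delta)$ of $G$ with $\simw_G(T,\delta)\le c$, where $c$ bounds the sim-width of $\mathcal{G}$; build $\mathcal{H}(G)$ explicitly (polynomial time, since it has $O(|V(G)|^r)$ vertices and adjacency of two of them is a constant-time check); and apply \Cref{operation} to obtain, in $O(|V(G)|^{r+1})$ time, a branch decomposition $(T',\delta')$ of $\mathcal{H}(G)$ with $\simw_{\mathcal{H}(G)}(T',\delta')\le\simw_G(T,\delta)\le c$. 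Thus the class $\mathcal{G}':=\{\mathcal{H}(G):G\in\mathcal{G},\,|V(\mathcal{H}(G))|\ge 2\}$ has sim-width bounded by $c$ and, via the construction just described, quickly computable. Finally, run the assumed polynomial-time algorithm for \textsc{Maximum Weight Independent Set} on $(\mathcal{H}(G),w)$ together with the branch decomposition $(T',\delta')$, and translate the returned independent set back to an independent $\mathcal{H}$-packing of $G$ of the same weight.

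The reduction itself is elementary; the substance is packaged in \Cref{operation}, and the one point deserving care is that $\mathcal{H}(G)$ need not belong to $\mathcal{G}$. Consequently the hypothesis on \textsc{Maximum Weight Independent Set} must be invoked for the class $\mathcal{G}'$ of $\mathcal{H}$-graphs rather than for $\mathcal{G}$ itself, which is legitimate precisely because $\mathcal{G}'$ again has bounded and quickly computable sim-width, so it is covered by the same hypothesis read (as in the abstract) as ``\textsc{Maximum Weight Independent Set} is polynomial-time solvable for every graph class whose sim-width is bounded and quickly computable''. Beyond this, all that remains is the routine check that enumerating $\mathcal{H}_G$, constructing $\mathcal{H}(G)$, computing $(T,\delta)$, and invoking \Cref{operation} each run in time polynomial in $|V(G)|$ for fixed $\mathcal{H}$, which is immediate. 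Running the same argument with unit weights gives the second statement.
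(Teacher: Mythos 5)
Your proof is correct and follows essentially the same route as the paper's: reduce to \textsc{Maximum Weight Independent Set} on $\mathcal{H}(G)$ via the correspondence between independent $\mathcal{H}$-packings and independent sets of $\mathcal{H}(G)$ (the paper's \Cref{red}), construct $\mathcal{H}(G)$ in polynomial time (\Cref{repre}), handle the case $|V(\mathcal{H}(G))|\leq 1$ separately, and transfer the branch decomposition using \Cref{operation}. Your explicit remark that the independent-set oracle must be invoked on the class $\{\mathcal{H}(G): G\in\mathcal{G}\}$ rather than on $\mathcal{G}$ itself is a subtlety the paper's own proof glosses over, and your resolution --- reading the hypothesis as in the abstract, namely for every graph class of bounded and quickly computable sim-width --- matches the authors' evident intent.
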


\subsubsection{Mim-width of $(H_1,H_2)$-free graphs}

We now address the classification of (un)boundedness of mim-width of $(H_1,H_2)$-free graphs, where $H_1$ is either $rP_1$ or $K_r$. 

In \Cref{resolveedgeless}, we completely resolve \Cref{o-4} by showing the following.

\begin{restatable}{theorem}{edgeless}\label{edgeless}
Let $r \geq 3$ and $s, t \geq 2$ be integers. Then the mim-width of the class of $(rP_1, \overline{K_{s,t} + P_1})$-free graphs is bounded if and only if: 
\begin{itemize}
  \item $r = 3$ and one of $s$ and $t$ is at most $3$;
  \item $r = 4$ and one of $s$ and $t$ is at most $2$.
\end{itemize}
In all these cases, the mim-width is also quickly computable.
\end{restatable}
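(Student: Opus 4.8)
The plan is to split the statement into its two directions — boundedness for the listed pairs $(r,s,t)$, and unboundedness for all others — and to handle the unboundedness side first, since a single construction should dispense with most cases at once. For the unboundedness direction I would recall that $(rP_1, H)$-free graphs with $r$ large enough contain no large independent set, and exhibit an infinite family $\{G_n\}$ of $(rP_1, \overline{K_{s,t}+P_1})$-free graphs with $\mimw(G_n) \to \infty$ whenever $(r,s,t)$ is \emph{not} one of the two exceptional shapes. The natural candidate is a suitably blown-up or subdivided grid- or wall-like graph (the families already used in \cite{BHMPP22} to prove mim-width unboundedness for related bigenic classes), and the work is to verify the two induced-subgraph constraints: that $\alpha(G_n) \le r-1$ (so $G_n$ is $rP_1$-free), and that $G_n$ contains no induced $\overline{K_{s,t}+P_1}$. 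The second condition is a statement about the complement: $G_n$ avoids $\overline{K_{s,t}+P_1}$ iff $\overline{G_n}$ avoids $K_{s,t}+P_1$ as an induced subgraph, i.e.\ $\overline{G_n}$ has no induced $K_{s,t}$ together with an isolated-from-it vertex; checking this should reduce to a bound on the independent sets and the co-bipartite-type structure of $G_n$. I would organise this as: (i) $r=3$ with both $s,t \ge 4$; (ii) $r=4$ with both $s,t\ge 3$; (iii) $r \ge 5$ (any $s,t\ge 2$) — though possibly (iii) follows already from \cite{BHMPP22}, and (i)--(ii) are the genuinely new cases, each likely needing its own tailored graph family.

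For the boundedness direction I would treat the two exceptional shapes separately but with the same strategy: show that the relevant class is contained in a class already known to have bounded and quickly computable mim-width, or build a bounded-mim-width branch decomposition directly from the structure forced by excluding a small independent set together with $\overline{K_{s,t}+P_1}$. Concretely, for $r=3$ (so $\alpha(G)\le 2$, meaning $\overline{G}$ is triangle-free) together with excluding $\overline{K_{s,t}+P_1}$ for some $s$ or $t$ at most $3$, I would analyse the complement: $\overline{G}$ is a triangle-free graph with no induced $K_{s,t}+P_1$ for small $s$, and one should be able to show $\overline{G}$ — hence $G$ — falls into a structured family (e.g.\ a bounded blow-up of a small graph, or a class of bounded clique-width, which always has bounded mim-width). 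For $r=4$ with one of $s,t$ equal to $2$, excluding $\overline{K_{2,t}+P_1}$ is a strong constraint and $\alpha(G) \le 3$; here too I expect the complement to have a simple structure (near-bipartite with controlled components) yielding bounded mim-width. In each boundedness case I would invoke, where applicable, the fact that bounded clique-width implies bounded and quickly computable mim-width, or cite the appropriate structural/algorithmic lemma from \cite{BHMPP22}; the "quickly computable" clause should come for free once a concrete branch decomposition or a clique-width expression is produced in polynomial time.

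The main obstacle I anticipate is the unboundedness construction for the two borderline cases $(r,s,t)=(3,s,t)$ with $s,t\ge 4$ and $(4,s,t)$ with $s,t\ge 3$: the graph family must be rich enough to have unbounded mim-width yet simultaneously have independence number bounded by $2$ (resp.\ $3$) and avoid the specific induced subgraph $\overline{K_{s,t}+P_1}$, and these constraints pull in opposite directions — bounded independence number forces the complement to be sparse, which tends to make mim-width small. Resolving this tension — identifying the precise threshold where a dense-complement construction can still be forced — is exactly what makes the $r=3$ case cut off at $s,t\le 3$ and the $r=4$ case at $s,t\le 2$, and pinning down that boundary rigorously (in both directions, so that the "if and only if" is tight) is where the real work lies. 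A secondary difficulty is making the boundedness arguments fully explicit enough that quick computability is evident, rather than merely asserting membership in a known well-behaved class.
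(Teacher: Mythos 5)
There is a genuine gap: what you have written is a correct identification of the overall architecture (a dichotomy assembled from separate boundedness and unboundedness results, with wall-based constructions for the hard direction), but essentially none of the mathematical content that makes either direction work is present, and one of your concrete fallbacks is false. On the unboundedness side, the paper indeed starts from elementary walls and applies \Cref{bipcompl}, but the entire difficulty lies in choosing, for each of the three threshold cases $(r,s,t)\in\{(3,4,4),(4,3,3),(5,2,2)\}$, a specific proper $2$-, $3$- or $4$-colouring of the wall, completing each colour class to a clique, and then \emph{proving} that the resulting graph is $\overline{K_{s,t}+P_1}$-free; in the paper this rests on showing that every copy of $K_5$ (resp.\ $K_4$, $K_3$) in the modified wall is monochromatic, using the facts that in a wall two vertices have at most one common neighbour and no vertex has many monochromatic neighbours (\Cref{344t1,433t1,522t1}). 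Your remark that ``checking this should reduce to a bound on the independent sets and the co-bipartite-type structure'' does not engage with this, and the tension you correctly identify (small independence number versus large mim-width) is resolved precisely by these colouring choices, which you do not supply.

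On the boundedness side the gap is more serious, because the concrete route you propose --- containment in a class of bounded clique-width, or in a class already known to have bounded mim-width --- cannot work. Complements of elementary walls are $3P_1$-free (walls are triangle-free) and $\overline{K_{3,t}+P_1}$-free for every $t\geq 4$ (a wall has maximum degree $3$, so it contains no $K_{3,t}$), yet they have unbounded clique-width; so the class of $(3P_1,\overline{K_{3,t}+P_1})$-free graphs has unbounded clique-width and is not contained in any previously known bounded-mim-width class from \citep{BHMPP22}. A direct construction is unavoidable. The paper's proofs of \Cref{33tt1,42tt1} fix two (resp.\ three) pairwise non-adjacent vertices, partition the remaining vertices into the private neighbourhoods $S_\alpha$ of subsets of these vertices, observe that the singleton classes are cliques, build a branch decomposition that isolates the bounded-degree bipartite graph between the singleton classes, and bound every cross term $\cutmim_G(A_e\cap S_\alpha,\overline{A_e}\cap S_\beta)$ via \Cref{boundmim} and Ramsey-type arguments exploiting the forbidden $\overline{K_{s,t}+P_1}$. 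None of this structure appears in your proposal, so the boundedness direction --- and with it the ``quickly computable'' clause, which depends on the explicit $O(n^2)$ and $O(n^3)$ constructions --- remains entirely open as written.
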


In \Cref{unboundedcomplete}, we finally address the case $H_1 = K_r$, related to \Cref{o-3}, by showing the following two results. 

\begin{restatable}{theorem}{completefive}\label{completefive} Let $r \geq 5$ be an integer and let $H = sP_1 + tP_2 + uP_3$, for $s,t,u \geq 0$. Then exactly one of the following holds:
\begin{itemize}
    \item $H \ssi sP_1+tP_2$, or $H \ssi sP_1+P_3$, and the mim-width of the class of $(K_r, H)$-free graphs is bounded and quickly computable;    
    \item $H \si P_3+P_2+P_1$, and the mim-width of the class of $(K_r, H)$-free graphs is unbounded;
    \item $H = 2P_3$, or $H = P_3+P_2$.
\end{itemize}
\end{restatable}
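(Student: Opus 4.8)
The plan is to check that the three bullets are mutually exclusive and jointly exhaustive, and then to establish boundedness in the first bullet and unboundedness in the second; the sporadic third bullet needs nothing beyond the bookkeeping.

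For exhaustiveness and exclusivity I would run a short case analysis on $(s,t,u)$. If $u=0$ then $H=sP_1+tP_2$ and we are in the first bullet; if $u=1$ and $t=0$ then $H=sP_1+P_3\ssi sP_1+P_3$, again the first bullet; and $H\in\{2P_3,P_3+P_2\}$ is the third bullet. In every remaining case $u\ge1$, and I would show $H\si P_3+P_2+P_1$ by taking a $P_3$ inside one $P_3$-component of $H$ and an anticomplete $P_2+P_1$ inside the remaining components: after deleting one $P_3$-component, what is left contains an induced $P_2+P_1$ unless it is edgeless (forcing $u=1,t=0$), a single $P_2$ (forcing $H=P_3+P_2$), or a single $P_3$ (forcing $H=2P_3$) — exactly the cases already accounted for. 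Exclusivity is then clear: a first-bullet graph is $(P_3+P_2+P_1)$-free since it has either no induced $P_3$ (when $u=0$) or $t=0$ (so no $P_2$ anticomplete to its unique $P_3$), and the two sporadic graphs are neither first-bullet graphs nor contain $P_3+P_2+P_1$ (one has only five vertices, the other is $2P_3$).

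For the first bullet I would reduce to \Cref{t-new} via two induced-subgraph observations: $sP_1+tP_2\ssi K_{1,s+t}^1$ (take $t$ of the pendant edges $m_i\ell_i$ of the subdivided star as an induced $tP_2$ and $s$ further leaves as isolated vertices) and $sP_1+P_3\ssi K_{1,s+1}^1$ (take one full length-two leg $\ell_1m_1c$ as the $P_3$ and $s$ further leaves), together with the trivial fact that any linear forest on $m$ vertices embeds as an induced subgraph of $P_{2m}$. Hence in each first-bullet case the class of $(K_r,H)$-free graphs is contained, for suitable constants $m$ and $N$, in the class of $(K_r,K_{1,m}^1,P_N)$-free graphs, whose mim-width is bounded and quickly computable by \Cref{t-new}; quick computability passes to the subclass because the algorithm that outputs a good branch decomposition is correct on the larger class.

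For the second bullet, since every subclass has mim-width bounded by that of the ambient class, and since $(K_5,P_3+P_2+P_1)$-free $\subseteq(K_r,H)$-free whenever $r\ge5$ and $H\si P_3+P_2+P_1$, it is enough to produce a family $(G_n)_{n\in\N}$ of $(K_5,P_3+P_2+P_1)$-free graphs with $\mimw(G_n)\to\infty$. This construction is the main obstacle: the graphs must be tangled enough that in every branch decomposition some cut carries a large induced matching, yet structured so that no induced $P_3$ leaves an anticomplete $P_2+P_1$ behind, all while staying $K_5$-free. I would look for such a family among (or adapt from) the unboundedness gadgets used in \citep{BHMPP22} for neighbouring bigenic classes, verify $K_5$-freeness and $(P_3+P_2+P_1)$-freeness directly from the construction, and lower-bound the mim-width in the standard way: pass to a balanced edge of an arbitrary branch decomposition and use a Ramsey- or counting-type argument to extract a large induced matching across the corresponding cut.
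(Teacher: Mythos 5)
Your reduction of the theorem to a single unboundedness statement is sound, and your bookkeeping matches the paper's: the case analysis on $(s,t,u)$ showing that every $H$ outside the first and third bullets satisfies $H \si P_3+P_2+P_1$ is correct, and your route to boundedness (embedding $sP_1+tP_2$ and $sP_1+P_3$ as induced subgraphs of $K_{1,m}^1$ and of $P_N$ and invoking \Cref{t-new}) is a valid alternative to the paper's direct citation of the relevant parts of \citep[Theorem~30]{BHMPP22}. The genuine gap is exactly where you flag ``the main obstacle'': you never construct the family of $(K_5,P_3+P_2+P_1)$-free graphs of unbounded mim-width, and that construction is essentially the entire mathematical content of the theorem. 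Announcing that you would ``look for such a family among the gadgets used in \citep{BHMPP22}'' and ``verify the freeness conditions directly'' is a research plan, not a proof; the paper itself presents this family as a new construction precisely because the existing gadgets do not avoid so small a disconnected pattern as $P_3+P_2+P_1$, which is the reason the case was left open.

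For comparison, the paper's \Cref{K5unbound} starts from a wall-like graph $W_n$, partitions its vertices into six independent sets determined by the parity of $i+j$ and the residue of $i$ modulo $3$, and turns each of the two colour classes into a complete tripartite graph. The lower bound on mim-width is then not obtained by your proposed route (balanced cut plus a Ramsey extraction of an induced matching), but by \Cref{bipcompl}: adding edges inside the parts of a $k$-partition degrades mim-width by at most a factor of $k$, so \Cref{walls} together with \Cref{delete} gives $\mimw(W'_n)\ge \mimw(W_n)/2\to\infty$ with no further work. The real labour is the verification that the completed graph is $K_5$-free and $(P_3+P_2+P_1)$-free, which depends delicately on the mod-$3$ structure (each vertex has at most one neighbour in each partition class of the opposite colour, so any induced $P_3$ forces the residual $P_2+P_1$ into a single colour class, where it cannot occur). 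None of this can be recovered from your outline, so the unboundedness bullet, and hence the theorem, remains unproved as written.
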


\begin{restatable}{theorem}{completefour}\label{completefour} Let $r = 4$ and let $H = sP_1 + tP_2 + uP_3$, for $s,t,u \geq 0$. Then exactly one of the following holds:
\begin{itemize}
  \item $H \ssi sP_1+tP_2$, or $H \ssi sP_1+P_3$, and the mim-width of the class of $(K_r, H)$-free graphs is bounded and quickly computable;    
    \item $H \si P_3+2P_2+P_1$, or $2P_3 + P_2$, and the mim-width of the class of $(K_r, H)$-free graphs is unbounded;   
    \item $H = P_3 + 2P_2$, or $H = P_3+P_2+sP_1$, or $H = 2P_3 + sP_1$.
  \end{itemize}
\end{restatable}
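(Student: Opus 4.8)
The plan is to mirror the trichotomy in Theorem \ref{completefive} but handle the extra boundary cases created by the smaller clique constraint $r=4$. The statement partitions all graphs $H = sP_1 + tP_2 + uP_3$ into three groups according to the induced-subgraph order, so the proof naturally splits into a ``boundedness'' direction and an ``unboundedness'' direction, with the third bullet collecting the handful of cases that remain genuinely open. First I would settle the boundedness direction: if $H \ssi sP_1 + tP_2$ for some $s,t$, then $(K_4,H)$-free graphs form a subclass of $(K_4, sP_1+tP_2)$-free graphs, and one invokes whatever result in \Cref{unboundedcomplete} establishes that $(K_r, sP_1+tP_2)$-free graphs have bounded, quickly computable mim-width (this is the analogue, for $H_1 = K_r$, of the $(K_r, K_{1,s}^1, P_t)$-type results, and $tP_2 \ssi K_{1,s}^1$-free structure should make the linear-forest machinery applicable); similarly if $H \ssi sP_1 + P_3$ one reduces to the class $(K_4, sP_1 + P_3)$-free, again covered by a boundedness lemma in this section. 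The key observation making the first bullet clean is that boundedness and quick computability are inherited by subclasses closed under the constraints, so only the ``maximal'' easy cases $sP_1 + tP_2$ and $sP_1 + P_3$ need a direct argument.

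Next I would handle the unboundedness direction: whenever $H \si P_3 + 2P_2 + P_1$ or $H \si 2P_3 + P_2$, the class of $(K_4,H)$-free graphs contains all $(K_4, P_3+2P_2+P_1)$-free (resp. $(K_4, 2P_3+P_2)$-free) graphs, so it suffices to exhibit, for each of these two ``minimal hard'' forbidden pairs, a family of $(K_4,\cdot)$-free graphs of unbounded mim-width. The standard route is to take a known family of unbounded mim-width — e.g. the walls, the graphs $W_n$ used elsewhere in \citep{BHMPP22}, or $1$-subdivisions of complete graphs/complete bipartite graphs — and verify it is triangle-free or $K_4$-free and avoids $P_3 + 2P_2 + P_1$ (resp. $2P_3 + P_2$) as an induced subgraph; avoiding a linear forest with few components is a connectivity/diameter condition, so a family of connected graphs of small diameter with no large induced matching will do. I would present one explicit construction for each minimal case, prove $K_4$-freeness, prove it omits the relevant linear forest by a short case analysis on how an induced copy of $P_3+2P_2+P_1$ (resp.\ $2P_3+P_2$) would have to sit in the construction, and cite the unbounded-mim-width computation. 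The remaining cases $H = P_3 + 2P_2$, $H = P_3 + P_2 + sP_1$, $H = 2P_3 + sP_1$ are simply recorded as open in the third bullet, with a remark that they lie strictly between the two minimal hard cases and the maximal easy cases in the $\si$-order, so neither direction of the argument applies.

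The main obstacle I anticipate is the unboundedness direction for $P_3 + 2P_2 + P_1$: because this forbidden graph has five components' worth of ``spread'' ($P_3$ plus two $P_2$'s plus an isolated vertex), a $(K_4, P_3+2P_2+P_1)$-free family of unbounded mim-width must be simultaneously $K_4$-free, have bounded induced matching number *plus one isolated-vertex slack* relative to any induced $P_3$, yet still be rich enough to encode unbounded mim-width — and these pull in opposite directions, since the $1$-subdivision constructions that give unbounded mim-width typically contain large induced matchings. The fix is likely to use a different host construction — perhaps a blow-up or a graph built from a projective-plane incidence structure as in \citep{BHMPP22}, where triangle-freeness is automatic and the complement-type structure kills long induced linear forests — and the delicate part will be the case analysis showing no induced $P_3 + 2P_2 + P_1$ survives. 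Establishing $K_4$-freeness is usually routine for such constructions; verifying the precise forbidden-linear-forest condition, and making sure the construction is the ``tightest'' one so that the boundary in the statement is sharp, is where the real work lies. A secondary obstacle is confirming that the boundedness results quoted from \Cref{unboundedcomplete} indeed yield \emph{quick computability} and not merely boundedness for the classes $(K_4, sP_1+tP_2)$-free and $(K_4, sP_1+P_3)$-free; if the cited results only give boundedness, an additional argument producing a branch decomposition of small mim-width in polynomial time (e.g.\ via a decomposition into bounded-clique-width or linear-structure pieces) would be needed.
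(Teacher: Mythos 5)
Your proposal follows essentially the same route as the paper: the boundedness cases are handled exactly as you suggest, by citing the prior classification of $(K_4, sP_1+tP_2)$-free and $(K_4, sP_1+P_3)$-free graphs (which already includes quick computability, so your worry there does not materialise), and the unboundedness cases reduce to exhibiting a $K_4$-free family of unbounded mim-width avoiding the minimal linear forests, which the paper obtains in \Cref{K4unbounded} from a subdivided wall with a five-class vertex partition, completed between classes except for a few carefully omitted edges precisely to avoid creating $K_4$'s, followed by the case analysis you anticipate. The only substantive differences are that the paper uses a single construction that is simultaneously $(P_3+2P_2+P_1)$-free and $(2P_3+P_2)$-free rather than one per minimal case, and that your heuristic that ``small diameter plus no large induced matching'' suffices to exclude these linear forests is not literally correct (induced copies do not care about the host's diameter) --- the actual verification is the explicit structural case analysis on the partition classes, as you correctly identify to be where the real work lies.
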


Our results are related to the class of $uP_3$-free graphs. Recently, \citet{HLS21} showed that, for every $u \geq 1$, \textsc{List $5$-Colouring} is polynomial-time solvable for $uP_3$-free graphs. Since an instance of \textsc{List $5$-Colouring} can always be assumed to be $K_6$-free, in view of \Cref{listsim} an alternative approach to obtaining the aforementioned result might pass through studying the sim-width of $(K_6, uP_3)$-free graphs. Unfortunately, \Cref{completefive} readily shows that, with the possible exception of the case $u=2$, this is not possible: For each $u \geq 3$, the mim-width of $(K_6, uP_3)$-free graphs is unbounded and, by \citep[Proposition~4.2]{KKST17}, the same must be true for sim-width.       

\section{Preliminaries}
\label{sec:basis}

We consider only finite graphs $G=(V,E)$ with no loops and no multiple edges. A graph is \textit{null} if it has no vertices. For a vertex $v \in V$, the \textit{neighbourhood} $N(v)$ is the set of vertices adjacent to $v$ 
in $G$.
The \textit{degree} $d(v)$ of a vertex $v \in V$ is the size $|N(v)|$ of its neighbourhood.
A graph is \textit{subcubic} if every vertex has degree at most~$3$.
For disjoint $S,T \subseteq V$, we say that $S$ is \textit{complete to} $T$ if every vertex of $S$ is adjacent to every vertex of $T$, and $S$ is \textit{anticomplete to} $T$ if there are no edges between $S$ and $T$. The \textit{distance} from a vertex $u$ to a vertex $v$ in $G$ is the length of a shortest path between $u$ and $v$. A set $S\subseteq V$ {\it induces} the subgraph 
$G[S]=(S,\{uv\; :\; u,v\in S, uv\in E\})$.
If $G'$ is an induced subgraph of $G$, we write $G'\ssi G$. 
The \textit{complement} of $G$ is the graph $\overline{G}$ with vertex set $V(G)$, such that $uv \in E(\overline{G})$ if and only if $uv \notin E(G)$.

The \textit{$k$-subdivision} of an edge $uv$ in a graph replaces $uv$ by $k$ new vertices $w_{1}, \dots, w_{k}$ with edges $uw_{1}, w_{k}v$ and $w_{i}w_{i+1}$ for each $i \in \{1, \dots, k-1\}$, i.e. the edge is replaced by a path of length $k + 1$. 
The \textit{disjoint union} $G+H$ of graphs $G$ and $H$ has vertex set $V(G) \cup V(H)$ and edge set $E(G) \cup E(H)$.
We denote the disjoint union of $k$ copies of $G$ by $kG$. 
For a graph $H$, a graph $G$ is {\it $H$-free} if $G$ has no induced subgraph isomorphic to $H$. 
For a set of graphs $\{H_1,\ldots,H_k\}$, a graph $G$ is {\it $(H_1,\ldots,H_k)$-free} if $G$ is $H_i$-free for every $i\in \{1,\ldots,k\}$. 

Let $T$ be a tree and let $v$ be a leaf of $T$. Let $u$ be a vertex of degree at least $3$ having shortest distance in $T$ from $v$ and let $P$ be the $v,u$-path in $T$. The operation of \textit{trimming} the leaf $v$ consists in deleting from $T$ the vertex set $V(P) \setminus \{u\}$.   

An \textit{independent set} of a graph $G$ is a set of pairwise non-adjacent vertices and the maximum size of an independent set of $G$ is denoted by $\alpha(G)$. 
A \textit{clique} of a graph $G$ is a set of pairwise adjacent vertices and the maximum size of a clique of $G$ is denoted by $\omega(G)$. 
A \textit{matching} of a graph is a set of edges with no shared endpoints.

The path and the complete graph on $n$ vertices are denoted by $P_n$ and $K_{n}$, respectively. A graph is \textit{$r$-partite}, for $r \geq 2$, if its vertex set admits a partition into $r$ classes such that every edge has its endpoints in different classes. An $r$-partite graph in which every two vertices from different partition classes are adjacent is a \textit{complete $r$-partite graph} and a $2$-partite graph is also called \textit{bipartite}. The complete bipartite graph with partition classes of size $t$ and $s$ is denoted by $K_{t,s}$. A graph is \textit{co-bipartite} if it is the complement of a bipartite graph.

For $\ell \geq 1$, an {\em $\ell$-caterpillar} is a subcubic tree $T$ on $2\ell$ vertices with $V(T) = \{ s_1,\ldots,s_{\ell},t_1,\ldots,t_{\ell} \}$,
such that $E(T) = \{ s_it_i \;:\; 1 \leq i \leq \ell \} \cup \{ s_is_{i+1} \;:\; 1 \leq i \leq \ell -1 \}$. The vertices $t_1,t_2,\dotsc,t_{\ell}$ are the leaves and the path $s_1s_2\cdots s_{\ell}$ is the \textit{backbone} of the caterpillar.

A {\em colouring} of a graph $G=(V,E)$ is a mapping $c\colon V\rightarrow\{1,2,\ldots \}$ that gives each vertex~$u\in V$ a {\it colour} $c(u)$ in such a way that, for every two adjacent vertices $u$ and $v$, we have that $c(u)\neq c(v)$. If for every $u\in V$ we have $c(u)\in \{1,\ldots,k\}$, then we say that $c$ is a {\it $k$-colouring} of $G$.
The {\sc Colouring} problem is to decide whether a given graph $G$ has a $k$-colouring for some given integer $k\geq 1$. If $k$ is {\it fixed}, that is, not part of the input, we call this the $k$-{\sc Colouring} problem. It is well known that $k$-{\sc Colouring} is $\mathsf{NP}$-complete for each $k \geq 3$. A generalisation of {\sc $k$-Colouring} is the following.
For an integer $k\geq 1$, a  {\it $k$-list assignment} of a graph
$G=(V,E)$ is a function $L$ that assigns each vertex $u\in V$ a {\it list} $L(u)\subseteq \{1,2,\ldots,k\}$ of {\it admissible} colours for $u$. A colouring $c$ of $G$ {\it respects} $L$ if  $c(u)\in L(u)$ for every $u\in V$. For a fixed integer~$k\geq 1$, the {\sc List $k$-Colouring} problem is to decide whether a given graph~$G$ with a $k$-list assignment $L$ admits a colouring that respects $L$. 
By setting $L(u)=\{1,\ldots,k\}$ for every $u\in V$, we obtain the {\sc $k$-Colouring} problem. 

\section{Sim-width and independent packings}\label{algoimpli}

In this section we show \Cref{operation} and \Cref{reduction}. Let $\mathcal{H}$ be a finite set of connected non-null graphs. Given a graph $G$, let $\mathcal{H}_G$ be the set of all subgraphs of $G$ isomorphic to a member of $\mathcal{H}$. Recall that the $\mathcal{H}$-graph of $G$, denoted $\mathcal{H}(G)$, is defined as follows: the vertex set is $\mathcal{H}_G$ and two distinct subgraphs of $G$ isomorphic to a
member of $\mathcal{H}$ are adjacent if and only if they either have a vertex in common or there is an edge
in $G$ connecting them. We begin by showing \Cref{operation}: mapping a graph $G$ to its $\mathcal{H}$-graph does not increase the sim-width.

\operation*

\begin{proof} Observe that if $G$ is edgeless, then $\mathcal{H}(G)$ is edgeless as well and the statement trivially holds. Therefore, we assume that $G$ is not edgeless, and hence $\simw_{G}(T,\delta) \geq 1$.

Let $\mathcal{H}=\{H_1,\ldots,H_n\}$. Let $h$ be an arbitrary vertex of $\mathcal{H}(G)$. Hence, $h$ corresponds to a subgraph of $G$ isomorphic to $H_i$, for some $i \in \{1,\ldots,n\}$. This means there exists a unique vertex set $S(h) \subseteq V(G)$ such that $|S(h)| = |V(H_i)|$ and $G[S(h)]$ contains a copy of $H_i$ as a subgraph ($S(h)$ is just the vertex set of the subgraph of $G$ corresponding to $h$). We compute all $S(h)$, for $h \in \mathcal{H}(G)$, in $O(|V(G)|^{r})$ time as follows.  We enumerate all $O(|V(G)|^{r})$ subsets of vertices of $G$ of size at most $r$. For each such set $S$ and each $H \in \mathcal{H}$ with $|S|$ vertices, we iterate over all $|S|! \leq r! = O(1)$ possible bijections $g\colon V(H) \rightarrow S$. We then keep the subsets $S$ for which one such bijection maps every pair of adjacent vertices in $H$ to a pair of adjacent vertices in
$G[S]$. We now arbitrarily order $V(G)$ and let $f(h)$ be the smallest vertex in $S(h)$ with respect to this ordering. For $v \in V(G)$, let $F(v) = \{h \in V(\mathcal{H}(G)) : f(h) = v\}$. Note that $F(v)$ is a clique in $\mathcal{H}(G)$. We can compute all sets $F(v)$, for $v \in V(G)$, in $O(|V(G)|\cdot |V(\mathcal{H}(G))|) = O(|V(G)|^{r+1})$ time.   

We are now ready to construct $(T',\delta')$ from $(T,\delta)$ as follows (see \Cref{fig:Hpack}). For each leaf $t \in V(T)$, we let $v_t = \delta^{-1}(t)$, and do the following. If $F(v_t) \neq \varnothing$, we distinguish two cases. Suppose first that $|F(v_t)| = 1$. In this case, build a $|F(v_t)|$-caterpillar $C_t$ and add the edge connecting the single vertex $x_t$ in the backbone of $C_t$ with the node $t$. Suppose now that $|F(v_t)| \geq 2$. In this case, build a $|F(v_t)|$-caterpillar $C_t$, subdivide an arbitrary edge of the backbone of $C_t$ by adding a new vertex $x_t$ and add the edge $x_tt$. Finally, if $F(v_t) = \varnothing$, trim the leaf $t$ of $T$, as defined in \Cref{sec:basis}. Observe that, since $|V(\mathcal{H}(G))| > 1$, either there exists a leaf $t \in V(T)$ such that $|F(v_t)| \geq 2$ or there exist at least two leaves $t_1,t_2 \in V(T)$ such that $|F(v_{t_1})| \geq 1$ and $|F(v_{t_2})| \geq 1$. This implies that each leaf $t$ of $T$ such that $F(v_t) = \varnothing$ can be trimmed. Moreover, by definition, no new leaf is created after an application of trimming. Let $T'$ be the tree obtained by the procedure above. Let $\delta'$ be the map from $V(\mathcal{H}(G))$ to the leaves of $T'$ which restricted to $F(v_t)$ is an arbitrary bijection from $F(v_t)$ to the leaves of $C_t$. It is easy to see that $(T',\delta')$ is a branch decomposition of $\mathcal{H}(G)$ and that it can be computed in $O(|V(G)|^2)$ time.

We now show that $\simw_{\mathcal{H}(G)}(T',\delta') \leq \simw_G(T,\delta)$. Suppose that $\simw_{\mathcal{H}(G)}(T',\delta') = k$. Since the statement is trivially true if $k \leq 1$, we may assume $k \geq 2$. Each $e' \in E(T')$ naturally induces a partition $(A_{e'}, \overline{A_{e'}})$ of $V(\mathcal{H}(G))$. Consider $e \in E(T')$ such that $\cutsim_{\mathcal{H}(G)}(A_{e}, \overline{A_{e}}) = \simw_{\mathcal{H}(G)}(T', \delta') = k$. Then, there is a matching $\{x_1'y_1',\ldots,x_k'y_k'\}$ of size $k$ such that $\{x_1',\ldots,x_k'\} \subseteq A_{e}$ and $\{y_1',\ldots,y_k'\}\subseteq \overline{A_{e}}$ are independent sets of $\mathcal{H}(G)$. Suppose first that $e$ is an edge of $C_t$ or the edge $x_tt$, for some leaf $t\in V(T)$. Then, one of $A_e$ and $\overline{A_{e}}$ is a subset of $F(v_t)$, where $v_t = \delta^{-1}(t)$. Since each $F(v_t)$ is a clique in $\mathcal{H}(G)$, we have that $k \leq 1$. Hence, we may assume that $e \in E(T') \cap E(T)$. Then, for any $h \in V(\mathcal{H}(G))$, $\delta'(h)$ and $\delta(f(h))$ belong to the same component of $T'-e$, and so $e$ naturally induces a partition $(A_e, \overline{A_e})$ of $V(\mathcal{H}(G))$ and a partition $(B_e,\overline{B_e})$ of $V(G)$ satisfying the following property: For any $h \in V(\mathcal{H}(G))$, $h \in A_e$ if and only if $f(h) \in B_e$.
 
We claim that, for $i \neq j$, $S(x_i') \cup S(y_i')$ and $S(x_j') \cup S(y_j')$ are disjoint and anticomplete in $G$. Indeed, suppose without loss of generality that $S(x_i')$ shares a vertex with either $S(x_j')$ or $S(y_j')$. Then, $x_i'$ is adjacent to either $x_j'$ or $y_j'$ in $\mathcal{H}(G)$, a contradiction. Similarly, if there is an edge between $S(x_i')$ and either $S(x_j')$ or $S(y_j')$ in $G$, then $x_i'$ is adjacent to either $x_j'$ or $y_j'$ in $\mathcal{H}(G)$, a contradiction again.
 
We now claim that $G[S(x_i') \cup S(y_i')]$ is connected. Since $G[S(x_i')]$ contains a copy of a connected graph $H_s \in \mathcal{H}$, with $|S(x_i')| = |V(H_s)|$, as a subgraph, we have that $G[S(x_i')]$ is connected. Similarly, $G[S(y_i')]$ is connected. Moreover, since $x_i'$ is adjacent to $y_i'$, either $S(x_i')$ shares a vertex with $S(y_i')$ or there is an edge in $G$ between $S(x_i')$ and $S(y_i')$. In either case we obtain that $G[S(x_i') \cup S(y_i')]$ is connected. 

Therefore, for each $i \in \{1,\ldots,k\}$, there is a path $P_i$ in $G[S(x_i') \cup S(y_i')]$ from $f(x_i')$ to $f(y_i')$ in $G$, say $P_i =  v_0v_1\cdots v_\ell$ where $v_0 = f(x_i')$ and $v_{\ell} = f(y_i')$. Since  $x_i' \in A_e$ and $y_i' \in \overline{A_e}$, it follows that $f(x_i') \in B_e$ and $f(y_i') \in \overline{B_e}$. Since the path $P_i$ must cross the cut $(B_e, \overline{B_e})$ of $G$, there exists $q \in \{0,\ldots,\ell-1\}$ such that $v_q \in B_e$ and $v_{q+1} \in \overline{B_e}$. 
We let $x_i = v_q$ and $y_i = v_{q+1}$. Clearly, $x_iy_i \in E(G)$. We now claim that, for each $i\neq j$, $\{x_i,y_i\}$ and $\{x_j,y_j\}$ are disjoint and anticomplete in $G$. This simply follows from the fact that, for $p \in \{i,j\}$, $\{x_p,y_p\} \subseteq G[S(x_p') \cup S(y_p')]$ and $G[S(x_i') \cup S(y_i')]$ and $G[S(x_j') \cup S(y_j')]$ are disjoint and anticomplete in $G$.

Let now $X = \{x_1,\ldots,x_k\}$ and $Y=\{y_1,\ldots,y_k\}$. By the previous paragraph, $X \subseteq B_e$ and $Y \subseteq \overline{B_e}$, $X$ and $Y$ are independent sets and $G[X,Y] \cong kP_2$. Therefore, $\simw_{G}(T,\delta) \geq\cutsim_G(B_e,\overline{B_e}) \geq k = \simw_{\mathcal{H}(G)}(T',\delta')$. 
\end{proof}

\begin{figure}[h!]
\centering
\includegraphics[scale=0.9]{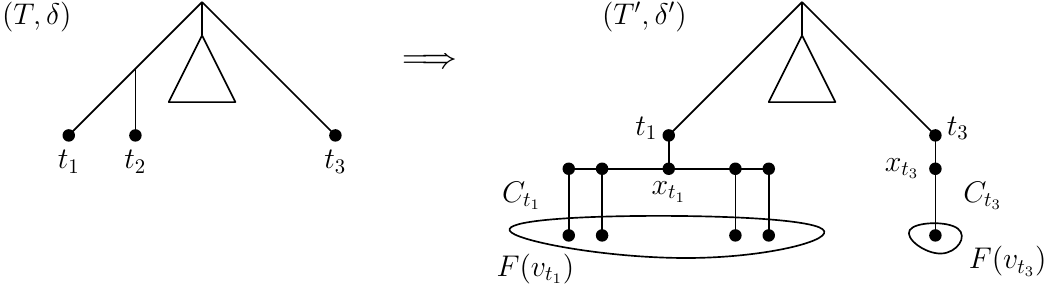}
\caption{How to construct a branch decomposition $(T', \delta')$ of $\mathcal{H}(G)$ from a branch decomposition $(T, \delta)$ of $G$. We distinguish vertices $t_i$ such that $|F(v_{t_i})| = 0$ ($i = 2$), $|F(v_{t_i})| = 1$ ($i = 3$) and $|F(v_{t_i})| \geq 2$ ($i = 1$).}
\label{fig:Hpack}
\end{figure}

Recall that two subgraphs $H_1$ and $H_2$ of a graph $G$ are independent if they are vertex-disjoint and no edge of $G$ joins a vertex of $H_1$ with a vertex of $H_2$. An independent $\mathcal{H}$-packing in $G$ is a set
of pairwise independent subgraphs from $\mathcal{H}_G$. Given a graph $G$, a weight function $w\colon \mathcal{H}_G \rightarrow \mathbb{Q}_{+}$ on the subgraphs in $\mathcal{H}_G$, and an independent $\mathcal{H}$-packing $P$ in $G$, the weight of $P$ is defined as
the sum $\sum_{H\in P}w(H)$. Given a graph $G$ and a weight function $w\colon \mathcal{H}_G \rightarrow \mathbb{Q}_{+}$, \textsc{Maximum Weight Independent $\mathcal{H}$-Packing} is the problem of finding an independent $\mathcal{H}$-packing in $G$ of maximum weight. Besides \Cref{operation}, in order to show \Cref{reduction}, we need the following two results. 

\begin{theorem}[\citep{DMS21}]\label{repre} Let $\mathcal{H}$ be a non-empty finite set of connected non-null graphs and let $r$ be the maximum number of vertices of a graph in $\mathcal{H}$. Then there exists an algorithm that takes as input a graph $G$ and computes the graph $\mathcal{H}(G)$ in $O(|V(G)|^{2r})$ time.
\end{theorem}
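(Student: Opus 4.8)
The plan is a direct brute-force implementation of the definition of $\mathcal{H}(G)$, exploiting that $\mathcal{H}$, and hence $r$, is fixed and not part of the input, so that $|\mathcal{H}|$, $r!$, $2^{\binom{r}{2}}$ and $|\mathrm{Aut}(H)|$ for $H \in \mathcal{H}$ are all $O(1)$. Write $n = |V(G)|$.

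First I would enumerate the vertex set $\mathcal{H}_G$ of $\mathcal{H}(G)$ together with the vertex set $V(h) \subseteq V(G)$ of each $h \in \mathcal{H}_G$. As in the proof of \Cref{operation}, one iterates over all $O(n^{r})$ subsets $S \subseteq V(G)$ with $|S| \le r$; for each such $S$, each $H \in \mathcal{H}$ with $|V(H)| = |S|$, and each of the $|S|! = O(1)$ bijections $g \colon V(H) \to S$, one checks in $O(r^2) = O(1)$ time whether $g$ maps every edge of $H$ to an edge of $G$, and if so records the subgraph $(S, g(E(H)))$ as an element of $\mathcal{H}_G$ (discarding repeats among the at most $2^{\binom{r}{2}}$ subgraphs supported on the fixed set $S$). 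This lists all of $\mathcal{H}_G$ in $O(n^{r})$ time; in particular $|\mathcal{H}_G| = O(n^{r})$, since each of the $O(n^{r})$ sets $S$ supports $O(1)$ such subgraphs.

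Second, I would build the edge set of $\mathcal{H}(G)$ by examining every unordered pair $\{h_1, h_2\}$ of distinct elements of $\mathcal{H}_G$ and deciding adjacency straight from the definition: $h_1$ and $h_2$ are adjacent in $\mathcal{H}(G)$ if and only if $V(h_1) \cap V(h_2) \ne \varnothing$, or there exist $u \in V(h_1)$ and $v \in V(h_2)$ with $uv \in E(G)$. Since $|V(h_1)|, |V(h_2)| \le r$, this test is a constant number of intersection/adjacency queries in $G$ and runs in $O(r^2) = O(1)$ time. There are $\binom{|\mathcal{H}_G|}{2} = O(n^{2r})$ such pairs, so this step takes $O(n^{2r})$ time, which dominates the $O(n^{r})$ spent in the first step; hence $\mathcal{H}(G)$ is computed in $O(n^{2r}) = O(|V(G)|^{2r})$ time.

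Correctness needs nothing beyond observing that we follow the definition of $\mathcal{H}(G)$ verbatim, so I do not expect a genuine obstacle here; the only point requiring care is the accounting that makes the brute force fit the stated bound — namely, that $\mathcal{H}$ being fixed lets us treat $r$, $|\mathcal{H}|$ and the per-set and per-pair work as constants, which is exactly what turns ``$O(n^{r})$ candidate vertex sets, then all pairs of the resulting $O(n^{r})$ vertices'' into $O(n^{2r})$.
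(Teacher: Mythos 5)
Your argument is correct, but note that the paper does not prove this statement at all: it is imported verbatim from \citep{DMS21} as a black box, so there is no in-paper proof to compare against. Your brute-force implementation --- enumerating the $O(|V(G)|^{r})$ candidate vertex sets exactly as in the paper's own proof of \Cref{operation} to list $\mathcal{H}_G$ (with the correct observation that each vertex set supports only $O(1)$ distinct subgraphs, so $|\mathcal{H}_G| = O(|V(G)|^{r})$), and then testing all $O(|V(G)|^{2r})$ pairs for adjacency in constant time each --- is a valid, self-contained justification of the cited running-time bound.
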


\begin{observation}[\citep{DMS21}]\label{red} Let $\mathcal{H}$ be a finite set of connected non-null graphs. Let $G$ be a graph and let $w\colon \mathcal{H}_G \rightarrow \mathbb{Q}_{+}$. Let $I$ be an independent set in $\mathcal{H}(G)$
of maximum weight with respect to the weight function $w$. Then $I$ is an independent $\mathcal{H}$-packing in
$G$ of maximum weight.
\end{observation}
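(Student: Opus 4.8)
The plan is to recognise that this statement is a direct translation through the definition of $\mathcal{H}(G)$, the only substantive content being that adjacency in $\mathcal{H}(G)$ is precisely the negation of independence in $G$. First I would unwind the two definitions side by side. By definition of the $\mathcal{H}$-graph, two distinct members $H_1, H_2 \in \mathcal{H}_G$ are adjacent in $\mathcal{H}(G)$ if and only if they share a vertex or some edge of $G$ joins a vertex of $H_1$ to a vertex of $H_2$. On the other hand, $H_1$ and $H_2$ are independent in $G$ exactly when they are vertex-disjoint and no edge of $G$ joins a vertex of $H_1$ with a vertex of $H_2$. These two conditions are literally complementary, so for distinct $H_1, H_2 \in \mathcal{H}_G$ we have $H_1 H_2 \notin E(\mathcal{H}(G))$ if and only if $H_1$ and $H_2$ are independent in $G$. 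Note that the connectedness of the members of $\mathcal{H}$ plays no role here; it is merely a standing assumption.

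Next I would lift this pairwise equivalence to the level of sets. Since $V(\mathcal{H}(G)) = \mathcal{H}_G$, and since an independent set of $\mathcal{H}(G)$ is by definition a set of pairwise non-adjacent members of $\mathcal{H}_G$ while an independent $\mathcal{H}$-packing in $G$ is by definition a set of pairwise independent members of $\mathcal{H}_G$, the pairwise equivalence of the previous paragraph shows that a subset $I \subseteq \mathcal{H}_G$ is an independent set of $\mathcal{H}(G)$ if and only if $I$ is an independent $\mathcal{H}$-packing in $G$. Thus the two families of feasible objects coincide exactly, element for element.

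Finally I would match the objective functions. The weight function $w\colon \mathcal{H}_G \rightarrow \mathbb{Q}_{+}$ is simultaneously a weight function on the vertex set of $\mathcal{H}(G)$, and in both problems the value of a feasible $I$ is $\sum_{H \in I} w(H)$. Since the feasible objects and their values agree verbatim, any maximiser for one problem is a maximiser for the other; in particular a maximum-weight independent set $I$ of $\mathcal{H}(G)$ is a maximum-weight independent $\mathcal{H}$-packing in $G$, which is the claim. Honestly, there is no genuine obstacle in this argument: it is a definition chase. The only point requiring any care is the bookkeeping that $V(\mathcal{H}(G))$ and the weight function are shared verbatim between the two settings, and that the qualifier ``distinct'' in the adjacency rule of $\mathcal{H}(G)$ is consistent with the distinctness of elements built into the notions of both independent set and $\mathcal{H}$-packing.
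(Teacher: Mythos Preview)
Your argument is correct: the observation is a pure definition chase, and you have carried it out cleanly by matching the adjacency relation in $\mathcal{H}(G)$ with the negation of independence of subgraphs in $G$, lifting this to sets, and noting that the weight functions coincide. The paper itself does not prove this statement; it merely records it as an observation imported from \cite{DMS21}, so there is nothing further to compare.
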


\reduction*

\begin{proof} Given the input graph $G \in \mathcal{G}$, we compute in polynomial time a branch decomposition of $G$ of sim-width at most $k$, for some integer $k$. We then compute $\mathcal{H}(G)$ in polynomial time using \Cref{repre}. If $|V(\mathcal{H}(G))| \leq 1$, we immediately conclude thanks to \Cref{red}. Otherwise, by \Cref{operation}, we compute in polynomial time a branch decomposition of $\mathcal{H}(G)$ of sim-width at most $k$. Finally, using the assumed algorithm, we compute in polynomial time a maximum-weight independent set in $\mathcal{H}(G)$ which, by \Cref{red}, is an independent $\mathcal{H}$-packing in $G$ of maximum weight.   
\end{proof}

\section{Mim-width of $(rP_1,\overline{K_{t,s}+P_1})$-free graphs}\label{resolveedgeless}

In this section we show the mim-width dichotomy for the class of $(rP_1,\overline{K_{t,s}+P_1})$-free graphs stated in \Cref{edgeless}. We begin by identifying the cases of bounded mim-width (\Cref{boundededgeless}) and then pass to the cases of unbounded mim-width (\Cref{unboundededgeless}). These results are then combined to prove \Cref{edgeless} (\Cref{dichotomy}).  

\subsection{Boundedness results}\label{boundededgeless}

In this section we show that, for each $t\geq 4$, the mim-width of $(3P_1,\overline{K_{3,t}+P_1})$-free graphs and the mim-width of $(4P_1,\overline{K_{2,t}+P_1})$-free graphs are bounded and quickly computable (\Cref{33tt1,42tt1}, respectively). The proofs are based on the following common strategy. We find $t$ pairwise non-adjacent vertices $v_1, \ldots v_t$ in the input graph $G$ ($t = 2$ in \Cref{33tt1} and $t = 3$ in \Cref{42tt1}). We then obtain a partition of $V(G)$ where one partition class is $\{v_1, \ldots, v_t\}$ and the remaining ones are the sets of private neighbours of subsets of $\{v_1, \ldots, v_t\}$ with respect to $\{v_1, \ldots, v_t\}$. We finally construct an appropriate branch decomposition of $G$ and use the following simple observation.    

\begin{observation}\label{boundmim}
Let $V_1,\ldots,V_m$ be a partition of $V(G)$ and let $(T,\delta)$ be a branch decomposition of $G$. Then, 
\begin{equation*}
\mimw_G(T,\delta) = \max_{e \in E(T)} \cutmim_G(A_e,\overline{A_e}) \leq  \max_{e \in E(T)}\sum_{1 \leq i,j \leq m} \cutmim_G(A_e \cap V_i,\overline{A_e} \cap V_j).
\end{equation*}
\end{observation}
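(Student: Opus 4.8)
The first equality is nothing but the definition of $\mimw_G(T,\delta)$, so the plan is to establish the inequality. I would fix an edge $e \in E(T)$ attaining $\max_{e \in E(T)} \cutmim_G(A_e,\overline{A_e})$ together with a maximum induced matching $M$ of $G[A_e,\overline{A_e}]$, so that $|M| = \cutmim_G(A_e,\overline{A_e}) = \mimw_G(T,\delta)$. The idea is then simply to distribute the edges of $M$ among the bipartite subgraphs $G[A_e \cap V_i,\overline{A_e} \cap V_j]$ according to which parts of the partition $V_1,\dots,V_m$ contain their two endpoints.

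Concretely, every edge $xy \in M \subseteq E(G[A_e,\overline{A_e}])$ has exactly one endpoint $x$ in $A_e$ and one endpoint $y$ in $\overline{A_e}$; since $V_1,\dots,V_m$ partitions $V(G)$, there are unique indices $i,j$ with $x \in V_i$ and $y \in V_j$, and I assign $xy$ to the ordered pair $(i,j)$. Letting $M_{ij}$ be the set of edges of $M$ assigned to $(i,j)$, we obtain $M = \bigcup_{1\le i,j\le m} M_{ij}$ as a disjoint union, hence $|M| = \sum_{1\le i,j\le m} |M_{ij}|$. The one point that needs a line of care — and the closest thing to an obstacle here — is verifying that each $M_{ij}$ is an \emph{induced} matching of $G[A_e \cap V_i,\overline{A_e} \cap V_j]$: it is clearly a matching contained in that bipartite subgraph, and given distinct edges $xy, x'y' \in M_{ij}$ with $x,x' \in A_e \cap V_i$ and $y,y' \in \overline{A_e} \cap V_j$, the only possible edges of $G[A_e \cap V_i,\overline{A_e} \cap V_j]$ joining the two edges are $xy'$ and $x'y$, each of which is an edge of $G[A_e,\overline{A_e}]$ between two distinct edges of $M$, contradicting that $M$ is induced in $G[A_e,\overline{A_e}]$.

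With that in hand, $|M_{ij}| \le \cutmim_G(A_e \cap V_i,\overline{A_e} \cap V_j)$ for every pair $(i,j)$, and summing yields
\[
\mimw_G(T,\delta) = \cutmim_G(A_e,\overline{A_e}) = |M| = \sum_{1\le i,j\le m}|M_{ij}| \le \sum_{1\le i,j\le m}\cutmim_G(A_e \cap V_i,\overline{A_e} \cap V_j),
\]
and the right-hand side is at most $\max_{e\in E(T)}\sum_{1\le i,j\le m}\cutmim_G(A_e\cap V_i,\overline{A_e}\cap V_j)$, which is the claimed bound. Overall this is a routine counting argument, the only substantive ingredient being the elementary fact that restricting an induced matching of the cut $(A_e,\overline{A_e})$ to a ``sub-cut'' $(A_e\cap V_i,\overline{A_e}\cap V_j)$ preserves the property of being induced.
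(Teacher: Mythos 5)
Your proof is correct, and it is exactly the routine decomposition argument the paper has in mind: the paper states \Cref{boundmim} without proof, treating it as immediate, and your partition of a maximum induced matching of $G[A_e,\overline{A_e}]$ into the sets $M_{ij}$ (together with the check that each $M_{ij}$ remains induced in $G[A_e\cap V_i,\overline{A_e}\cap V_j]$) supplies precisely the omitted details. Nothing further is needed.
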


We will need two auxiliary results. The first one below is left as an easy exercise (see \Cref{fig:newbranch}).

\begin{lemma}\label{addleaves} Let $G$ be a graph and let $(T, \delta)$ be a branch decomposition of $G$ with $\mimw_{G}(T, \delta) \leq k$, with $k\geq 1$. Let $G'$ be the graph obtained from $G$ by adding a vertex of degree at most $1$. Then we can construct in $O(1)$ time a branch decomposition $(T', \delta')$ of $G'$ with $\mimw_{G'}(T', \delta') \leq k$.  
\end{lemma}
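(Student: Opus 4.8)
The statement is that adding a vertex $v$ of degree at most $1$ to $G$ — call the result $G'$ — increases neither the mim-width of a branch decomposition nor changes its structure in a way we can't control. The plan is to take the given branch decomposition $(T,\delta)$ of $G$ and graft a new leaf for $v$ onto it in the cheapest possible way, then verify that no cut in the enlarged tree can acquire a larger induced matching in the bipartite subgraph it defines. There is a mild degenerate case to dispatch first: if $|V(G)| \le 1$, then $G'$ has at most two vertices and one builds a branch decomposition on a single edge or a trivial tree by hand, with mim-width $0 \le k$. So assume $|V(G)| \ge 2$, i.e.\ $T$ has at least two leaves and hence at least one edge.

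\textbf{Construction.} Pick any leaf $\ell$ of $T$ and let $e_0$ be its incident edge. Subdivide $e_0$ with a new internal node $z$, and attach a new leaf $\lambda_v$ to $z$ by a new edge; set $\delta'$ to agree with $\delta$ on $V(G)$ and send $v$ to $\lambda_v$. Then $T'$ is still subcubic ($z$ has degree exactly $3$) and $(T',\delta')$ is a branch decomposition of $G'$, constructible in $O(1)$ time. Now classify the edges of $T'$. Every edge $e \in E(T') \setminus \{\,z\lambda_v,\ z\ell'\,\}$ (where $z\ell'$ is the ``half'' of the subdivided edge on the leaf-$\ell$ side, if $\ell \ne \delta^{-1}$ of nothing relevant — more simply, every edge other than the two incident to $z$ on the $\lambda_v$ side and the $\ell$ side) corresponds to an edge of $T$ and induces the partition $(A_e \cup \{v\}, \overline{A_e})$ or $(A_e, \overline{A_e}\cup\{v\})$ of $V(G')$, where $(A_e,\overline A_e)$ is the partition it induced in $T$. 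For the two new edges incident to $z$: the edge $z\lambda_v$ induces the partition $(\{v\}, V(G))$, and the edge on the $\ell$-side of the subdivision induces $(\{v\}\cup\{\delta^{-1}(\ell)\}$-side$, \dots)$ — in any case one side is a set of size at most $2$ (if $\ell$ is the only leaf split off) or more generally equals $A_{e_0}$ with $v$ added to the small side.

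\textbf{Why the mim-width does not grow.} Fix an edge $e'$ of $T'$ and let $(B, \overline B)$ be the partition of $V(G')$ it induces; we must bound $\cutmim_{G'}(B,\overline B)$ by $k$. Since $v$ has at most one neighbour in $G'$, the vertex $v$ participates in at most one edge of $G'$ and hence lies in at most one edge of any induced matching of $G'[B,\overline B]$. Let $M$ be a maximum induced matching of $G'[B,\overline B]$. If $v$ is not saturated by $M$, then $M$ is an induced matching of $G[B \setminus \{v\}, \overline B \setminus \{v\}]$, and the latter is the bipartite graph induced by an edge of $T$ (for the ``old'' edges) or by a partition of $V(G)$ with one side of bounded size (for the new edges), so $|M| \le k$ in the first case and $|M| \le 1 \le k$ in the second. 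If $v$ \emph{is} saturated by $M$, say by the edge $vw$, then $w$ is the unique neighbour of $v$; deleting $vw$ from $M$ leaves an induced matching $M'$ of $G'[B,\overline B]$ avoiding $v$ and also avoiding $w$ together with every vertex adjacent to $w$. In particular $M'$ is an induced matching of $G[B',\overline B']$ where $B' = B \setminus (\{v,w\}\cup N_G(w))$, etc.; this is an induced subgraph of the bipartite graph of some edge of $T$, so $|M'| \le k$ and thus $|M| = |M'| + 1$. The last step is where one has to be slightly careful: a direct $|M| \le k+1$ bound is not good enough, so the argument must instead observe that removing $vw$ together with \emph{the other endpoint} of a matching edge we could have chosen does not quite work — instead, the clean route is to note that since $w$ has a neighbour ($v$) outside the rest of the matching, any induced matching of $G[B\setminus\{v\}, \overline B \setminus \{v\}]$ that uses $w$ would contradict being induced once $vw$ is re-added only if $w$'s matching partner is non-adjacent to $v$, which it is by maximality of the configuration — so in fact $M \setminus \{vw\}$ is an induced matching of the \emph{same} bipartite graph $G[B\setminus\{v\},\overline B\setminus\{v\}]$ that does not use $w$ at all (if it used $w$ with partner $u$, then $u \ne w$'s $M$-partner, contradiction), giving $|M| - 1 \le \cutmim_G(\text{old cut}) \le k$.

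\textbf{Main obstacle.} The only real subtlety — and the place a careless write-up goes wrong — is the saturated case: one must argue that $M \setminus \{vw\}$ does not use $w$, so that it genuinely is an induced matching of the old bipartite cut graph and the bound is $k$ rather than $k+1$. This follows because $w$ can be incident to only one edge of the induced matching $M$ (as $M$ is a matching), and that edge is $vw$; hence $M \setminus \{vw\}$ avoids $w$ entirely, is an induced matching of $G[A_e, \overline{A_e}]$ for the corresponding edge $e$ of $T$, and so has size at most $k$, whence $|M| \le k$. All other edges of $T'$ either inherit their cut verbatim from $T$ (up to placing the isolated-ish vertex $v$ on one side, which can only help) or have one side of size at most two and thus induced-matching number at most one. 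This completes the plan; the figure referenced as \Cref{fig:newbranch} depicts the subdivide-and-attach operation.
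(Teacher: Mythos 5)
The paper gives no written proof of \Cref{addleaves} (it is ``left as an easy exercise''), but the accompanying figure makes the intended construction clear: the new leaf for the added vertex $u$ is attached \emph{next to the leaf $\delta(v)$ of its unique neighbour $v$}, by subdividing the edge of $T$ incident to $\delta(v)$. Your construction instead grafts the new leaf onto an \emph{arbitrary} leaf $\ell$ of $T$, and this is where the proof breaks: it is not merely unproven but false for that construction. Concretely, let $G$ consist of $k$ disjoint edges $a_1b_1,\dots,a_kb_k$ together with two isolated vertices $w$ and $x$, and take a branch decomposition in which some edge induces the cut $(\{a_1,\dots,a_k,x\},\{b_1,\dots,b_k,w\})$ and whose mim-width is exactly $k$. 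Add $u$ adjacent to $w$ and attach $u$'s leaf next to $\delta(x)$. The cut becomes $(\{a_1,\dots,a_k,x,u\},\{b_1,\dots,b_k,w\})$, and $\{a_1b_1,\dots,a_kb_k,uw\}$ is an induced matching of size $k+1$ in the corresponding bipartite cut graph. So with an arbitrary attachment point the mim-width can genuinely increase by one.

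Your write-up in fact senses this danger (``a direct $|M|\le k+1$ bound is not good enough''), but the attempted repair is a non sequitur: from ``$M\setminus\{vw\}$ avoids $v$ and $w$, hence is an induced matching of $G[A_e,\overline{A_e}]$ of size at most $k$'' one can only conclude $|M|=|M\setminus\{vw\}|+1\le k+1$, not $|M|\le k$. Observing that $M\setminus\{vw\}$ avoids $w$ does nothing to recover the lost unit. The correct route is to eliminate the saturated case altogether by placing $\delta'(u)$ on the same side of every old cut as $\delta(v)$: then $u$ is isolated in the bipartite graph of every edge inherited from $T$ (its only neighbour lies on its own side), so those cuts keep cut-mim at most $k$, while the three new edges created by the subdivision have one side contained in $\{u,v\}$ and hence cut-mim at most $1$. (As a pedantic aside, when $k=0$ and the new vertex has degree exactly $1$ the conclusion $\mimw_{G'}(T',\delta')\le k$ cannot hold since $G'$ has an edge; the bound is really $\max(k,1)$, which suffices for the paper's application where $k=2$.)
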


\begin{figure}[h!]
\centering
\includegraphics[scale=0.8]{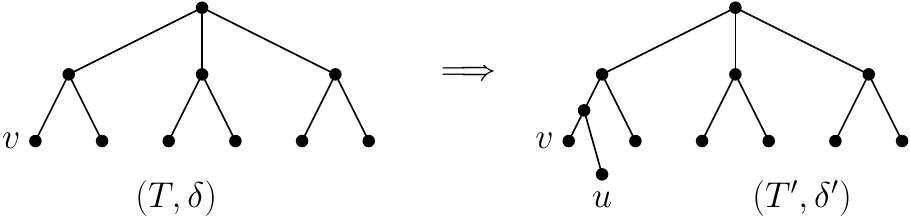}
\caption{How to construct a branch decomposition $(T', \delta')$ of $G'$ from a branch decomposition $(T, \delta)$ of $G$, where $G'$ is obtained from $G$ by adding a leaf vertex $u$ adjacent to $v$.}
\label{fig:newbranch}
\end{figure}

The second one is essentially stated in the proof of \citep[Corollary~3.7.4]{Vat12}. We provide its short proof for completeness.

\begin{lemma}[\citet{Vat12}]\label{paths} Let $G$ be a graph with $|V(G)| > 1$ and maximum degree at most $2$. Then $\mimw(G) \leq 2$ and a branch decomposition $(T, \delta)$ of $G$ with $\mimw_{G}(T, \delta) \leq 2$ can be constructed in $O(n)$ time. 
\end{lemma}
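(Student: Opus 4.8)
The statement is: a graph $G$ with $|V(G)|>1$ and maximum degree at most $2$ has $\mimw(G)\le 2$, with a suitable branch decomposition constructible in $O(n)$ time. The plan is to exploit the very restrictive structure of such graphs: every connected component is either a path or a cycle. So it suffices to handle a single path or cycle, and then combine components.

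First I would reduce to the connected case. Given the components $C_1,\dots,C_p$ of $G$, build a branch decomposition for each $C_i$ with mim-width at most $2$, then glue them along an auxiliary ``caterpillar-like'' backbone of internal nodes: take the roots (a chosen edge) of each component's subcubic tree and attach them one at a time. Any edge $e$ of the combined tree that lies inside some $C_i$'s subtree induces the same cut as in $C_i$, hence $\cutmim_G(A_e,\overline{A_e})\le 2$; any edge $e$ on the connecting backbone induces a partition $(A_e,\overline{A_e})$ where one side is a union of whole components. Since there are no edges of $G$ between distinct components, $G[A_e,\overline{A_e}]$ is edgeless and $\cutmim=0\le 2$. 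This adds only $O(p)=O(n)$ nodes and is done in linear time.

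Second, I would handle a single path $P=v_1v_2\cdots v_n$ (and similarly a cycle). The natural choice is a \emph{caterpillar}: a backbone $s_1s_2\cdots s_n$ with leaf $t_i=\delta^{-1}(v_i)$ hanging off $s_i$, ordered along the path. For an edge $e$ of the backbone between $s_i$ and $s_{i+1}$, the cut is $(\{v_1,\dots,v_i\},\{v_{i+1},\dots,v_n\})$, and $G[A_e,\overline{A_e}]$ consists of the single edge $v_iv_{i+1}$, so $\cutmim=1$. For a pendant edge $s_it_i$, one side is the singleton $\{v_i\}$, giving $\cutmim\le 1$. So in fact a path achieves mim-width $1$. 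For a cycle $v_1\cdots v_n v_1$ the same caterpillar gives backbone cuts $G[A_e,\overline{A_e}]$ consisting of exactly two edges ($v_iv_{i+1}$ and $v_nv_1$), and a maximum induced matching there has size at most $2$; pendant edges again give at most $1$. Hence a cycle achieves mim-width at most $2$, which is where the bound of $2$ in the statement comes from. Each caterpillar is built in $O(n)$ time by reading off the path/cycle order.

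The only mildly delicate points — not genuine obstacles — are: (i) verifying the cut-function bound for the backbone edge of a cycle, where one must observe that two disjoint edges of $G$ crossing the cut either form an induced matching (size $2$) or not, but in any case cannot exceed size $2$ since only two edges cross; and (ii) making sure the glued tree remains subcubic, which is why components are attached one-by-one along a path of degree-$3$ nodes rather than all at a single node. Both are routine. I would then invoke \Cref{boundmim} only implicitly — actually a direct computation of $\cutmim$ on each cut is cleaner here — and conclude $\mimw_G(T,\delta)\le 2$ with $(T,\delta)$ computed in $O(n)$ total time.
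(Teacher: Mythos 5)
Your proposal is correct and follows essentially the same route as the paper: order the vertices of each path or cycle component along the leaves of a caterpillar, observe that each backbone cut is crossed by at most two edges of $G$ (one for a path, two for a cycle) while pendant cuts contribute at most one, and join the component caterpillars along an auxiliary caterpillar so that the joining cuts separate whole components and have $\cutmim = 0$. The only cosmetic difference is that the paper attaches all component trees to the leaves of a single $k$-caterpillar $T_0$ (via a subdivision vertex on each component's backbone) rather than chaining them one by one, which is the same idea.
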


\begin{proof}
Suppose that $G$ has $k$ components, $C_1,\dots,C_k$, where each $C_i$ is a path or a cycle with vertex set $\{v_{i,1},\ldots,v_{i,|C_i|}\}$. For $1 < j < |C_i|$, each $v_{i,j}$ is adjacent to $v_{i,j-1}$ and $v_{i,j+1}$ and, if $C_i$ is a cycle, $v_{i,1}$ is adjacent to $v_{1,|C_i|}$. For each component $C_i$, we construct a $|C_i|$-caterpillar $T_i$ with leaves $\ell_{i,1},\ldots,\ell_{i,|C_i|}$ and subdivide an arbitrary edge of the backbone of $T_i$ with a new vertex $t_i$, unless the backbone of $T_i$ has size $1$, in which case we let $t_i$ be the unique vertex of the backbone. We then construct a $k$-caterpillar $T_0$ with leaves $\ell_{0,1},\ldots,\ell_{0,k}$. Let $T$ be the subcubic tree obtained from the disjoint union of $T_0,T_1,\ldots,T_k$ by adding the edges $\ell_{0,1}t_1,\ldots,\ell_{0,k}t_k$ and, if $k=1$, by additionally deleting $V(T_0)$. Let $\delta$ be the bijection from the vertices of $G$ to the leaves of $T$ given by $\delta(v_{i,j})= \ell_{i,j}$. Clearly, $(T,\delta)$ is a branch decomposition of $G$ and it can be constructed in $O(n)$ time.

We now show that $\mimw_{G}(T, \delta) \leq 2$. Let $e \in E(T)$ and consider the partition $(A_e,\overline{A_e})$ of $V(G)$ induced by $e$. Suppose first that $e$ belongs to $E(T_0)$ or $e = \ell_{0,j}t_j$ for some $j$. Then, for each component $C_i$ of $G$, $V(C_i)$ is fully contained in either $A_e$ or $\overline{A_e}$ and so $\cutmim_{G}(A_{e}, \overline{A_{e}}) = 0$. Suppose now that $e$ belongs to the backbone of $T_i$, for some $i > 0$. Then, it is easy to see that there are at most two edges across the cut $(A_e,\overline{A_e})$, from which $\cutmim_G(A_e,\overline{A_e}) \leq 2$. Suppose finally that $e$ is incident to a leaf of $T$. Then $\cutmim_G(A_e,\overline{A_e}) = 1$. These observations imply that $\mimw_{G}(T, \delta) \leq 2$.
\end{proof}

We can finally provide our two boundedness results. In both proofs, we make repeated implicit use of Ramsey's theorem: there exists a least positive integer $R(r, s)$ for which every graph with at least $R(r, s)$ vertices either contains an independent set of size $r$ or a clique of size $s$. Observe that, for $r,s > 1$, $R(r, s) \geq s$.

\begin{theorem}
\label{33tt1} Let $t\geq 4$ and let $G$ be a $(3P_1,\overline{K_{3,t}+P_1})$-free graph. Then $\mimw(G) < 5R(3,t)+8t+46$ and a branch decomposition $(T, \delta)$ of $G$ with $\mimw_{G}(T, \delta) < 5R(3,t)+8t+46$ can be constructed in $O(n^2)$ time.
\end{theorem}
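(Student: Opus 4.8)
The plan is to exploit $3P_1$-freeness to heavily constrain the independent-set structure of $G$, and then to show that $G$ decomposes into a bounded number of ``pieces'' on which we can control mim-width, assembling a branch decomposition piece by piece via \Cref{boundmim}, \Cref{addleaves} and \Cref{paths}. Since $G$ is $3P_1$-free, $\alpha(G)\le 2$, so either $G$ is a complete graph (mim-width $0$, trivial) or $G$ has at least one non-edge $\{v_1,v_2\}$. Fix such a non-edge. Following the ``common strategy'' described before the statement, I would partition $V(G)$ into the set $\{v_1,v_2\}$ together with the sets of private neighbours $A_1=N(v_1)\setminus N(v_2)$, $A_2=N(v_2)\setminus N(v_1)$, the common neighbourhood $A_{12}=N(v_1)\cap N(v_2)$, and $A_\emptyset = V(G)\setminus(N(v_1)\cup N(v_2)\cup\{v_1,v_2\})$. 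The key point is that $3P_1$-freeness forces $A_\emptyset=\varnothing$ (any vertex there would be nonadjacent to both $v_1$ and $v_2$, giving a $3P_1$), so we only have four classes $\{v_1,v_2\}, A_1, A_2, A_{12}$ to handle; moreover $\alpha(G)\le 2$ means every $A_i$ induces a graph with $\alpha\le 2$, i.e.\ a co-bipartite graph, hence a bounded-degree-like structure is \emph{not} immediate — this is where $\overline{K_{3,t}+P_1}$-freeness must enter.

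Next I would analyse the structure inside and between these classes using $\overline{K_{3,t}+P_1}$-freeness. The graph $\overline{K_{3,t}+P_1}$ is the complement of $K_{3,t}$ plus an isolated vertex: so it consists of a clique $K_3$, a clique $K_t$, complete to each other except for a perfect-matching-like pattern — more precisely its complement is $K_{3,t}$ together with an isolated vertex, meaning in $\overline{K_{3,t}+P_1}$ there is a vertex $z$ adjacent to everything, a clique $K_3$ and a clique $K_t$ with no edges of the bipartite complement missing. Forbidding this induced subgraph, in a $3P_1$-free (hence $\alpha\le 2$) graph, should bound the ``bipartite-like'' interaction between the co-bipartite pieces $A_1, A_2, A_{12}$: concretely, I expect to show that within each $A_i$ the two parts of a co-bipartite partition, and the adjacency pattern to a single external vertex, cannot simultaneously be large, forcing the existence of a bounded set $S$ (of size roughly $R(3,t)+O(t)$) whose removal leaves each $A_i\setminus S$ with a very simple structure — either a clique, or a graph of maximum degree at most $2$ (a union of paths and cycles) in the relevant induced/bipartite sense. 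The explicit constant $5R(3,t)+8t+46$ strongly suggests: five ``buckets'' (the classes plus cross-terms), each contributing a $\cutmim$ term bounded by something like $R(3,t)+O(t)$, plus small additive slack ($46$) coming from the handful of special vertices and the applications of \Cref{paths} (which costs $2$ per low-degree component) and \Cref{addleaves}.

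The branch decomposition is then built as follows: put the bounded ``exceptional'' set $S$ (and $v_1,v_2$) on one side via repeated application of \Cref{addleaves} or by brute-forcing a tiny caterpillar for them (each vertex of $S$ contributes at most its degree, but since $|S|$ is bounded this is a bounded additive term); for each cleaned-up piece $A_i\setminus S$, apply \Cref{paths} when it has maximum degree $\le 2$ to get a sub-branch-decomposition of mim-width $\le 2$, or use the trivial caterpillar when it is a clique (mim-width $0$); then glue these sub-decompositions along a bounded backbone. Finally bound the global mim-width with \Cref{boundmim}: for an edge $e$ of $T$, $\cutmim_G(A_e,\overline{A_e})\le \sum_{i,j}\cutmim_G(A_e\cap V_i, \overline{A_e}\cap V_j)$, and each of the $O(1)$ summands is bounded — the ``diagonal'' terms by the internal structure ($\le 2$ or $\le R(3,t)+O(t)$) and the ``off-diagonal'' bipartite terms between distinct classes again by an $R(3,t)+O(t)$ bound coming from $\overline{K_{3,t}+P_1}$-freeness (a large induced matching across two classes, together with $v_1$ or $v_2$, would reconstruct the forbidden graph). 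Summing the $\le 5$ genuinely-contributing terms and the additive slack gives the claimed bound, and since $S$, $v_1$, $v_2$ and the co-bipartite partitions of the $A_i$ can all be found in $O(n^2)$ time (checking a non-edge, computing neighbourhoods, testing the bounded-size structure), the decomposition is quickly computable.

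The main obstacle I anticipate is the precise structural lemma extracting the bounded exceptional set $S$: one must show that $\overline{K_{3,t}+P_1}$-freeness, \emph{in the presence of} $\alpha(G)\le 2$, simultaneously (i) bounds the size of one side of a co-bipartite partition of each $A_i$ after removing $S$, and (ii) bounds the induced matchings across pairs of classes — and getting the interplay of these two forbidden subgraphs to yield a clean ``max-degree $2$ or clique after bounded deletions'' statement, with constants matching $5R(3,t)+8t+46$, is the delicate part. Everything after that (building and gluing caterpillars, invoking \Cref{boundmim}) is routine bookkeeping.
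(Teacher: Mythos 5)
Your skeleton matches the paper's: fix a non-edge $\{v_a,v_b\}$, partition the remaining vertices into private and common neighbourhoods (with the empty class ruled out by $3P_1$-freeness), assemble a branch decomposition from caterpillars and \Cref{paths}, and sum the cut-mim contributions via \Cref{boundmim}. However, the heart of the argument is missing, in two places. First, you only record that each class has independence number at most $2$ (co-bipartite); the proof needs the stronger elementary fact that $S_a=N(v_a)\setminus N(v_b)$ and $S_b=N(v_b)\setminus N(v_a)$ are \emph{cliques} (two non-adjacent vertices of $S_a$ together with $v_b$ induce $3P_1$). This is what makes the Ramsey arguments below go through. Second, the structural lemma you defer to --- ``remove a bounded exceptional set $S$ so that each piece has maximum degree at most $2$ or is a clique'' --- is not what is needed and is not what the freeness conditions deliver; you correctly flag it as the delicate part, but it is in fact the wrong target, so the proof has a genuine gap where $\overline{K_{3,t}+P_1}$-freeness is actually used.

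What the paper proves instead is a dichotomy on the bipartite graph $G[S_a,S_b]$ with no deletions at all. Either every vertex of $S_a$ has at most two neighbours in $S_b$ and vice versa, in which case $G[S_a,S_b]$ already has maximum degree at most $2$ and \Cref{paths} applies directly; or some $u\in S_a$ has three neighbours $v_1,v_2,v_3\in S_b$, and then one shows that \emph{every} induced matching between $S_a$ and $S_b$ has size less than $4t$ (a matching of size $4t$ would, using that $S_a$ and $S_b$ are cliques, force an induced $\overline{K_{3,t}+P_1}$ built either from $\{v_1,v_2,v_3\}$, $t$ matched vertices of $S_a$ anticomplete to them, and $u$, or from three matched vertices of $S_b$, $t$ neighbours of some $v_i$ in the matching, and $v_i$). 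In the second case an \emph{arbitrary} branch decomposition suffices for the $S_a$--$S_b$ terms; no cleaned-up bounded-degree structure exists or is needed. The remaining cross terms involving $S_{ab}$ are handled by a separate claim you do not formulate: if some vertex is complete to both $S_P$ and $S_Q$ (as $v_a$ is to $S_a$ and $S_{ab}$), then $\cutmim_G(A_e\cap S_P,\overline{A_e}\cap S_Q)<R(3,t)+6$, because $R(3,t)+6$ matched pairs yield a $K_t$ on one side, a $K_3$ on the other, and the common neighbour completes $\overline{K_{3,t}+P_1}$. Without these two claims your plan only indicates where the forbidden subgraph ``should'' enter, rather than deriving the bound.
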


\begin{proof} We assume that $G$ contains two non-adjacent vertices $v_a$ and $v_b$, or else $G$ is a complete graph and the statement is trivially true. Let $S_z = \{v_a,v_b\}$. Since $G$ is $3P_1$-free, all remaining vertices are adjacent to at least one of $v_a$ and $v_b$ and we partition them into three classes $S_a,S_b$ and $S_{ab}$ as follows: $S_a$ is the set of vertices that are adjacent to $v_a$ but not $v_b$, $S_b$ is the set of vertices that are adjacent to $v_b$ but not $v_a$ and $S_{ab}$ is the set of vertices that are adjacent to both $v_a$ and $v_b$. Note that $S_a$ is a clique, or else two non-adjacent vertices in $S_a$ together with $v_b$ would induce a copy of $3P_1$. Similarly, $S_b$ is a clique.

We now proceed to the construction of a branch decomposition of $G$ by distinguishing two cases. We say that $G$ is \textit{good} (w.r.t. $\{v_a, v_b\}$) if every vertex in $S_a$ has at most two neighbours in $S_b$ and every vertex in $S_b$ has at most two neighbours in $S_a$. Otherwise, we say that $G$ is \textit{bad} (w.r.t. $\{v_a, v_b\}$).

Suppose first that $G$ is good. Then, $G[S_a,S_b]$ has maximum degree at most $2$ and, if $G[S_a,S_b]$ contains at least two vertices, \Cref{paths} allows us to construct a branch decomposition $(T_1,\delta_1)$ of $G[S_a,S_b]$ with mim-width at most $2$. Let $u$ be a leaf of $T_1$ and let $e$ be the edge of $T_1$ incident to $u$. We subdivide $e$ by introducing a new vertex $x$ and obtain a new tree $T_1'$. If however $G[S_a,S_b]$ contains exactly one vertex, let $x$ be this vertex. We now let $\ell = |V(G) \setminus (S_a \cup S_b)|$ and consider an $\ell$-caterpillar $T_2$ (notice that $\ell \geq 2$). We subdivide one of the edges of the backbone of $T_2$ by introducing a new vertex $y$ and obtain a new tree $T_2'$. Let $\delta_2$ be any bijection from $V(G) \setminus (S_a \cup S_b)$ to the set of leaves of $T'_2$. We finally add the edge $xy$ in order to obtain a subcubic tree $T$, unless $G[S_a,S_b]$ is the null graph, in which case we let $T = T_2'$. Clearly, the set of leaves $L$ of $T$ is the disjoint union of the set of leaves of $T_1$ and the set of leaves of $T_2$. Considering the map $\delta\colon V(G) \rightarrow L$ which coincides with $\delta_1$ when restricted to $S_a \cup S_b$ and with $\delta_2$ when restricted to $V(G) \setminus (S_a \cup S_b)$, we obtain a branch decomposition $(T, \delta)$ of $G$. If $G$ is bad, we simply let $(T, \delta)$ be any branch decomposition of $G$. 

The branch decomposition $(T, \delta)$ of $G$ defined above can be constructed in $O(n^2)$ time. Indeed, we first find two non-adjacent vertices $v_a$ and $v_b$ in $O(n^2)$ time and check whether $G[S_a, S_b]$ has maximum degree at most $2$ in linear time. If so, $G$ is good and we then construct $(T,\delta)$ in $O(n)$ time thanks to \Cref{paths}. Otherwise, $G$ is bad, and we trivially construct $(T,\delta)$ in linear time.

\begin{claim}
\label{33tl1}
Let $S_P$ and $S_Q$ be subsets of vertices of $G$, not necessarily disjoint. If there exists a vertex that is complete to both $S_P$ and $S_Q$, then $\cutmim_G(A_e \cap S_P, \overline{A_e} \cap S_Q) < R(3,t) + 6$, for any $e \in E(T)$.
\end{claim}

\begin{claimproof}[Proof of \Cref{33tl1}] Let $v \in V(G)$ be complete to $S_P$ and $S_Q$. Suppose, to the contrary, that $\cutmim_G(A_e \cap S_P, \overline{A_e} \cap S_Q) \geq R(3,t) + 6$ for some $e \in E(T)$ and let $\{p_1q_1, \ldots, p_{R(3,t)+6}q_{R(3,t)+6}\}$ be an induced matching witnessing this, where $\{p_1,\ldots,p_{R(3,t)+6}\} \subseteq A_e \cap S_P$ and $\{q_1,\ldots,q_{R(3,t)+6}\} \subseteq \overline{A_e} \cap S_Q$. Since $G$ is $3P_1$-free, $\{q_1,\ldots,q_{R(3,t)}\}$ contains a clique of size at least $t$. Without loss of generality, $\{q_1,\ldots,q_{t}\}$ induces such a clique. Observe now that $\{p_{R(3,t)+1},\ldots,p_{R(3,t)+6}\}$ contains a clique of size $3$, as $R(3,3)=6$. Without loss of generality, $\{p_{R(3,t)+1},p_{R(3,t)+2},p_{R(3,t)+3}\}$ induces such a clique. But then we have that $G[p_{R(3,t)+1},p_{R(3,t)+2},p_{R(3,t)+3},q_1,q_2,\ldots,q_t,v] \cong \overline{K_{3,t}+P_1}$, a contradiction.
\end{claimproof}

\begin{claim}
\label{33tl2} Suppose that $G$ is bad. Then $\cutmim_G(A_e \cap S_a, \overline{A_e} \cap S_b) < 4t$ and $\cutmim_G(A_e \cap S_b, \overline{A_e} \cap S_a) < 4t$, for any $e \in E(T)$.
\end{claim}

\begin{claimproof}[Proof of \Cref{33tl2}] By symmetry, it is enough to show the first statement. Since $G$ is bad, $G[S_a, S_b]$ contains a vertex $u$ of degree at least $3$. Without loss of generality, $u \in S_a$. Suppose, to the contrary, that $\cutmim_G(A_e \cap S_a, \overline{A_e} \cap S_b) \geq 4t$ for some $e \in E(T)$ and let $\{a_1b_1, \ldots, a_{4t}b_{4t}\}$ be an induced matching witnessing this, where $ \{a_1,\ldots,a_{4t}\} \subseteq A_e \cap S_a$ and $\{b_1,\ldots,b_{4t}\} \subseteq \overline{A_e} \cap S_{b}$. Let $v_1,v_2,v_3 \in S_b$ be distinct neighbours of $u \in S_a$. Observe now that all except possibly $t-1$ vertices in $\{a_1,\ldots,a_{4t}\}$ are adjacent to at least one of $v_1,v_2,v_3$, or else there are $t$ vertices in $\{a_1,\ldots,a_{4t}\}$, say without loss of generality $a_1,\ldots,a_t$, non-adjacent to any of $v_1,v_2,v_3$ and so, since $S_a$ and $S_b$ are cliques, $G[v_1,v_2,v_3,a_1,\ldots,a_t,u] \cong \overline{K_{3,t}+P_1}$, a contradiction. Hence, there is a vertex in $\{v_1,v_2,v_3\}$ with at least $t$ neighbours in $\{a_1,\ldots,a_{4t}\}$, say without loss of generality $v_1$ is adjacent to $a_1,\ldots,a_{t}$, and so $G[b_{t+1},b_{t+2},b_{t+3},a_1,\ldots,a_t,v_1] \cong \overline{K_{3,t}+P_1}$, a contradiction.
\end{claimproof}

We can finally show that $\mimw_G(T,\delta) < 5R(3,t)+8t+46$. Let $D = \{a,b,ab,z\}$. Since $S_a, S_b, S_{ab}, S_z$ is a partition of $V(G)$, \Cref{boundmim} implies that $$\mimw_G(T,\delta) \leq \max_{e \in E(T)} \sum_{i,j \in D}\cutmim_G(A_e \cap S_i,\overline{A_e} \cap S_j).$$ It is then enough to estimate the terms in the sum. Since $S_a$ and $S_b$ are cliques, $\cutmim_G(A_e \cap S_a, \overline{A_e} \cap S_a) \leq 1$ and $\cutmim_G(A_e \cap S_b, \overline{A_e} \cap S_b) \leq 1$. Moreover, since $v_a$ is complete to $S_a$ and $S_{ab}$, and $v_b$ is complete to $S_b$ and $S_{ab}$, \Cref{33tl1} implies that $\cutmim_G(A_e \cap S_a, \overline{A_e} \cap S_{ab})$, $\cutmim_G(A_e \cap S_b, \overline{A_e} \cap S_{ab})$, $\cutmim_G(A_e \cap S_{ab}, \overline{A_e} \cap S_{a})$, $\cutmim_G(A_e \cap S_{ab}, \overline{A_e} \cap S_{b}), \cutmim_G(A_e \cap S_{ab}, \overline{A_e} \cap S_{ab}) < R(3,t)+6$. Observe now that, for any $i \in D$, $\cutmim_G(A_e \cap S_z, \overline{A_e} \cap S_{i}) \leq 2$, $\cutmim_G(A_e \cap S_i, \overline{A_e} \cap S_{z}) \leq 2$. 

It remains to bound $\cutmim_G(A_e \cap S_a, \overline{A_e} \cap S_b)$ and $\cutmim_G(A_e \cap S_b, \overline{A_e} \cap S_{a})$. If $G$ is bad then, by \Cref{33tl2}, $\cutmim_G(A_e \cap S_a, \overline{A_e} \cap S_b) < 4t$ and $\cutmim_G(A_e \cap S_b, \overline{A_e} \cap S_a) < 4t$. If $G$ is good, we proceed as follows. Suppose first that either $e = xy$ or $e \in E(T_2')$. Then all vertices of $S_a$ and $S_b$ belong to the same partition class of $V(G)$ induced by $e$ and so $\cutmim_G(A_e \cap S_a, \overline{A_e} \cap S_b) = \cutmim_G(A_e \cap S_b, \overline{A_e} \cap S_a) = 0$. Suppose finally that $e \in E(T_1')$. Then $e$ induces a partition $(A'_e,\overline{A'_e})$ of $S_a \cup S_b$ with respect to $(T_1,\delta_1)$, and $(A'_e,\overline{A'_e})$ coincides with $(A_e,\overline{A_e})$ restricted to $S_a \cup S_b$. Consequently, $\cutmim_G(A_e \cap S_a, \overline{A_e} \cap S_b) = \cutmim_G(A'_e \cap S_a, \overline{A'_e} \cap S_b) \leq 2$ as $\cutmim_G(T_1,\delta_1) \leq 2$. The same holds for $\cutmim_G(A_e \cap S_b, \overline{A_e} \cap S_a)$.

By the previous paragraphs, $\mimw_{G}(T, \delta) < 2\cdot 1 + 5\cdot (R(3,t)+6) + 7\cdot 2 + 2\cdot 4t  = 5R(3,t)+8t+46$.
\end{proof}

\begin{theorem}
\label{42tt1}
Let $t\geq 4$ and let $G$ be a $(4P_1,\overline{K_{2,t}+P_1})$-free graph. Then $\mimw(G) < 43R(4,t)+24t+208$ and a branch decomposition $(T, \delta)$ of $G$ with $\mimw_{G}(T, \delta) < 43R(4,t)+24t+214$ can be computed in $O(n^3)$ time.
\end{theorem}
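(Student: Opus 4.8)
The plan is to imitate the proof of \Cref{33tt1}, now working with three pairwise non-adjacent vertices rather than two. If $G$ has no three pairwise non-adjacent vertices then $G$ is $3P_1$-free, and since $\overline{K_{2,t}+P_1}$ is an induced subgraph of $\overline{K_{3,t}+P_1}$ (delete one of the three mutually adjacent vertices arising from the side of size three), $G$ is then $(3P_1,\overline{K_{3,t}+P_1})$-free, so \Cref{33tt1} applies and yields an even smaller bound together with an $O(n^2)$-time decomposition. So assume $G$ has three pairwise non-adjacent vertices $v_1,v_2,v_3$, which can be located in $O(n^3)$ time by brute force; this search dominates the running time. Set $S_z=\{v_1,v_2,v_3\}$ and, for each non-empty $I\subseteq\{1,2,3\}$, let $S_I$ be the set of vertices of $V(G)\setminus S_z$ whose neighbourhood inside $S_z$ is exactly $\{v_i:i\in I\}$. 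Since $G$ is $4P_1$-free, every vertex of $V(G)\setminus S_z$ lies in exactly one $S_I$, so $\{S_z\}\cup\{S_I:\varnothing\neq I\subseteq\{1,2,3\}\}$ is a partition of $V(G)$ into eight classes. I would record, all immediate from $4P_1$-freeness: each of $S_1,S_2,S_3$ is a clique (two non-adjacent vertices of $S_i$ with the other two $v_j$'s form $4P_1$); for every $i\in I$ the vertex $v_i$ is complete to $S_I$ and lies outside it; and $\alpha(G[S_I])\le 2$ whenever $I\neq\{1,2,3\}$ (three pairwise non-adjacent vertices of $S_I$ with some $v_j$, $j\notin I$, form $4P_1$).

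Next I would prove the analogue of \Cref{33tl1}: if a vertex $v\notin S_P\cup S_Q$ is complete to both $S_P$ and $S_Q$, then $\cutmim_G(A_e\cap S_P,\overline{A_e}\cap S_Q)<R(4,t)+4$ for every $e\in E(T)$. The argument is as for \Cref{33tl1}: from a long induced matching between $A_e\cap S_P$ and $\overline{A_e}\cap S_Q$ one extracts, using $4P_1$-freeness, a clique of size $t$ among $R(4,t)$ of the $\overline{A_e}$-endpoints and, among four of the remaining $A_e$-endpoints, an edge (here $R(4,2)=4$); together with $v$ this induces a copy of $\overline{K_{2,t}+P_1}$ (a disjoint $K_2\cup K_t$ plus a dominating vertex), a contradiction. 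Taking $v=v_i$ for any $i\in I\cap J$ bounds $\cutmim_G(A_e\cap S_I,\overline{A_e}\cap S_J)$ for every pair $(S_I,S_J)$ with $I\cap J\neq\varnothing$; together with the clique bound $\le 1$ for $(S_i,S_i)$ and the trivial bound $\le 3$ for the fifteen pairs meeting $S_z$, the only pairs still requiring work are the ``hard'' ones with $I\cap J=\varnothing$, namely $(S_i,S_j)$ with $i\neq j$ and $(S_i,S_{jk})$ with $\{i,j,k\}=\{1,2,3\}$, and their reverses.

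To handle the hard pairs I would use a good/bad dichotomy. Let $\Gamma$ be the graph on $S_1\cup\cdots\cup S_{23}$ whose edges are exactly the edges of $G$ running between two classes forming a hard pair, and call $G$ \emph{good} if $\Gamma$ has maximum degree at most $2$. In the good case $\Gamma$ is a disjoint union of paths and cycles, so \Cref{paths} yields a branch decomposition of $\Gamma$ of mim-width at most $2$, which I would extend to a branch decomposition of $G$ by attaching, via a single edge, a caterpillar carrying $S_{123}$ and $S_z$ (surgery in the spirit of \Cref{addleaves}). The key point is that every edge of $G$ between two classes of a hard pair is an edge of $\Gamma$, so for any edge $e$ and any hard pair $(S_I,S_J)$ an induced matching in $G[A_e\cap S_I,\overline{A_e}\cap S_J]$ is again an induced matching in $\Gamma[A_e,\overline{A_e}]$, hence has size at most $2$ when $e$ lies in the $\Gamma$-part and size $0$ otherwise. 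In the bad case $\Gamma$ has a vertex of degree at least $3$; mimicking \Cref{33tl2} but invoking $\overline{K_{2,t}+P_1}$-freeness, the cliqueness of the $S_i$ and $\alpha(G[S_I])\le 2$ for $|I|\le 2$, one shows that along \emph{any} branch decomposition every hard-pair cut is bounded by an expression of the form $O(t)$ (for singleton–singleton pairs, as in \Cref{33tl2}) or $R(4,t)+O(t)$ (for the pairs $(S_i,S_{jk})$, whose non-clique side calls for one more application of Ramsey). Finally, writing $D$ for the set of eight classes, \Cref{boundmim} gives $\mimw_G(T,\delta)\le\max_e\sum_{I,J\in D}\cutmim_G(A_e\cap S_I,\overline{A_e}\cap S_J)$; substituting $\le 1$ for the three clique diagonals, $<R(4,t)+4$ for the remaining intersecting pairs, $\le 3$ for the fifteen pairs meeting $S_z$, and the hard-pair bounds above, the sum collapses to the stated $43R(4,t)+24t+208$, with a slack of at most $6$ in the decomposition that is actually output.

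The place where this genuinely exceeds the difficulty of \Cref{33tt1} is that there are now six hard pairs instead of one and they overlap — $S_1$ appears in $(S_1,S_2)$, $(S_1,S_3)$ and $(S_1,S_{23})$ — so the good construction cannot be carried out by three independent uses of \Cref{paths}; one must route everything through the single auxiliary graph $\Gamma$ and pick the notion of goodness so that one degree bound on $\Gamma$ controls all hard-pair cuts simultaneously, while checking that grafting on $S_{123}$ and $S_z$ introduces no large cut. Dually, in the bad case the crux is to exploit a single high-degree vertex of $\Gamma$ (whose neighbours may be spread across several classes) to force, through the forbidden subgraph, a bound on all hard-pair cuts at once; pinning down where the Ramsey numbers land, and hence matching the final constant in the statement, is the bulk of the remaining bookkeeping.
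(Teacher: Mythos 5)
Your setup --- the reduction to three pairwise non-adjacent vertices, the partition into the eight private-neighbourhood classes, the Ramsey argument bounding every pair $(S_I,S_J)$ with $I\cap J\neq\varnothing$ by $R(4,t)+4$, and the trivial bounds for pairs meeting $S_z$ --- coincides with the paper's, and your good case, routing all hard-pair edges through a single auxiliary graph $\Gamma$ of maximum degree at most $2$ and invoking \Cref{paths}, is sound. The gap is the bad case. You declare $G$ bad as soon as $\Gamma$ has one vertex of degree at least $3$, take an arbitrary branch decomposition, and assert that this single high-degree vertex forces a bound on \emph{every} hard-pair cut. But the badness witness is local to one pair of classes, while the cut you must bound may live in a completely different pair: a vertex $u\in S_1$ with three neighbours in $S_{23}$ says nothing about, say, $G[S_2,S_{13}]$ or $G[S_1,S_3]$, and the latter can in principle be a perfect matching between two large cliques (a configuration compatible with $(4P_1,\overline{K_{2,t}+P_1})$-freeness), whose cut under an arbitrary branch decomposition is unbounded. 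The argument of \Cref{33tl2} that you propose to mimic works precisely because there the degree-$3$ vertex sits between the same two classes whose cut is being bounded; you acknowledge that the neighbours of your witness ``may be spread across several classes'' and that closing this is ``the bulk of the remaining bookkeeping'', but that is exactly the step that fails as stated.

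The paper avoids a global bad case altogether by making the dichotomy local. For a singleton--singleton pair it proves (\Cref{42tl2}) that if any vertex of $S_p$ has two neighbours in $S_q$ then $S_q$ is $3t$-almost-complete to $S_p$, whence the cut is at most $3t+1$ in \emph{every} branch decomposition (\Cref{obs}); otherwise the bipartite graph between the two classes has maximum degree $1$ and its edges go into the auxiliary graph. For a pair $(S_x,S_Y)$ with $|Y|=2$ it splits $S_Y$ vertex by vertex into $S_x$-good vertices (at most one neighbour in $S_x$, attached as leaves via \Cref{addleaves}) and $S_x$-bad vertices, whose cut against $S_x$ is bounded by $R(4,t)+t+1$ in every decomposition (\Cref{42tl4}). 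In each case the ``dense'' branch of the dichotomy is bounded unconditionally, so the constructed decomposition is always the good one and no arbitrary fallback decomposition is ever needed. Your proof can be repaired by adopting this per-pair (and, for the $(S_x,S_Y)$ pairs, per-vertex) dichotomy in place of the global one; as written, the bad case is a genuine gap.
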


\begin{proof} We assume that $G$ contains three pairwise non-adjacent vertices $v_a$, $v_b$ and $v_c$, or else $G$ is $3P_1$-free and the statement follows from \Cref{33tt1}. Since $G$ is $4P_1$-free, all remaining vertices are adjacent to at least one of $v_a$, $v_b$ and $v_c$. For a subset $\alpha \subseteq \{a,b,c\}$, let $S_\alpha = \cap_{i\in \alpha}N(v_i) \setminus \cup_{j\in \{a,b,c\} \setminus \alpha}N(v_j)$. In words, $S_{\alpha}$ is the set of private neighbours of $\{v_i : i \in \alpha\}$ with respect to $\{v_a,v_b,v_c\}$. Note that $S_a$, $S_b$ and $S_c$ are cliques, or else, for distinct $i,j,k \in \{a, b, c\}$, two non-adjacent vertices in $S_i$ together with $v_j$ and $v_k$ would induce a copy of $4P_1$. This fact will be repeatedly used in the claims below. For $\alpha,\beta \subseteq \{a,b,c\}$ and an integer $s \geq 1$, we say that the vertex set $S_\alpha$ is \textit{$3s$-almost-complete} to the vertex set $S_\beta$ if there are at most two vertices in $S_\alpha$ non-adjacent to at least $3s$ vertices in $S_\beta$. 

\begin{claim}
\label{42tl2}
Let $p, q \in \{a,b,c\}$ with $p \neq q$. If a vertex in $S_p$ is adjacent to at least two vertices in $S_q$, then $S_q$ is $3t$-almost-complete to $S_p$. 
\end{claim}

\begin{claimproof}[Proof of \Cref{42tl2}] Note that $v_p$ is complete to $S_p$ but anticomplete to $S_q$ and $v_q$ is complete to $S_q$ but anticomplete to $S_p$. Suppose that $x \in S_p$ is adjacent to two distinct vertices $y_1$ and $y_2$ of $S_q$. Then $\{y_1, y_2\} \cap \{v_q\} = \varnothing$. 

We claim that there are at most $t-1$ vertices in $S_p$ anticomplete to $\{y_1,y_2\}$. Indeed, if there are $t$ vertices in $S_p$ anticomplete to $\{y_1,y_2\}$, then these $t$ vertices together with $\{x,y_1,y_2\}$ induce a copy of $\overline{K_{2,t}+P_1}$, as $S_p$ and $S_q$ are cliques, a contradiction. 

Let now $y \in S_q$ be a vertex distinct from $y_1$ and $y_2$. We claim that $y$ is anticomplete to at most $t-1$ vertices in $S_p \cap N(y_i)$, for each $i \in \{1,2\}$. Indeed, if there are $t$ vertices in $S_p\cap N(y_i)$ anticomplete to $y$, then these $t$ vertices together with $\{y_i,v_q,y\}$ induce a copy of $\overline{K_{2,t}+P_1}$, a contradiction. 

Let $A_1 = S_p \cap N(y_1)$, $A_2 = S_p \cap N(y_2)$ and let $y \in S_q$ be a vertex distinct from $y_1$ and $y_2$. Clearly, $S_p = A_1 \cup A_2 \cup (S_p \setminus (A_1 \cup A_2))$. By the second paragraph, $|S_p \setminus (A_1 \cup A_2)| \leq t-1$ and so $y$ is anticomplete to at most $t-1$ vertices in $S_p \setminus (A_1 \cup A_2)$. By the third paragraph, $y$ is anticomplete to at most $t-1$ vertices in $A_1$ and at most $t-1$ vertices in $A_2$. Therefore, $y$ is anticomplete to at most $3(t-1) < 3t$ vertices in $S_p$ and so $S_q$ is $3t$-almost-complete to $S_p$. 
\end{claimproof}

We now proceed to the construction of a branch decomposition of $G$. Consider first the graph $G_1$ with vertex set $V(G_1) =  S_a \cup S_b \cup S_c$ and edge set $E(G_1) = \{uv:uv \in E(G),u \in S_\alpha,v\in S_\beta, \alpha,\beta \in \{a,b,c\},\alpha \neq \beta, S_\alpha$ is not $3t$-almost-complete to $S_\beta, S_\beta$ is not $3t$-almost complete to $S_\alpha\}$. We claim that each vertex $v$ of $G_1$ has degree at most $2$. By symmetry, suppose that $v \in S_a$. By definition of $G_1$, $v$ has no neighbours in $S_a$. If $S_b$ is $3t$-almost-complete to $S_a$, then $v$ has no neighbours in $S_b$. Otherwise, $S_b$ is not $3t$-almost-complete to $S_a$ and, by \Cref{42tl2}, $v$ has at most one neighbour in $S_b$. Similarly, $v$ has at most one neighbour in $S_c$. Therefore, $G_1$ has maximum degree at most $2$ and so, by \Cref{paths}, if $G_1$ contains at least two vertices, then we can construct in $O(n)$ time a branch decomposition $(T_1, \delta_1)$ of $G_1$ with $\mimw_{G_1}(T_1, \delta_1) \leq 2$. 

For $x \in \{a,b,c\}$ and $Y = \{a,b,c\} \setminus \{x\}$, a vertex $v \in S_Y$ is $S_x$-\textit{good} if it has at most one neighbour in $S_x$, and $S_x$-\textit{bad} otherwise. Let $S^{\star}_Y$ be the set of vertices in $S_Y$ that are $S_x$-bad.
We now build a graph $G_2$ as follows. Start with $G_2 = G_1$. For each $x \in \{a,b,c\}$, let $Y = \{a,b,c\} \setminus \{x\}$. For each vertex $v \in S_Y$, if $v$ is $S_x$-good, then add $v$ to $V(G_2)$ and, if $v$ has a neighbour $u$ in $S_x$, add $uv$ to $E(G_2)$. In other words, we grow $G_1$ by adding leaf vertices or isolated vertices. 

Now, if $G_2$ is the null graph, let $T_2'$ be the null tree, and if $G_2$ consists of one vertex, let $T_2'$ be the tree with a single vertex $r$. Otherwise, $G_2$ contains at least two vertices and, given $(T_1, \delta_1)$,   we can construct a branch decomposition $(T_2, \delta_2)$ of $G_2$ with $\mimw_{G_2}(T_2, \delta_2) \leq 2$ in $O(n)$ time thanks to \Cref{addleaves}, unless $G_1$ contains at most one vertex, in which case $G_2$ has maximum degree at most $1$ and we let $(T_2, \delta_2)$ be any branch decomposition of $G_2$. We then subdivide one of the edges of $T_2$ by introducing a new vertex $r$ to obtain a new tree $T'_2$. Clearly, $\mimw_{G_2}(T'_2, \delta_2) = \mimw_{G_2}(T_2, \delta_2) \leq 2$. Let now $\ell = |V(G) \setminus V(G_2)|$ and consider an $\ell$-caterpillar $T_3$ (notice that $\ell \geq 3$). Let $\delta_3$ be any bijection from $V(G) \setminus V(G_2)$ to the set of leaves of $T_3$. We subdivide one of the edges of the backbone of $T_3$ by introducing a new vertex $s$ and obtain a new tree $T'_3$. We finally add the edge $rs$ in order to obtain a tree $T$. Observe that the set of leaves $L$ of $T$ is the disjoint union of the set of leaves $L_2$ of $T'_2$ and the set of leaves $L_3$ of $T'_3$. Considering the map $\delta$ which coincides with $\delta_i$ when restricted to $L_i$ (for $i = 2,3$), we obtain a branch decomposition $(T, \delta)$ of $G$. 

We now analyse the running time to construct $(T,\delta)$. Finding three pairwise non-adjacent vertices $v_a$, $v_b$ and $v_c$ and computing $S_\alpha$ for each $\alpha \subseteq \{a,b,c\}$ can be done in $O(n^3)$ time. Checking for $3t$-almost-completeness and constructing $G_1$ can be done in $O(n)$ time. Finding the $S_x$-good vertices and constructing $G_2$ can be done in $O(n)$ time. Therefore, constructing $(T,\delta)$ can be done in $O(n^3)$ time.

\begin{claim}\label{obs} Let $\alpha,\beta \subseteq \{a,b,c\}$. If $S_\alpha$ is $3t$-almost-complete to $S_\beta$, then $\cutmim_G(A_e \cap S_\alpha, \overline{A_e} \cap S_\beta) < 3t+1$ and $\cutmim_G(A_e \cap S_\beta, \overline{A_e} \cap S_\alpha) < 3t+1$, for any $e \in E(T)$.
\end{claim}

\begin{claimproof}[Proof of \Cref{obs}] Suppose that there exist $V_\alpha \subseteq A_e \cap S_\alpha$ and $V_\beta \subseteq \overline{A_e} \cap S_\beta$ such that $G[V_\alpha, V_\beta] \cong (3t+1)P_2$. Then, each of the $3t+1$ vertices in $V_\alpha$ is non-adjacent to at least $3t$ vertices in $V_{\beta}$, contradicting the fact that $S_\alpha$ is $3t$-almost-complete to $S_\beta$. The proof of the other inequality is similar.
\end{claimproof}

\begin{claim}
\label{42tl4}
Let $x \in \{a,b,c\}$ and $Y = \{a,b,c\} \setminus \{x\}$. Then $\cutmim_G(A_e \cap S_x, \overline{A_e} \cap S^{\star}_Y) < R(4,t)+t+1$ and $\cutmim_G(A_e \cap S^{\star}_Y, \overline{A_e} \cap S_x) < R(4,t)+t+1$, for any $e \in E(T)$.
\end{claim}

\begin{claimproof}[Proof of \Cref{42tl4}] We show the first inequality, the proof of the other being similar. Suppose, to the contrary, that there exists $e\in E(T)$ such that $\cutmim_G(A_e \cap S_x, \overline{A_e} \cap S^{\star}_Y) \geq R(4,t)+t+1$. Let $\{p_1q_1, \ldots, p_{R(4,t)+t+1}q_{R(4,t)+t+1}\}$ be an induced matching witnessing this, where $P = \{p_1,\ldots,p_{R(4,t)+t+1}\} \subseteq S_x$ and $Q = \{q_1,\ldots,q_{R(4,t)+t+1}\} \subseteq S^{\star}_Y$. Since $q_1$ is $S_x$-bad, let $u_1 \in S_x$ be one of its neighbours distinct from $p_1$. Suppose that $q_1$ has at least $R(4,t)$ neighbours in $Q$. Then, at least $t$ of these neighbours induce a clique. Without loss of generality, suppose that $\{q_2,\ldots,q_{t+1}\}$ are neighbours of $q_1$ inducing a clique. If $\{q_2,\ldots,q_{t+1}\}$ is anticomplete to $u_1$, then these $t$ vertices together with $\{q_1,p_1,u_1\}$ induce a copy of $\overline{K_{2,t}+P_1}$, a contradiction. Hence, $u_1$ has at least one neighbour in $\{q_2,\ldots,q_{t+1}\}$, say without loss of generality $q_2$. But then, $G[q_1,q_2,p_3,\ldots,p_{t+2},u_1] \cong \overline{K_{2,t}+P_1}$, a contradiction. 

Hence, $q_1$ has less than $R(4,t)$ neighbours in $Q$. Without loss of generality, suppose that $q_{R(4,t)+1},\ldots,q_{R(4,t)+t+1}$ are non-neighbours of $q_1$. Then, these $t+1$ vertices form a clique, or else two non-adjacent vertices $v$ and $v'$ among them would give $G[v,v',q_1,v_x] \cong 4P_1$, a contradiction. Next, since $q_{R(4,t)+1}$ is $S_x$-bad, it has another neighbour $u_2 \in S_x$ distinct from $p_{R(4,t)+1}$. Suppose that $\{q_{R(4,t)+2},\ldots,q_{R(4,t)+t+1}\}$ is anticomplete to $u_2$. Then, we have that $G[p_{R(4,t)+1},u_2,q_{R(4,t)+2},\ldots,q_{R(4,t)+t+1},q_{R(4,t)+1}] \cong \overline{K_{2,t}+P_1}$, a contradiction. Therefore, $u_2$ has at least one neighbour in $\{q_{R(4,t)+2},\ldots,q_{R(4,t)+t+1}\}$, say without loss of generality $q_{R(4,t)+2}$. Then, $G[q_{R(4,t)+1},q_{R(4,t)+2},p_1,\ldots,p_t,u_2] \cong \overline{K_{2,t}+P_1}$, a contradiction.
\end{claimproof}

\begin{claim}
\label{42tl1}
Let $\alpha, \beta \subseteq \{a,b,c\}$ with $\alpha \cap \beta \neq \varnothing$. Then $\cutmim_G(A_e \cap S_\alpha, \overline{A_e} \cap S_\beta) < R(4,t)+4$, for any $e \in E(T)$.
\end{claim}

\begin{claimproof}[Proof of \Cref{42tl1}] Let $i\in \alpha \cap \beta$. Then $v_i$ is complete to $S_\alpha$ and $S_\beta$. Suppose, to the contrary, that there exists $e \in E(T)$ such that $\cutmim_G(A_e \cap S_\alpha, \overline{A_e} \cap S_\beta) \geq R(4,t)+4$. Let $\{p_1q_1, \ldots, p_{R(4,t)+4}q_{R(4,t)+4}\}$ be an induced matching witnessing this, where $P = \{p_1,\ldots,p_{R(4,t)+4}\} \subseteq S_\alpha$ and $Q = \{q_1,\ldots,q_{R(4,t)+4}\} \subseteq S_\beta$. Since $G$ is $4P_1$-free, $Q$ contains a clique of size at least $t$. Without loss of generality, suppose that $\{q_1,\ldots,q_{t}\}$ induces a clique. Observe now that $\{p_{R(4,t)+1},\ldots,p_{R(4,t)+4}\}$ contains a pair of adjacent vertices, as $G$ is $4P_1$-free. Without loss of generality, suppose that $p_{R(4,t)+1}$ is adjacent to $p_{R(4,t)+2}$. But then, $G[p_{R(4,t)+1},p_{R(4,t)+2},q_1,q_2,\ldots,q_t,v_i] \cong \overline{K_{2,t}+P_1}$, a contradiction.
\end{claimproof}

We can finally show that $\mimw_G(T,\delta) < 43R(4,t)+24t+214$. Let $S_z = \{v_a,v_b,v_c\}$ and let $D = \{\{a\},\{b\},\{c\},\{a,b\},\{b,c\},\{a,c\},\{a,b,c\},\{z\}\}$. Since $\{S_{\alpha} : \alpha \in D\}$ is a partition of $V(G)$, \Cref{boundmim} implies that 
\begin{equation}\label{ubmimw}
\mimw_G(T,\delta) \leq \max_{e \in E(T)} \sum_{\alpha,\beta \in D}\cutmim_G(A_e \cap S_\alpha,\overline{A_e} \cap S_\beta).
\end{equation}

For any $\alpha \in D$, we have that $\cutmim_G(A_e \cap S_z,\overline{A_e} \cap S_\alpha) \leq 3$ and $\cutmim_G(A_e \cap S_\alpha,\overline{A_e} \cap S_z) \leq 3$, for any $e \in E(T)$. For $\alpha,\beta \neq \{z\}$, there are 49 distinct pairs $(\alpha,\beta)$. 12 of such pairs are such that $\alpha \cap \beta = \varnothing$: $(\{a\}.\{b\}),(\{b\},\{c\}),(\{a\},\{c\}),(\{a\},\{b,c\}),(\{b\},\{a,c\}),(\{c\},\{a,b\})$ and those obtained by swapping $\alpha$ and $\beta$. The remaining 37 pairs are such that $\alpha \cap \beta \neq \varnothing$. In this case, by \Cref{42tl1}, $\cutmim_G(A_e \cap S_\alpha, \overline{A_e} \cap S_\beta) \leq R(4,t)+4$, for any $e \in E(T)$.
 
We now estimate the terms in the sum above corresponding to pairs $(\alpha, \beta)$ such that $\alpha \cap \beta = \varnothing$. Suppose first that $(\alpha,\beta)$ is one of $(\{a\},\{b\}),(\{b\},\{c\}),(\{a\},\{c\}), (\{b\},\{a\}),(\{c\},\{b\}),(\{c\},\{a\})$. If $S_\alpha$ is $3t$-almost-complete to $S_\beta$ or $S_\beta$ is $3t$-almost-complete to $S_\alpha$ then, by \Cref{obs}, $\cutmim_G(A_e \cap S_\alpha, \overline{A_e} \cap S_\beta) < 3t+1$ and $\cutmim_G(A_e \cap S_\beta, \overline{A_e} \cap S_\alpha) < 3t+1$. Otherwise, $S_\alpha$ is not $3t$-almost-complete to $S_\beta$ and $S_\beta$ is not $3t$-almost-complete to $S_\alpha$. By definition of $G_1$ and $G_2$, this implies that $G[S_\alpha,S_\beta] = G_1[S_\alpha,S_\beta] = G_2[S_\alpha,S_\beta]$. If either $e = rs$ or $e$ belongs to $T_3'$, then all vertices of $S_\alpha$ and $S_\beta$ belong to the same partition class of $V(G)$ induced by $e$ and so $\cutmim_G(A_e \cap S_\alpha, \overline{A_e} \cap S_\beta) = \cutmim_G(A_e \cap S_\beta, \overline{A_e} \cap S_\alpha) = 0$. Otherwise, $e$ must belong to $T_2'$. The edge $e$ then induces a partition $(A'_e,\overline{A'_e})$ of the vertices of $G_2$ with respect to $(T'_2,\delta_2)$, and $(A'_e,\overline{A'_e})$ coincides with $(A_e,\overline{A_e})$ restricted to $S_\alpha \cup S_\beta$. Hence, $\cutmim_G(A_e \cap S_\alpha, \overline{A_e} \cap S_\beta) = \cutmim_{G_2}(A'_e \cap S_\alpha, \overline{A'_e} \cap S_\beta) \leq 2$.
 
Suppose finally that $(\alpha, \beta)$ is one of $(\{a\},\{b,c\})$, $(\{b\},\{a,c\})$, $(\{c\},\{a,b\})$, $(\{b,c\},\{a\})$, \linebreak $(\{a,c\},\{b\})$, $(\{a,b\},\{c\})$. Clearly, $\cutmim_G(A_e \cap S_\alpha, \overline{A_e} \cap S_\beta) \leq \cutmim_G(A_e \cap S_\alpha, \overline{A_e} \cap S^{\star}_\beta)+ \cutmim_G(A_e \cap S_\alpha, \overline{A_e} \cap (S_\beta \setminus S^{\star}_\beta))$. Note that $G[S_\alpha,S_\beta \setminus S^{\star}_\beta] = G_2[S_\alpha,S_\beta \setminus S^{\star}_\beta]$. Thus, by the same reasoning as in the previous paragraph, $\cutmim_G(A_e \cap S_\alpha, \overline{A_e} \cap (S_\beta \setminus S^{\star}_\beta)) \leq 2$. On the other hand, by \Cref{42tl4}, $\cutmim_G(A_e \cap S_\alpha, \overline{A_e} \cap  S^{\star}_\beta) \leq R(4,t)+t+1$. Therefore, $\cutmim_G(A_e \cap S_\alpha, \overline{A_e} \cap S_\beta) \leq R(4,t)+t+3$. 
 
Combining these bounds with \labelcref{ubmimw}, we obtain $\mimw_G(T, \delta) < 14\cdot 3 + 37\cdot (R(4,t)+4) + 6\cdot (3t+1) +6\cdot (R(4,t)+t+3)= 43R(4,t)+24t+214$.
\end{proof}

\subsection{Unboundedness results}\label{unboundededgeless}

All the unboundedness results of this section are obtained by applying the same strategy. The class of walls plays a crucial role. A \emph{wall of height $h$ and width $r$} (an \emph{$(h \times r)$-wall} for short) is the graph obtained from the grid of height~$h$ and width~$2r$ as follows. Let $C_1, \dots, C_{2r}$ be the set of vertices in each of the $2r$ columns of the grid, in their natural left-to-right order. For each column $C_{j}$, let $e_{1}^{j}, e_{2}^{j}, \dots, e_{h-1}^{j}$ be the edges between two vertices of $C_j$, in their natural top-to-bottom order. If $j$ is odd, we delete all edges $e_{i}^{j}$ with $i$ even. If $j$ is even, we delete all edges $e_{i}^{j}$ with $i$ odd. We then remove all vertices of the resulting graph whose degree is~$1$. This final graph is an \textit{elementary $(h \times r)$-wall} (see \Cref{fig:walls}). We denote by $\mathcal{W}$ the class of all elementary $2n \times 2n$ walls, for $n \geq 1$.

\begin{theorem}[\citet{BHMPP22}]\label{walls} Let $W$ be an elementary $n\times n$ wall with $n \geq 7$. Then $\mimw(W) \geq \frac{\sqrt{n}}{50}$. Hence, $\mathcal{W}$ has unbounded mim-width.
\end{theorem}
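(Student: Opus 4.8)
The plan is the following. Since $\mimw(W)$ is the minimum of $\mimw_W(T,\delta)$ over all branch decompositions of $W$, it suffices to fix an arbitrary branch decomposition $(T,\delta)$ of $W$ and show that $\mimw_W(T,\delta)\ge \sqrt n/50$. For this I would first extract from $(T,\delta)$ a single edge inducing a \emph{balanced} bipartition of $V(W)$, and then prove the purely structural statement that \emph{every} balanced bipartition of an elementary $n\times n$ wall is crossed by an induced matching of size $\Omega(n)$, which for $n\ge 7$ comfortably exceeds $\sqrt n/50$.

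\emph{Step 1 (a balanced edge).} I would invoke the standard fact about subcubic trees that $T$ has an edge $e$ with $\tfrac13|V(W)|\le |A_e|\le \tfrac23|V(W)|$: orient every edge of $T$ towards the side of $T-e$ containing strictly more leaves and take a sink of the resulting orientation, noting that a node of degree three cannot have all three incident edges oriented outwards. Writing $(A,\overline A)=(A_e,\overline{A_e})$ and recalling $\cutmim_W(A,\overline A)\le \mimw_W(T,\delta)$, everything reduces to lower-bounding $\cutmim_W(A,\overline A)$ for this balanced bipartition.

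\emph{Step 2 (many bichromatic paths).} Here I would exploit the grid-like structure of the wall: $W$ contains a family $\mathcal R$ of $\Theta(n)$ pairwise vertex-disjoint induced paths (the ``rows'') and a family $\mathcal C$ of $\Theta(n)$ pairwise vertex-disjoint induced paths (the ``vertical zigzag paths''), such that every member of $\mathcal R$ meets every member of $\mathcal C$, and any two members of the same family that are non-consecutive in the natural order are anticomplete in $W$. Call such a path \emph{bichromatic} if it meets both $A$ and $\overline A$. A monochromatic-$A$ row and a monochromatic-$\overline A$ column would share a vertex, which is absurd; hence either no row is monochromatic-$A$ or no column is monochromatic-$\overline A$. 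In the former case every vertex of $A$ lies on a bichromatic row, so $\mathcal R$ has at least $|A|/(\text{row length})=\Omega(n)$ bichromatic members (using $|V(W)|=\Theta(n^2)$ and $|A|\ge|V(W)|/3$); in the latter case, by the symmetric count applied to $\overline A$, $\mathcal C$ has $\Omega(n)$ bichromatic members. Either way, one of the two families contains $\Omega(n)$ bichromatic paths.

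\emph{Step 3 (extracting an induced matching).} Assume $\mathcal R$ has $\Omega(n)$ bichromatic members (the other case is symmetric). Each such path, being an induced path meeting both sides, contains an edge crossing $(A,\overline A)$. Selecting every second bichromatic row and one crossing edge from each yields $\Omega(n)$ edges lying on pairwise non-consecutive, hence pairwise anticomplete and vertex-disjoint, rows; these edges therefore form an induced matching of $W$, and hence also of $W[A,\overline A]$. Tracking the (very generous) constants — row length $\le 2n$, $|V(W)|\ge n^2$ for $n\ge 7$, $|A|\ge|V(W)|/3$ — gives $\cutmim_W(A,\overline A)\ge n/12\ge\sqrt n/50$, proving the bound; in fact the argument yields $\mimw(W)=\Omega(n)$. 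Unboundedness of $\mathcal W$ is then immediate, an elementary $2m\times 2m$ wall being an elementary $n\times n$ wall with $n=2m\to\infty$. The step I expect to require actual care is Step 2, specifically the structural claim: one must verify that the elementary wall really carries two ``orthogonal'' families of $\Omega(n)$ disjoint induced paths with the stated meeting and anticompleteness behaviour — the vertical family is the only mildly delicate part, because of the deleted alternating vertical edges and the degree-$1$ vertices removed when forming the elementary wall — and then run the isoperimetric-style count turning a balanced bipartition into $\Omega(n)$ bichromatic paths; the gap between this $\Omega(n)$ and the stated $\sqrt n/50$ is exactly what absorbs the constant-factor losses, so no optimisation is needed. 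Steps 1 and 3 are routine.
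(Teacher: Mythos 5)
The paper does not prove this statement: it is imported verbatim from Brettell, Horsfield, Munaro, Paesani and Paulusma~\citep{BHMPP22} and used as a black box, so there is no in-paper proof to compare yours against. Judged on its own terms, your argument is sound. Step~1 is the standard balanced-edge lemma for subcubic trees (the sink of your orientation has degree~$3$, and the largest of the three leaf-parts it separates off has between $|V(W)|/3$ and $|V(W)|/2$ vertices). Step~2 is correct once one checks the wall structure you flag: the rows partition $V(W)$ into induced paths with non-consecutive rows anticomplete, and pairing grid-columns $(2i-1,2i)$ yields vertex-disjoint induced zigzag paths that also partition $V(W)$, meet every row, and are anticomplete when non-consecutive (the alternating deletion pattern is exactly what makes the zigzag both connected and chordless, and the removal of the few degree-$1$ corner vertices only trims endpoints). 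The ``either no row is monochromatic-$A$ or no column is monochromatic-$\overline{A}$'' dichotomy and the covering count $|A|/(2n)=\Omega(n)$ then give $\Omega(n)$ bichromatic paths in one family, and Step~3's selection of every second such path produces a genuinely induced matching of crossing edges of size at least $n/12$, which dominates $\sqrt{n}/50$ for every $n$.

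Two remarks. First, your argument, if written out, establishes the stronger linear bound $\mimw(W)=\Omega(n)$, of which the stated $\sqrt{n}/50$ is a weak corollary; this is consistent with the known linear mim-width lower bounds for grid-like graphs, so the strengthening is not suspicious, but you should state clearly that you are proving more than asked. Second, the one place where care is genuinely needed is the one you identified: the verification that the vertical family consists of \emph{induced} paths covering all of $V(W)$ with the stated anticompleteness, since an erroneous choice of zigzag (e.g.\ pairing columns $(2i,2i+1)$ instead of $(2i-1,2i)$) would create chords via the retained vertical edges. With the correct pairing this works, and the rest of the proof is routine.
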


The idea is to start from an elementary wall, find an appropriate vertex colouring, and repeatedly apply the following result (the case $k =2$ was first proved in \citep{Men17}).

\begin{lemma}[\citet{BHMPP22}]\label{bipcompl} Let $G$ be a $k$-partite graph with partition classes $V_1,\ldots,V_k$ and let $G'$ be a
graph obtained from $G$ by adding edges where, for each added edge, there exists some $i$ such that both endpoints are in $V_i$. Then $\mimw(G') \geq \frac{1}{k}\cdot\mimw(G)$.
\end{lemma}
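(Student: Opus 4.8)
The plan is to prove the equivalent statement $\mimw(G)\le k\cdot\mimw(G')$ by transferring a branch decomposition unchanged. Since $G$ and $G'$ have the same vertex set, every branch decomposition $(T,\delta)$ of $G$ is literally a branch decomposition of $G'$, and each edge $e\in E(T)$ induces the same partition $(A_e,\overline{A_e})$ of the common vertex set for both graphs. It therefore suffices to show, for every such $(T,\delta)$ and every $e\in E(T)$, that $\cutmim_G(A_e,\overline{A_e})\le k\cdot\cutmim_{G'}(A_e,\overline{A_e})$; taking the maximum over $e$ then gives $\mimw_{G'}(T,\delta)\ge \tfrac1k\,\mimw_G(T,\delta)\ge \tfrac1k\,\mimw(G)$, and minimising over $(T,\delta)$ yields the claim.

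First I would fix $e$ and let $M$ be a maximum induced matching of $G[A_e,\overline{A_e}]$, so $|M|=\cutmim_G(A_e,\overline{A_e})$; I write each edge of $M$ as $xy$ with $x\in A_e$ and $y\in\overline{A_e}$ (each edge of $G[A_e,\overline{A_e}]$ has exactly one end on each side). Because $G$ is $k$-partite, the end $x$ lies in a unique class $V_{i}$; colour the edge $xy$ by that index $i\in\{1,\dots,k\}$. By pigeonhole some colour class $M_i\subseteq M$ has $|M_i|\ge |M|/k$, and every edge of $M_i$ has its $A_e$-end in $V_i$.

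The key step is to check that $M_i$ is an induced matching of $G'[A_e,\overline{A_e}]$. Clearly $M_i\subseteq M$ is still a matching whose edges all cross the cut. Since $G'[A_e,\overline{A_e}]$ is bipartite with sides $A_e,\overline{A_e}$, being induced only forbids chords that cross the cut: for distinct $xy,x'y'\in M_i$ (with $x,x'\in A_e$, $y,y'\in\overline{A_e}$) we must rule out $xy'\in E(G')$ and $x'y\in E(G')$ (the pairs $xx'$ and $yy'$ are irrelevant as they do not cross the cut). Now $xy'\notin E(G)$ and $x'y\notin E(G)$ because $M$ is induced in $G$, while every edge of $G'$ outside $E(G)$ joins two vertices of a single class $V_j$. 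But $x,x'\in V_i$ whereas $y'\notin V_i$ (as $x'y'\in E(G)$ and $G$ is $k$-partite), so $x$ and $y'$ lie in different classes; hence $xy'\notin E(G')$, and symmetrically $x'y\notin E(G')$. Thus $M_i$ is an induced matching of $G'[A_e,\overline{A_e}]$, giving $\cutmim_{G'}(A_e,\overline{A_e})\ge |M_i|\ge |M|/k$, as required.

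I do not expect a genuine obstacle here: the argument is entirely elementary once one is careful about the definition of an induced matching in the \emph{cut} graph $G[A_e,\overline{A_e}]$, where only cross-cut chords matter. That observation is precisely what makes the newly added within-class edges harmless and allows the one-sided colouring (colouring by the class of the $A_e$-end alone) to suffice; colouring instead by the unordered pair of classes of the two ends would only yield the weaker factor $\binom{k}{2}$.
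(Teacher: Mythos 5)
Your proof is correct, and it is the standard argument for this result: the paper itself does not prove \Cref{bipcompl} but cites it from \citet{BHMPP22} (with the $k=2$ case attributed to Mengel), and the known proof proceeds exactly as you do — reuse the branch decomposition, split a maximum induced matching of $G[A_e,\overline{A_e}]$ into $k$ classes according to the partition class of the $A_e$-endpoint, and apply pigeonhole, with the within-class added edges harmless because they never cross the cut between an edge's $V_i$-endpoint and the other edge's non-$V_i$-endpoint. No gaps; nothing further to add.
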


\begin{figure}[h!]
\begin{subfigure}[b]{0.3\textwidth}
\captionsetup{justification=raggedright}
\includegraphics[scale=0.6]{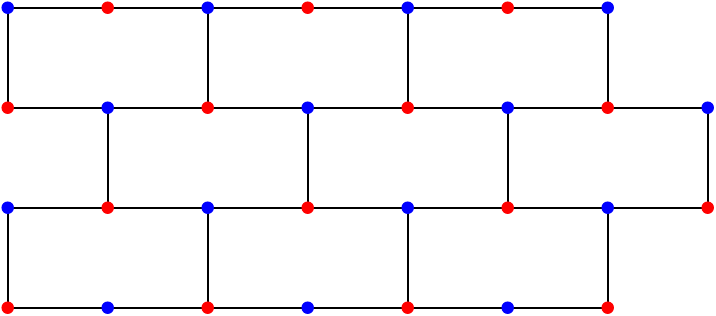}
\caption{A $2$-colouring of the elementary $4\times 4$ wall.}
\label{fig:2col}
\end{subfigure}
\hspace{3.9cm}
\begin{subfigure}[b]{0.3\textwidth}
\captionsetup{justification=raggedright}
\includegraphics[scale=0.6]{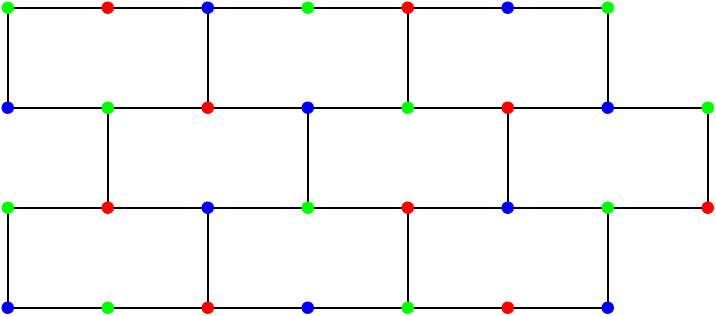}
\caption{A $3$-colouring of the elementary $4\times 4$ wall.}
\label{fig:3col}
\end{subfigure}

\vspace{0.6cm}

\centering
\begin{subfigure}[b]{0.3\textwidth}
\captionsetup{justification=raggedright}
\includegraphics[scale=0.6]{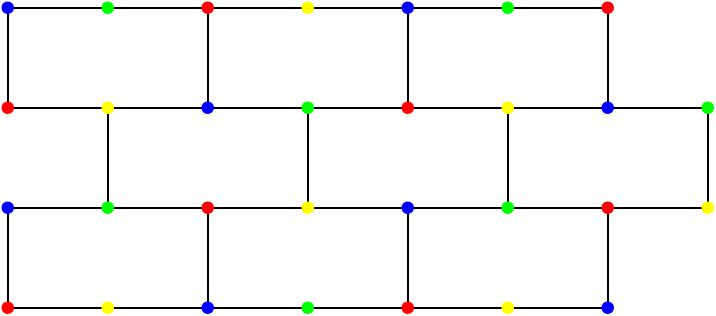}
\caption{A $4$-colouring of the elementary $4\times 4$ wall.}
\label{fig:4col}
\end{subfigure}
\caption{The different colourings of elementary walls used in the proofs of \Cref{344t1,433t1,522t1}.} 
\label{fig:walls}
\end{figure} 

\begin{theorem}
\label{344t1}
The class of $(3P_1,\overline{K_{4,4}+P_1})$-free graphs has unbounded mim-width.
\end{theorem}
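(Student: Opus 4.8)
The plan is to exhibit, inside a suitable auxiliary graph built from elementary walls, an induced copy of a large elementary wall after a bounded number of "monochromatic edge additions'', and then invoke \Cref{bipcompl} together with \Cref{walls} to conclude that mim-width is unbounded. Concretely, I would start from an elementary $n\times n$ wall $W$ and work with a proper $3$-colouring of $W$ of the type depicted in \Cref{fig:3col}; walls are bipartite, but the reason to use three colour classes rather than two is that the construction below must avoid creating an induced $3P_1$, and a two-class construction does not leave enough room. Let $V_1,V_2,V_3$ be the colour classes. I would then form a graph $G$ on the same vertex set by turning each $V_i$ into a clique (adding all missing edges within a colour class), so that $G$ is obtained from $W$ by monochromatic edge additions with respect to the partition $(V_1,V_2,V_3)$. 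By \Cref{bipcompl}, $\mimw(G)\ge \tfrac13\mimw(W)$, and by \Cref{walls} this quantity is unbounded as $n\to\infty$.

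The two things that must be checked are that $G$ is $3P_1$-free and that $G$ is $\overline{K_{4,4}+P_1}$-free. For $3P_1$-freeness: any three pairwise non-adjacent vertices of $G$ would have to lie in three distinct colour classes (since each class is a clique in $G$), but in the elementary wall with the colouring of \Cref{fig:3col} one verifies that every choice of one vertex from each of the three classes contains at least one edge of $W$ — equivalently, the colouring has the property that there is no "rainbow independent set'' of size $3$ in $W$. (If a plain $3$-colouring of the wall does not have this property, I would instead take $G$ to be $W$ with the cliques added and then additionally make $V_1$ complete to, say, $V_2$; this keeps all the added edges "monochromatic'' only if we enlarge the partition, so more carefully one uses a $2$-colouring $V_1,V_2$ of $W$, adds all edges inside $V_1$ and inside $V_2$, and additionally picks one vertex $v$ to dominate everything — making $\{v\}$ a third, singleton class — which trivially kills $3P_1$ while still being a monochromatic-addition construction with respect to the $3$-part partition $(V_1\setminus v,\, V_2\setminus v,\, \{v\})$.) For $\overline{K_{4,4}+P_1}$-freeness: $\overline{K_{4,4}+P_1}$ has an isolated vertex in its complement, i.e.\ it contains a dominating vertex; equivalently, $\overline{K_{4,4}+P_1}$ is the join of $K_1$ with $\overline{K_{4,4}}$, where $\overline{K_{4,4}}$ is two disjoint $K_4$'s. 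So an induced $\overline{K_{4,4}+P_1}$ in $G$ consists of a vertex $x$ together with $8$ further vertices forming two disjoint cliques $Q_1,Q_2$ of size $4$ that are completely non-adjacent to each other, with $x$ adjacent to all of $Q_1\cup Q_2$. The key point is that $Q_1$ and $Q_2$ being cliques in $G$ forces each $Q_i$ to meet at most \dots\ well, a $K_4$ in $G$ can use vertices from all three colour classes, but a $K_4$ that is \emph{anticomplete in $G$} to another $K_4$ is very restricted, because two vertices in the same colour class are adjacent in $G$; hence $Q_1$ and $Q_2$ must use disjoint sets of colour classes, so one of them lives inside a single colour class $V_i$. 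A set of four vertices inside one colour class $V_i$ of the wall, all adjacent (in $W$, since inside $V_i$ the graph $G$ and $W$ agree on non-edges only — wait, inside $V_i$ everything is an edge of $G$) — the real constraint is the anticomplete $K_4$ on the other side plus the common neighbour $x$; tracing this back into $W$ one finds that four vertices of a single colour class of the wall that are all non-adjacent \emph{in $W$} to four other vertices and all adjacent \emph{in $W$} to a common vertex cannot exist, because in an elementary wall every vertex has degree at most $3$, so $x$ (as a vertex of $W$, once we peel off the added edges appropriately) cannot have $8$ neighbours. Making this last degree argument precise — correctly bookkeeping which adjacencies are genuine wall-edges versus added monochromatic edges — is the main obstacle, and it is exactly where the choice of colouring and the maximum-degree-$3$ property of elementary walls do the work.

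In summary, the skeleton is: (1) fix an elementary $n\times n$ wall $W$ and a $3$-part partition $V_1,V_2,V_3$ coming from a proper colouring (or a $2$-colouring augmented by one dominating vertex); (2) let $G$ be $W$ plus all intra-class edges, so $G$ arises from $W$ by monochromatic additions; (3) verify $G$ is $3P_1$-free, which is immediate once the partition has no rainbow independent triple (automatic in the augmented-$2$-colouring version, since the dominating vertex kills $3P_1$); (4) verify $G$ is $\overline{K_{4,4}+P_1}$-free by observing that $\overline{K_{4,4}+P_1}$ is $K_1 \vee 2K_4$, deducing that an induced copy would force one of the two $K_4$'s inside a single colour class together with a common $W$-neighbour of $8$ vertices, contradicting $\Delta(W)\le 3$; (5) apply \Cref{bipcompl} to get $\mimw(G)\ge\tfrac13\mimw(W)$ and \Cref{walls} to get unboundedness as $n\to\infty$. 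The hard part is step (4): pinning down, for the specific colouring used, that no induced $\overline{K_{4,4}+P_1}$ survives, which ultimately reduces to the low maximum degree of elementary walls.
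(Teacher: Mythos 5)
Your overall template (elementary wall, proper colouring, cliquify the colour classes, then apply \Cref{bipcompl} and \Cref{walls}) is exactly the paper's, but the proposal has two genuine gaps. First, your primary construction uses a $3$-colouring, and you justify this by claiming that ``a two-class construction does not leave enough room'' to kill $3P_1$. This is backwards: with \emph{two} colour classes, each turned into a clique, the vertex set is covered by two cliques, so the graph is automatically $3P_1$-free; with \emph{three} classes a large wall certainly contains three pairwise non-adjacent vertices in three distinct classes (a ``rainbow'' independent triple), so the $3$-colouring version is \emph{not} $3P_1$-free and the construction fails at your step (3). Your fallback of a $2$-colouring augmented by a dominating vertex $v$ is also not sound as stated: the edges from $v$ to the rest go between the singleton class $\{v\}$ and the other classes, so they are not monochromatic additions and \Cref{bipcompl} does not apply. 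The correct move is simply the plain $2$-colouring with both classes made cliques (\Cref{fig:2col}), which is what the paper does; you circle around this but never commit to it.

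Second, the $\overline{K_{4,4}+P_1}$-freeness verification — which you yourself flag as ``the main obstacle'' — is left incomplete, and this is the heart of the theorem. It can in fact be closed in one line once you fix the $2$-colouring: if $Q_1$ and $Q_2$ are the two anticomplete $K_4$'s and $x$ the dominating vertex, then since each colour class is a clique of $G$, no class meets both $Q_1$ and $Q_2$; with only two classes each $Q_i$ lies inside a single class and the two classes are distinct, so $x$ lies in a different class from one of them, say $Q_2$, whence all four edges from $x$ to $Q_2$ are genuine wall edges, contradicting $\Delta(W)\leq 3$. The paper reaches the same conclusion via \Cref{344c1} (every $K_5$ in the cliquified wall is monochromatic, using $\Delta(W)\le 3$ and the fact that two wall vertices have at most one common neighbour), applied to $\{x\}\cup Q_1$ and $\{x\}\cup Q_2$. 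As submitted, though, your argument neither pins down a construction that is actually $3P_1$-free nor completes the forbidden-subgraph check, so it does not yet constitute a proof.
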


\begin{proof}
Let $W$ be an elementary $2n\times 2n$ wall and consider its proper $2$-colouring depicted in \Cref{fig:2col}. We add edges within each colour class to make them cliques. Let $f(W)$ be the graph obtained and let $\mathcal{W}_1 = \{f(W):W \in \mathcal{W}\}$. By \Cref{walls} and \Cref{bipcompl}, $\mathcal{W}_1$ has unbounded mim-width. 

Note that, for the graph $f(W)$, every two vertices of the same colour are adjacent, and every two vertices of different colours are adjacent if and only if they are adjacent in $W$. Clearly, $f(W)$ is $3P_1$-free. It remains to show that $f(W)$ is $\overline{K_{4,4}+P_1}$-free.

\begin{claim}
\label{344c1}
Any copy of $K_5$ in $f(W)$ is monochromatic. 
\end{claim}

\begin{claimproof}[Proof of \Cref{344c1}] Let $u_1, \ldots, u_5$ be the vertices of a copy of $K_5$. Since $f(W)$ is obtained from $W$ by adding edges within each colour class, if an edge $uv \in E(f(W))$ is not monochromatic, then $uv$ belongs to $E(W)$ as well. Hence, there cannot be one blue vertex and four red vertices in $\{u_1,\ldots,u_5\}$, since this would imply that in $W$ there is a vertex with four neighbours. Also, there cannot be exactly two blue vertices in $\{u_1,\ldots,u_5\}$, for otherwise these two blue vertices share three common red neighbours in $W$, contradicting the fact that in $W$ any two vertices have at most one common neighbour. By symmetry, there cannot be exactly one or two red vertices, and so $\{u_1,\ldots,u_5\}$ is monochromatic.
\end{claimproof}

Suppose, to the contrary, that $f(W)$ contains an induced copy of $\overline{K_{4,4}+P_1}$ with vertex set $\{v_0, \ldots, v_8\}$ as depicted in \Cref{K44}. By \Cref{344c1}, the two copies of $K_5$ induced by $\{v_0,v_1,v_2,v_3,v_4\}$ and $\{v_0,v_5,v_6,v_7,v_8\}$ must both be monochromatic. Hence, $v_1,\ldots,v_8$ must be of the same colour. This implies that $v_1,\ldots,v_8$ must form a clique in $f(W)$, a contradiction.
\end{proof}

\begin{theorem}
\label{433t1}
The class of $(4P_1,\overline{K_{3,3}+P_1})$-free graphs has unbounded mim-width.
\end{theorem}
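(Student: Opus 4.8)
The plan is to follow the proof of \Cref{344t1} verbatim, only replacing the $2$-colouring of the wall by a proper $3$-colouring. Concretely, I would start from an elementary $2n \times 2n$ wall $W$ and fix a proper $3$-colouring of $W$ of the type depicted in \Cref{fig:3col} — for instance, colouring the grid vertex $(i,j)$ by $(i+2j)\bmod 3$, which is proper for the wall and has the property that the two horizontal neighbours of any vertex already receive two distinct colours; in particular \emph{no vertex of $W$ has all its neighbours of the same colour}. Let $f(W)$ be the graph obtained from $W$ by adding all edges inside each colour class, and let $\mathcal{W}_2=\{f(W):W\in\mathcal{W}\}$. Since $W$ is $3$-partite (via the colouring) and $f(W)$ arises from $W$ by adding edges whose endpoints lie in a common part, \Cref{bipcompl} with $k=3$ together with \Cref{walls} gives that $\mathcal{W}_2$ has unbounded mim-width. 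As the vertex set of $f(W)$ is partitioned into three cliques, $\alpha(f(W))\le 3$, so $f(W)$ is $4P_1$-free. It remains to prove that $f(W)$ is $\overline{K_{3,3}+P_1}$-free.

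The heart of the argument is the analogue of \Cref{344c1}: \emph{every $K_4$ in $f(W)$ is monochromatic}. Recall that any two vertices of $W$ have at most one common neighbour in $W$, and that the non-monochromatic edges of $f(W)$ are exactly the edges of $W$. Let $\{u_1,u_2,u_3,u_4\}$ induce a $K_4$ in $f(W)$ that is not monochromatic. If three colours occur, the colour pattern is $\{c,c,c',c''\}$ with, say, $c(u_1)=c(u_2)=c$; then $u_1u_3,u_1u_4,u_2u_3,u_2u_4$ are non-monochromatic, hence edges of $W$, so $u_3$ and $u_4$ are two distinct common neighbours of $u_1$ and $u_2$ in $W$, a contradiction. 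If exactly two colours occur, the pattern is $\{c,c,c',c'\}$ or $\{c,c,c,c'\}$. In the first case we again find two vertices of one colour with two common $W$-neighbours. In the second case, the unique vertex of colour $c'$ is, in $f(W)$, adjacent to three vertices of colour $c$ via non-monochromatic edges, hence has three $W$-neighbours all of colour $c$, contradicting the choice of the colouring. So every $K_4$ in $f(W)$ is monochromatic.

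To finish, suppose for contradiction that $f(W)$ contains an induced $\overline{K_{3,3}+P_1}$: a vertex $v_0$ complete to two triangles $T_1=\{v_1,v_2,v_3\}$ and $T_2=\{v_4,v_5,v_6\}$, with $T_1$ anticomplete to $T_2$. Both $\{v_0\}\cup T_1$ and $\{v_0\}\cup T_2$ induce a $K_4$, so by the previous paragraph all seven vertices receive the colour $c(v_0)$. In particular $v_1$ and $v_4$ have the same colour and are therefore adjacent in $f(W)$, contradicting that $T_1$ is anticomplete to $T_2$. Hence $f(W)$ is $\overline{K_{3,3}+P_1}$-free, and the theorem follows.

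The main point that needs care is not the $K_4$ case analysis (which is short) but the set-up. In the $2$-colouring case the bound $\Delta(W)\le 3$ alone rules out a vertex with four monochromatic neighbours, so the analogue of \Cref{344c1} holds for any proper $2$-colouring; here a degree-$3$ vertex with a monochromatic neighbourhood would immediately produce a non-monochromatic $K_4$ of pattern $\{c,c,c,c'\}$ and wreck the argument. So the $3$-colouring genuinely has to be chosen with this property in mind, and the step deserving attention is verifying that the displayed colouring (or the explicit rule $(i,j)\mapsto(i+2j)\bmod 3$) is proper and never leaves a vertex with a monochromatic neighbourhood on arbitrarily large elementary walls.
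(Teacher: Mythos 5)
Your proof is correct and is essentially the paper's own argument: the same construction (an elementary $2n\times 2n$ wall with a proper $3$-colouring, completing each colour class to a clique), the same appeal to \Cref{walls} and \Cref{bipcompl} with $k=3$, and the same key claim that every $K_4$ in the resulting graph is monochromatic (the paper's \Cref{433c1}), finished in the identical way. The only remark worth making is that the property actually needed -- and the one the paper states -- is that no vertex of $W$ has \emph{three} neighbours of the same colour (which your explicit colouring guarantees, since the two same-row neighbours of any degree-$3$ wall vertex receive distinct colours); your stronger assertion that no vertex has a fully monochromatic neighbourhood can fail for degree-$2$ boundary vertices under the rule $(i,j)\mapsto(i+2j)\bmod 3$, but this is immaterial to the argument.
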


\begin{proof}
Let $W$ be an elementary $2n\times 2n$ wall and consider its proper $3$-colouring depicted in \Cref{fig:3col}. We add edges within each colour class to make them cliques. Let $g(W)$ be the graph obtained and let $\mathcal{W}_2 = \{g(W) : W \in \mathcal{W}\}$. By \Cref{walls} and \Cref{bipcompl}, $\mathcal{W}_2$ has unbounded mim-width. Clearly, $g(W)$ is $4P_1$-free. It remains to show that $g(W)$ is $\overline{K_{3,3}+P_1}$-free.

\begin{claim}
\label{433c1}
Any copy of $K_4$ in $g(W)$ is monochromatic. 
\end{claim}

\begin{claimproof}[Proof of \Cref{433c1}] Let $u_1, \ldots, u_4$ be the vertices of a copy of $K_4$. At least two such vertices have the same colour, say colour $c$. Since $g(W)$ is obtained from $W$ by adding edges within each colour class, if an edge $uv \in E(g(W))$ is not monochromatic, then $uv$ belongs to $E(W)$ as well. Observe first that there cannot be exactly two vertices with colour $c$ in $\{u_1,\ldots,u_4\}$, for otherwise these two vertices coloured $c$ have two common neighbours coloured different from $c$, contradicting the fact that in $W$ any two vertices have at most one common neighbour. Moreover, there cannot be exactly three vertices coloured $c$ in $\{u_1,\ldots,u_4\}$, since this would imply that in $W$ there is a vertex not coloured $c$ adjacent to three vertices coloured $c$. However, in the $3$-colouring of $W$ depicted in \Cref{fig:3col}, no vertex has three monochromatic neighbours. 
\end{claimproof}

Suppose, to the contrary, that $g(W)$ contains an induced copy of $\overline{K_{3,3}+P_1}$ with vertex set $\{v_0, \ldots, v_6\}$, where $v_0$ is the universal vertex and $\{v_1,v_2,v_3\}$ and $\{v_4,v_5,v_6\}$ induce disjoint cliques. By \Cref{433c1}, the two copies of $K_4$ induced by $\{v_0,v_1,v_2,v_3\}$ and $\{v_0,v_4,v_5,v_6\}$ must both be monochromatic. Hence, $v_1,\ldots,v_6$ must be of the same colour. This implies that $v_1,\ldots,v_6$ must form a clique in $g(W)$, a contradiction.
\end{proof}

\begin{theorem}
\label{522t1}
The class of $(5P_1,\overline{K_{2,2}+P_1})$-free graphs has unbounded mim-width.
\end{theorem}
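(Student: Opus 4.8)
The plan is to mimic the proofs of \Cref{344t1} and \Cref{433t1}, now using four colours. Let $W$ be an elementary $2n \times 2n$ wall and fix the proper $4$-colouring of $W$ suggested by \Cref{fig:4col}; the property I will rely on is that no vertex of $W$ has two neighbours of the same colour (such a colouring exists since $W$ is subcubic, so every vertex has at most three neighbours). Let $h(W)$ be obtained from $W$ by adding all edges inside each of the four colour classes, and set $\mathcal{W}_3 = \{h(W) : W \in \mathcal{W}\}$. Since $h(W)$ arises from the $4$-partite graph $W$ by adding edges whose two endpoints lie in a common partition class, \Cref{bipcompl} gives $\mimw(h(W)) \geq \tfrac14 \mimw(W)$, so by \Cref{walls} the class $\mathcal{W}_3$ has unbounded mim-width. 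It then suffices to show that every $h(W)$ is $(5P_1,\overline{K_{2,2}+P_1})$-free.

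The $5P_1$-freeness is immediate: each of the four colour classes is a clique in $h(W)$, so any independent set of $h(W)$ meets each class in at most one vertex, hence has at most four vertices. For $\overline{K_{2,2}+P_1}$-freeness, the key step — the analogue of \Cref{344c1} and \Cref{433c1} — is the claim that \emph{every triangle of $h(W)$ is monochromatic}. Indeed, in $h(W)$ any edge joining two differently coloured vertices is already an edge of $W$; hence a non-monochromatic triangle of $h(W)$ either has three pairwise differently coloured vertices, and thus forms a triangle of $W$, which is impossible because walls are triangle-free, or it has two vertices of a common colour $c$ and a third vertex of a different colour adjacent in $W$ to both of them (the two edges joining it to them not being monochromatic), which would be a vertex of $W$ with two $c$-coloured neighbours, contradicting the choice of colouring.

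Granting the claim, suppose some $h(W)$ contained an induced copy of $\overline{K_{2,2}+P_1}$, say with universal vertex $v_0$ and vertex-disjoint, mutually anticomplete edges $v_1 v_2$ and $v_3 v_4$. Then $\{v_0,v_1,v_2\}$ and $\{v_0,v_3,v_4\}$ are triangles of $h(W)$, hence both monochromatic, and since they share $v_0$ they carry the same colour. Thus $v_1,\dots,v_4$ all receive that colour, so they lie in a single colour class and therefore form a clique in $h(W)$, contradicting $v_1 v_3 \notin E(h(W))$. This shows $h(W)$ is $\overline{K_{2,2}+P_1}$-free and completes the argument modulo the colouring.

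The only part that does not follow immediately from the lemmas already available is producing the required $4$-colouring of the $2n\times 2n$ wall and verifying that it is proper and has no vertex with two equally coloured neighbours; I expect this to be the main (if routine) obstacle. I would handle it by specifying an explicit colouring rule depending only on the coordinates of a wall vertex in the periodic brick pattern and then checking the two conditions on a single period, so that they hold for all $n$ by periodicity.
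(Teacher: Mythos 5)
Your proposal is correct and follows essentially the same route as the paper: the paper likewise takes the proper $4$-colouring of \Cref{fig:4col}, completes each colour class to a clique, applies \Cref{walls} and \Cref{bipcompl} to get unbounded mim-width, and proves that every copy of $K_3$ in $h(W)$ is monochromatic by exactly your two-case argument (two vertices of a common colour would force a vertex of $W$ with two equally coloured neighbours; three distinct colours would force a triangle in $W$), before concluding as in \Cref{344t1,433t1}. The one caveat is your parenthetical claim that the required colouring exists ``since $W$ is subcubic'': properness plus the rainbow-neighbourhood condition amounts to colouring the square of $W$, and subcubicity alone does not bound that by four colours (the Petersen graph is subcubic and its square is complete), but you correctly flag that the colouring must be exhibited explicitly and checked on a period, which is precisely what the paper does via \Cref{fig:4col}.
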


\begin{proof}
Let $W$ be an elementary $2n\times 2n$ wall and consider its proper $4$-colouring depicted in \Cref{fig:4col}. We add edges within each colour class to make them cliques. Let $h(W)$ be the graph obtained and let $\mathcal{W}_3 = \{h(W) : W \in \mathcal{W}\}$. By \Cref{walls} and \Cref{bipcompl}, $\mathcal{W}_3$ has unbounded mim-width. Clearly, $h(W)$ is $5P_1$-free. It remains to show that $h(W)$ is $\overline{K_{2,2}+P_1}$-free.

\begin{claim}
\label{522c1}
Any copy of $K_3$ in $h(W)$ is monochromatic. 
\end{claim}

\begin{claimproof}[Proof of \Cref{522c1}] Let $u_1,u_2,u_3$ be the vertices of a copy of $K_3$. Firstly, there cannot be exactly two vertices in $\{u_1,u_2,u_3\}$ of the same colour, say colour $c$, since this would imply that in $W$ there is a vertex coloured different from $c$ which is adjacent to two vertices coloured $c$, contradicting the $4$-colouring of $W$ depicted in \Cref{fig:4col}. Moreover, the vertices in $\{u_1,u_2,u_3\}$ cannot be coloured with distinct colours, for otherwise these three vertices would induce a $K_3$ in $W$. 
\end{claimproof}

Similarly to \Cref{344t1,433t1}, it is now easy to see that $h(W)$ is $\overline{K_{2,2}+P_1}$-free.
\end{proof}

\subsection{Dichotomy}\label{dichotomy}

Combining the results of \Cref{boundededgeless,unboundededgeless}, we can finally show \Cref{edgeless}, which we restate for convenience.

\edgeless*

\begin{proof} If $r \geq 5$, the mim-width is unbounded by \Cref{522t1}. Suppose now that $r = 4$. If both $s$ and $t$ are at least $3$, the mim-width is unbounded by \Cref{433t1}, whereas if one of $s$ and $t$ is at most $2$, the mim-width is bounded and quickly computable by \Cref{42tt1}. Finally, suppose that $r = 3$. If both $s$ and $t$ are at least $4$, the mim-width is unbounded by \Cref{344t1}, whereas if one of $s$ and $t$ is at most $3$, the mim-width is bounded and quickly computable by \Cref{33tt1}.   
\end{proof}

\section{Mim-width of $(K_r,sP_1+tP_2+uP_3)$-free graphs}\label{unboundedcomplete}

In this section we address \Cref{o-3} and show \Cref{completefour,completefive}. Both results are obtained by identifying new $(K_r,sP_1+tP_2+uP_3)$-free classes of unbounded mim-width. 

\Cref{o-3} was formulated in \citep{BHMPP22} starting from \citep[Theorem~35]{BHMPP22}. We remark that there is a typo in the formulation of this statement. For completeness we provide the correct formulation, whose proof is essentially identical to that of \citep[Theorem~35]{BHMPP22}. 

\begin{theorem}[\citet{BHMPP22}]
  Let $H$ be a graph and let $r \geq 4$ be an integer. Let ${\cal S}$ be the class of graphs every component of which is either a subdivided claw or a path. 
  Then exactly one of the following holds:

  \begin{itemize}
    \item $H \ssi sP_1+P_5$ or $tP_2$, and the mim-width of the class of
      $(K_r, H)$-free graphs is bounded and quickly computable;
    \item $H \notin \mathcal{S}$, or $H \si K_{1,3}$, $P_2+P_4$,
      or $P_6$, and the mim-width of the class of $(K_r, H)$-free graphs
      is unbounded; or
    \item $H = sP_1 + tP_2 + uP_3$, where $u \geq 1$ and $t+u \geq 2$.
  \end{itemize}
\end{theorem}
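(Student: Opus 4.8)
The plan is to establish the trichotomy by a structural case analysis on $H$, deriving the (un)boundedness assertions of the first two alternatives from results already in the literature and concentrating on the point where the statement had to be corrected, namely that the three alternatives are exhaustive and pairwise exclusive. First I would reduce to the case in which $H$ is a linear forest. If some component of $H$ is neither a path nor a subdivided claw, then $H \notin \mathcal{S}$ and the second alternative holds. Otherwise every component of $H$ is a path or a subdivided claw; since in any subdivided claw the centre together with its three neighbours induces a $K_{1,3}$ (these neighbours being pairwise non-adjacent, as the graph is a tree), if $H$ is not $K_{1,3}$-free the second alternative holds, so we may assume $H$ is $K_{1,3}$-free, and hence $H = P_{n_1} + \dots + P_{n_c}$ with $n_1 \geq \dots \geq n_c \geq 1$. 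If $n_1 \geq 6$ then $P_6 \ssi H$ (second alternative), so assume $n_1 \leq 5$. If $n_1 \in \{4,5\}$ and $n_2 \geq 2$ then $P_2 + P_4 \ssi H$ (second alternative), so if $n_1 \in \{4,5\}$ then $H = sP_1 + P_{n_1} \ssi sP_1 + P_5$ (first alternative). Otherwise $n_1 \leq 3$, i.e.\ $H = sP_1 + tP_2 + uP_3$: if $u = 0$ then $H \ssi (s+t)P_2$ (first alternative); if $u \geq 1$ and $t + u = 1$ then $H = sP_1 + P_3 \ssi sP_1 + P_5$ (first alternative); and if $u \geq 1$ and $t + u \geq 2$ then $H$ has exactly the form of the third alternative. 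This shows the alternatives are exhaustive.

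For exclusivity I would check that every graph that is an induced subgraph of some $sP_1 + P_5$ or of some $tP_2$, and every graph $sP_1 + tP_2 + uP_3$, is a $K_{1,3}$-free linear forest with no induced $P_6$ and no induced $P_2 + P_4$ --- immediate, since the components of these graphs have order at most $5$, $2$ and $3$ respectively and at most one of order $\geq 4$ --- so no such graph meets any of the four triggers of the second alternative. The first and third alternatives are mutually disjoint because a graph $sP_1 + tP_2 + uP_3$ with $u \geq 1$ and $t+u \geq 2$ contains an induced $P_3 + P_2$ (when $t \geq 1$) or an induced $2P_3$ (when $u \geq 2$), and neither embeds in any $sP_1 + P_5$: $P_3 + P_2$ because $P_5$ has no induced $P_3 + P_2$ (direct inspection of its three induced copies of $P_3$), and $2P_3$ because a copy of $2P_3$ would have to lie within the five-vertex path $P_5$; while a graph with an induced $P_3$ is not an induced subgraph of any $tP_2$.

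For the (un)boundedness within cases I would appeal to prior work. If $H \ssi tP_2$ then every $(K_r,H)$-free graph $G$ is $tP_2$-free, so every branch decomposition $(T,\delta)$ of $G$ has $\simw_G(T,\delta) < t$ (a maximum induced matching between the two sides of a cut is an induced matching of $G$); taking $(T,\delta)$ to be a fixed caterpillar decomposition and invoking \Cref{cliquefree} --- together with the fact that a $K_r$-free graph has no induced $K_r \boxminus K_r$ or $K_r \boxminus S_r$ --- yields $\mimw_G(T,\delta) \leq R(R(t,r),R(r,r))$, so the mim-width of the class is bounded and quickly computable. If $H \ssi sP_1 + P_5$ then $(K_r,H)$-free graphs form a subclass of $(K_r, sP_1 + P_5)$-free graphs, whose mim-width is bounded and quickly computable by \citep{BHMP22}. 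For the second alternative, the classes of $(K_r,K_{1,3})$-free, $(K_r,P_6)$-free and $(K_r, P_2 + P_4)$-free graphs, and more generally $(K_r,H)$-free graphs whenever some component of $H$ is neither a path nor a subdivided claw, all have unbounded mim-width by \citep{BHMPP22} (via line graphs of walls, suitably subdivided walls and related gadgets, all of which are $K_r$-free for $r \geq 4$, together with \Cref{walls,bipcompl}); since forbidding a larger induced subgraph only enlarges a hereditary class, $H \si K_{1,3}$, $H \si P_6$ or $H \si P_2 + P_4$ each forces the mim-width of $(K_r,H)$-free graphs to be unbounded.

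The hard part is the case analysis itself --- locating the boundary between the first and third alternatives (that $sP_1 + P_3$ and every $sP_1 + tP_2$ belong to the first, while $P_3 + P_2$, $2P_3$, and every $sP_1 + tP_2 + uP_3$ with $u \geq 1$, $t+u \geq 2$ fails to embed in any $sP_1 + P_5$) and verifying the induced-subgraph non-containments underlying exclusivity --- which is exactly the point the corrected statement gets right relative to the typo in \citep[Theorem~35]{BHMP22}; the (un)boundedness assertions are used as black boxes from \citep{BHMP22,BHMPP22}.
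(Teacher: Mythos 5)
Your proposal is correct and follows essentially the same route as the paper: reduce to the case where $H$ is a linear forest via the $H \notin \mathcal{S}$ and $K_{1,3}$ triggers, then case-analyse the component path lengths against $P_6$ and $P_2+P_4$ to land in either the bounded cases $sP_1+P_5$, $tP_2$ or the residual family $sP_1+tP_2+uP_3$ with $u\ge 1$, $t+u\ge 2$, citing \citep{BHMPP22} for the (un)boundedness assertions. Your additional explicit verification of mutual exclusivity and the alternative sim-width derivation of the $tP_2$ case are harmless extras; the core argument is the same.
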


\begin{proof} By \citep[Theorem~31-(i)]{BHMPP22}, if $H \notin \mathcal{S}$, then the mim-width of the class of $(K_r,H)$-free graphs is unbounded.
  So we may assume that $H$ is a forest of paths and subdivided claws.
  By \citep[Theorem~31-(iii)]{BHMPP22}, if $H$ contains a $K_{1,3}$, then the mim-width is again unbounded.
  So we may assume that $H$ is a linear forest.
  If $H \ssi sP_1 + P_5$ or $H \ssi tP_2$, then mim-width is bounded
  and quickly computable by parts~(xii) and~(xiv) of \citep[Theorem~30]{BHMPP22}.
  So we may assume that $H$ is a linear forest containing $P_2 + P_3$.
  By \citep[Theorem~31-(viii)]{BHMPP22}, we may also assume $H$ contains neither $P_2 + P_4$ nor $P_6$, otherwise the mim-width is again unbounded.
  It now follows that $H \ssi tP_2 + uP_3$ for some $u,t$ such that $u \geq 1$ and $t+u \geq 2$. Therefore, $H = sP_1 + tP_2 + uP_3$, for $u \geq 1$ and $t + u \geq 2$.
\end{proof}

\subsection{Unboundedness results}

Similarly to \Cref{unboundededgeless}, the unboundedness results for $(K_r,sP_1+tP_2+uP_3)$-free graphs in this section (\Cref{K5unbound} for $r=5$ and \Cref{K4unbounded} for $r = 4$) are obtained by applying \Cref{bipcompl}. However, in the case of \Cref{K4unbounded}, only certain types of edges are added inside each colour class; this is to avoid creating copies of $K_4$. We will also make use of the following two results.  
 
\begin{lemma}[\citet{Vat12}]\label{delete} Let $G$ be a graph and $v \in V(G)$. Then $\mimw(G) \geq \mimw(G - v)$.
\end{lemma}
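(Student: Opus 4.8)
The plan is to show that any branch decomposition of $G$ can be converted into a branch decomposition of $G-v$ without increasing the mim-width, which immediately yields the inequality on the optimum. First I would take a branch decomposition $(T,\delta)$ of $G$ realising $\mimw(G)$, that is, with $\mimw_G(T,\delta) = \mimw(G)$, and locate the leaf $\ell = \delta(v)$ of $T$. If $|V(G)| \le 2$ the claim is trivial, so assume $T$ has at least three leaves and hence $\ell$ has a neighbour $u$ of degree $2$ or $3$ in $T$. I would form $T'$ from $T$ by deleting $\ell$; if this leaves $u$ with degree $2$, I would also suppress $u$ (contract one of its two incident edges), so that $T'$ is again subcubic, and let $\delta'$ be $\delta$ restricted to $V(G)\setminus\{v\}$. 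Then $(T',\delta')$ is a branch decomposition of $G-v$.

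The key step is to check that every edge $e'\in E(T')$ satisfies $\cutmim_{G-v}(A_{e'},\overline{A_{e'}}) \le \mimw_G(T,\delta)$. Each such $e'$ corresponds to an edge $e$ of the original $T$ (when $u$ was suppressed, the merged edge corresponds to whichever of the two original edges we keep track of), and the partition $(A_{e'},\overline{A_{e'}})$ of $V(G)\setminus\{v\}$ is exactly the partition $(A_e,\overline{A_e})$ of $V(G)$ with $v$ removed from whichever side contained it. Since $G-v$ is an induced subgraph of $G$, the bipartite graph $(G-v)[A_{e'},\overline{A_{e'}}]$ is an induced subgraph of $G[A_e,\overline{A_e}]$ (obtained by deleting $v$), so any induced matching of the former is an induced matching of the latter; hence $\cutmim_{G-v}(A_{e'},\overline{A_{e'}}) \le \cutmim_G(A_e,\overline{A_e}) \le \mimw_G(T,\delta)$. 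Taking the maximum over $e'$ gives $\mimw_{G-v}(T',\delta') \le \mimw_G(T,\delta) = \mimw(G)$, and since $\mimw(G-v) \le \mimw_{G-v}(T',\delta')$ we are done.

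The only mildly delicate point — and the thing I would be careful about rather than a genuine obstacle — is the bookkeeping when $u$ becomes a degree-$2$ vertex and must be suppressed: one must verify that suppressing a degree-$2$ vertex does not change any of the induced partitions of $V(G)\setminus\{v\}$ (it does not, since the two edges incident to a degree-$2$ node induce the same partition), so no cut is affected. Everything else is immediate from the monotonicity of induced matchings under taking induced subgraphs.
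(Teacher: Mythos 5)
Your argument is correct and is the standard proof of this fact; the paper itself gives no proof, citing \citet{Vat12} directly, so there is nothing to compare against beyond noting that your route (delete the leaf $\delta(v)$, suppress the resulting degree-two node, and use that induced matchings in $(G-v)[A_{e'},\overline{A_{e'}}]$ remain induced matchings in $G[A_e,\overline{A_e}]$) is exactly the expected one. The only case you do not explicitly cover is when the neighbour $u$ of the deleted leaf had degree $2$ in $T$, so that $u$ becomes a leaf with no preimage under $\delta'$; there you must trim the whole pendant path up to the first node of degree at least $3$ (the ``trimming'' operation defined in the paper), which is harmless since every edge on that path induces the cut $(\{v\}, V(G)\setminus\{v\})$ and so contributes nothing after $v$ is removed.
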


\begin{lemma}[\citet{BHMPP22}]\label{subd} Let $G$ be a graph and let $G'$ be the graph obtained by $1$-subdividing an edge of $G$. Then $\mimw(G') \geq \mimw(G)$.
\end{lemma}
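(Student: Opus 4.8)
The plan is to show that from an arbitrary branch decomposition of $G'$ one can extract a branch decomposition of $G$ of no larger mim-width; minimising over all branch decompositions of $G'$ then yields $\mimw(G) \le \mimw(G')$. Write $uv$ for the subdivided edge and $w$ for the new vertex, so that $V(G') = V(G) \cup \{w\}$ and $E(G') = (E(G)\setminus\{uv\}) \cup \{uw,wv\}$. Let $(T',\delta')$ be a branch decomposition of $G'$ with $\mimw_{G'}(T',\delta') = \mimw(G')$; I would first note that we may assume every internal node of $T'$ has degree exactly $3$, since suppressing a degree-$2$ node does not change the family of partitions induced by $T'$. Set $\ell_w = (\delta')^{-1}(w)$ and let $x$ be its neighbour in $T'$. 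Obtain $T$ from $T'$ by deleting $\ell_w$ and suppressing the resulting degree-$2$ node $x$, and let $\delta$ be the restriction of $\delta'$ to $V(G)$. Then $(T,\delta)$ is a branch decomposition of $G$ (the degenerate cases, e.g.\ $|V(G)| \le 2$, are trivial and can be dispatched directly).

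Next I would analyse the cuts of $(T,\delta)$. Every edge $e \in E(T)$ corresponds either to an edge $e'$ of $T'$ surviving the deletion, or to the single edge created when $x$ is suppressed, which may be identified with one of the two edges of $T'$ at $x$ other than $x\ell_w$. In both cases the partition $(A_e, \overline{A_e})$ of $V(G)$ is obtained from the partition $(A_{e'}, \overline{A_{e'}})$ of $V(G')$ by deleting $w$ from whichever of the two sides contains it; in particular $\{A_{e'}, \overline{A_{e'}}\}$ equals either $\{A_e\cup\{w\},\ \overline{A_e}\}$ or $\{A_e,\ \overline{A_e}\cup\{w\}\}$. The key claim is that $\cutmim_G(A_e,\overline{A_e}) \le \cutmim_{G'}(A_{e'},\overline{A_{e'}})$ for each such $e$. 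To prove it I would take a maximum induced matching $M$ of $G[A_e,\overline{A_e}]$ and argue as follows. If $uv \notin M$, then since $uv \in E(G)$ and $M$ is induced, at most one of $u,v$ lies in $V(M)$, so $G[V(M)] = G'[V(M)]$ and $M$ is already an induced matching of $G'[A_{e'},\overline{A_{e'}}]$. If $uv \in M$, say $u \in A_e$ and $v \in \overline{A_e}$, then $w$ lies on the same side of $(A_{e'},\overline{A_{e'}})$ as exactly one of $u,v$; replacing $uv$ by the $G'$-edge joining $w$ to the endpoint on the opposite side (that is, by $wv$ if $w \in A_{e'}$, and by $uw$ if $w \in \overline{A_{e'}}$) yields an induced matching $M'$ of $G'[A_{e'},\overline{A_{e'}}]$ with $|M'| = |M|$, because $N_{G'}(w) = \{u,v\}$ and $G$ and $G'$ agree on the rest of $V(M')$. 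In either case $\cutmim_{G'}(A_{e'},\overline{A_{e'}}) \ge |M| = \cutmim_G(A_e,\overline{A_e})$.

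Combining the claim over all $e \in E(T)$ gives $\mimw_G(T,\delta) \le \mimw_{G'}(T',\delta') = \mimw(G')$, hence $\mimw(G) \le \mimw(G')$, as required. The only mildly delicate point is the rerouting step when $uv \in M$: one must keep track of which side of the cut $w$ lands on and check that redirecting the matching edge through $w$ creates no chord, which is immediate from $N_{G'}(w) = \{u,v\}$. Everything else is routine bookkeeping about how edges and cuts of $T'$ behave under the leaf-deletion-and-suppression operation, together with the trivial small cases where $T'$ is too small for $x$ to have degree $3$.
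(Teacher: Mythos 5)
The paper does not actually prove this lemma; it imports it from \citep{BHMPP22} without proof, so there is no in-paper argument to compare against. Your approach is the natural one (delete the leaf of the subdivision vertex $w$ from an optimal branch decomposition of $G'$, suppress the resulting degree-$2$ node, and compare cuts edge by edge), and the overall argument is correct, including the rerouting through $w$ in the case $uv \in M$. The one misstep is the justification in the case $uv \notin M$: it is \emph{not} true in general that at most one of $u,v$ lies in $V(M)$, nor that $G[V(M)] = G'[V(M)]$. If $u$ and $v$ lie on the same side of the cut, then $uv$ is not an edge of the bipartite graph $G[A_e,\overline{A_e}]$, so inducedness of $M$ places no constraint on it and both $u$ and $v$ may appear in $V(M)$ (in which case $G[V(M)]$ and $G'[V(M)]$ differ by the edge $uv$). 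The conclusion you want still holds, but the correct reason is that only cross edges are relevant to an induced matching of the cut graph: $G$ and $G'$ disagree only on $uv$, $uw$, $wv$; the latter two are harmless because $w \notin V(M)$, and $uv$ could only violate inducedness of $M$ in $G'[A_{e'},\overline{A_{e'}}]$ if it crossed the cut and joined two different edges of $M$ --- but then it would already violate inducedness of $M$ in $G[A_e,\overline{A_e}]$ (and it is not even an edge of $G'$). With that local repair, both cases go through and the proof is complete.
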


\begin{theorem}\label{K5unbound}
The class of $(K_5,P_3+P_2+P_1)$-free graphs has unbounded mim-width.
\end{theorem}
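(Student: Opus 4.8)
The plan is to construct, for every $n$, a $(K_5, P_3+P_2+P_1)$-free graph whose mim-width is at least roughly $\sqrt{n}$, by starting from an elementary $2n \times 2n$ wall $W$ and augmenting it. As in \Cref{344t1,433t1,522t1}, the idea is to take a proper colouring of $W$ and add edges inside colour classes, then invoke \Cref{walls} together with \Cref{bipcompl}. The key tension here is that we need the resulting graph to be $K_5$-free, which means the colour classes \emph{cannot} simply be turned into cliques (that would create huge cliques); instead we need a colouring with enough classes and add edges judiciously so that no $K_5$ appears. Since we are allowed to forbid only $P_3+P_2+P_1$ rather than some $rP_1$, the graph is permitted to contain arbitrarily large independent sets, so we have freedom to use many colour classes.

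First I would fix a proper colouring $c$ of the elementary wall $W$ — plausibly a proper $3$-colouring, since walls are bipartite but a $3$-colouring gives more structural control — and then define the augmented graph $f(W)$ by adding, inside each colour class, only a restricted family of edges chosen so that $f(W)$ is $K_5$-free. A natural choice is to add edges so that each colour class becomes a disjoint union of small cliques (say triangles or edges), or to add edges along the rows/columns of the wall inside a class; the exact rule must be reverse-engineered from the two goals. One then needs to check two things. The first is the mim-width lower bound: the added edges all lie inside colour classes, so \Cref{bipcompl} applies and gives $\mimw(f(W)) \geq \frac{1}{k}\mimw(W)$, which is unbounded by \Cref{walls}; if instead we build $f(W)$ by a subdivision-plus-deletion argument we would invoke \Cref{delete} and \Cref{subd} to transfer the bound. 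The second, and genuinely delicate, part is verifying the two forbidden-subgraph conditions.

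The main obstacle will be simultaneously guaranteeing $K_5$-freeness and $(P_3+P_2+P_1)$-freeness. For $K_5$-freeness I would prove a claim analogous to \Cref{344c1}: in $f(W)$, any $K_5$ must be ``almost monochromatic'' in a controlled way, and then argue that the wall structure (any two vertices of $W$ have at most one common neighbour, maximum degree $3$) plus the restricted edge-addition rule forbids such configurations; this is why the edge-addition rule must keep the monochromatic pieces small. For $(P_3+P_2+P_1)$-freeness — equivalently, $f(W)$ has no induced subgraph consisting of a $P_3$, a disjoint $P_2$, and a disjoint isolated vertex, all mutually non-adjacent — I would argue that the three colour classes are ``mostly complete'' to each other in $f(W)$ (every non-edge of $f(W)$ between two colour classes is a non-edge of the wall, and non-edges of a wall are very abundant, but the key is that a vertex of $W$ is non-adjacent to almost everything), so one must instead exploit that within the augmented graph the complement is sparse across classes: a $P_3 + P_2 + P_1$ uses three components that are pairwise anticomplete, forcing many cross-class non-edges among a bounded number of vertices, and one shows the rigid wall geometry cannot accommodate all of these at once. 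I expect balancing these two requirements — strong enough edges for $(P_3+P_2+P_1)$-freeness, weak enough for $K_5$-freeness — to be the crux, and the proof will likely proceed through one or two combinatorial claims about monochromatic substructures in $f(W)$ before deriving the contradiction.
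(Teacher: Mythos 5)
Your proposal correctly identifies the high-level template the paper uses (start from a wall, take a proper colouring, add edges only inside colour classes, and combine \Cref{walls} with \Cref{bipcompl}), and you correctly spot the central tension: the classes cannot be made into cliques because $K_5$ must be avoided. However, the proposal stops exactly at the point where the real work happens: you write that ``the exact rule must be reverse-engineered from the two goals,'' but that rule \emph{is} the proof, and the candidate rules you float do not work. If you use a $3$-colouring and turn each class into a disjoint union of small cliques (or add edges only along rows/columns within a class), then two vertices of the same class that are far apart in the wall remain non-adjacent, and the graph stays locally sparse at large distances. One can then take an induced $P_3$ in the wall, an edge of the wall far away from it, and a single vertex far away from both; with only three classes and only local intra-class edges, nothing prevents these pieces from being pairwise anticomplete, so $P_3+P_2+P_1$ survives. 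Destroying every such configuration forces each colour class to be nearly complete globally, which with three classes inevitably creates large cliques --- so the balance you hope to strike cannot be struck with this kind of rule.

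The paper's resolution is a genuinely different and more delicate construction. It works with a wall-like spanning subgraph $W_n$ of the $2n\times 2n$ grid and refines its proper $2$-colouring into \emph{six} classes $A,B,C$ (red) and $D,E,F$ (blue), splitting each colour by the row index modulo $3$. It then adds \emph{all} edges between distinct classes of the same colour, so each colour induces a complete tripartite graph: this is globally dense (which is what kills $P_3+P_2+P_1$) yet has clique number only $3$ within each colour. $K_5$-freeness then follows because a $K_5$ would need at least three vertices of one colour complete to two of the other, and all cross-colour edges are original wall edges, contradicting that two vertices of $W_n$ have at most one common neighbour. The $(P_3+P_2+P_1)$-freeness argument further uses that each vertex has at most one neighbour in each class of the opposite colour. \Cref{bipcompl} is applied with $k=2$ to the red/blue bipartition, not to the six classes. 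Your sketch does not contain the six-classes-within-two-colours idea, nor the key degree observation, and without them neither forbidden-subgraph verification can be carried out; so as written the proposal has a genuine gap at its core.
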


\begin{proof}
Consider first a $2n \times 2n$-grid $G_{2n}$ with vertex set $\{(i,j) : 1 \leq i,j \leq 2n\}$. Consider the set of vertices $S = \{(i,j) : i+j\equiv 1 \Mod{ 2}\}$ and the set of edges $T = \{(i,j)(i,j-1) : (i,j) \in S\}$. We define the graph $W_n$ as $W_n = (V(G_n),E(G_n)\setminus T)$. Since $W_n$ contains the elementary $n \times n$ wall as an induced subgraph, \Cref{walls} and \Cref{delete} imply that the class of graphs $\{W_n : n \geq 1\}$ has unbounded mim-width. Given $W_n$, we now consider the following partition of its vertices: 
\begin{align*}
A &= \{(i,j) : i+j\equiv 0 \Mod{2}, \ i \equiv 1 \Mod{3}\} \\
B &= \{(i,j) : i+j\equiv 0 \Mod{2}, \ i \equiv 2 \Mod{3}\} \\
C &= \{(i,j) : i+j\equiv 0 \Mod{2}, \ i \equiv 0 \Mod{3}\} \\
D &= \{(i,j) : i+j\equiv 1 \Mod{2}, \ i \equiv 1 \Mod{3}\} \\
E &= \{(i,j) : i+j\equiv 1 \Mod{2}, \ i \equiv 2 \Mod{3}\} \\
F &= \{(i,j) : i+j\equiv 1 \Mod{2}, \ i \equiv 0 \Mod{3}\}.
\end{align*}
We then colour in red the vertices in $A\cup B \cup C$, and in blue the vertices in $D \cup E \cup F$ (see \Cref{fig:K5unbounded}). This gives a proper $2$-colouring of $W_n$ and, in particular, each partition class defined above forms an independent set in $W_n$. Observe that each vertex is adjacent to at most one vertex from each partition class of the opposite colour. That is, each vertex in $A\cup B \cup C$ is adjacent to at most one vertex from each of $D$, $E$ and $F$, and each vertex in $D\cup E \cup F$ is adjacent to at most one vertex from each of $A$, $B$ and $C$.  

\begin{figure}[h!]
\centering
\includegraphics[scale=0.8]{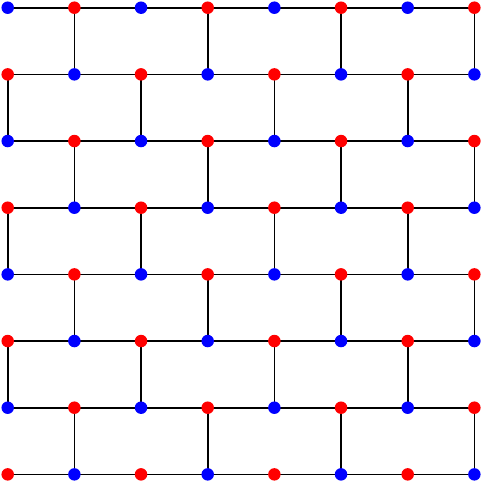}
\caption{The graph $W_4$ with the red-blue colouring as in the proof of \Cref{K5unbound}.}
\label{fig:K5unbounded}
\end{figure}

We now build the graph $W'_n$ by adding all edges between different partition classes of the same colour. That is, we make $A$, $B$, $C$ pairwise complete and $D$, $E$, $F$ pairwise complete. No other edges are added. In particular, $W'_n[A\cup B \cup C]$ and $W'_n[D\cup E\cup F]$ are complete tripartite graphs.

Applying \Cref{bipcompl} to the bipartition $(A\cup B\cup C, D \cup E \cup F)$ of $V(W_n)$, we obtain that $\mimw(W'_n) \geq \mimw(W_n)/2$. Hence, the class of graphs $\{W'_n : n \geq 1\}$ has unbounded mim-width. It is then enough to show that $W'_n$ is $K_5$-free and $(P_3+P_2+P_1)$-free.

\begin{claim}\label{K5wall} $W'_n$ is $K_5$-free.
\end{claim}

\begin{claimproof}[Proof of \Cref{K5wall}] Suppose, to the contrary, that $\{v_1,v_2,v_3,v_4,v_5\}$ induces a copy of $K_5$ in $W'_n$. Since each of $A,B,C,D,E,F$ is an independent set, the $v_i$'s belong to different partition classes. In particular, without loss of generality, $v_1,v_2,v_3$ are red and $v_4,v_5$ blue, or vice versa. Since no edges between red and blue vertices are added when constructing $W'_n$, we have that $\{v_1,v_2,v_3\}$ is complete to $\{v_4,v_5\}$ in $W_n$. But this contradicts the fact that in $W_n$ no two vertices have two common neighbours.
\end{claimproof}

\begin{claim}\label{K5P3} $W'_n$ is $(P_3+P_2+P_1)$-free. 
\end{claim}

\begin{claimproof}[Proof of \Cref{K5P3}] Suppose, to the contrary, that $\{v_1,\ldots,v_6\}$ induces a copy of $P_3+P_2+P_1$ in $W'_n$, where $W'_n[v_1,v_2,v_3] \cong P_3$ with $v_2$ adjacent to both $v_1$ and $v_3$, $\{v_4,v_5\}$ is anticomplete to $\{v_1,v_2,v_3\}$ and induces a copy of $P_2$, and $\{v_6\}$ is anticomplete to $\{v_1,\ldots,v_5\}$. Suppose, without loss of generality, that $v_2$ is red.

\textbf{Case 1:} Both $v_1$ and $v_3$ are blue. Since $v_4$ is adjacent to at most one vertex from each blue partition class,  we have that $v_1$ and $v_3$ belong to different blue partition classes. By construction, these partition classes are complete, contradicting the fact that $v_1$ is non-adjacent to $v_3$ in $W'_n$.

\textbf{Case 2:} At least one of $v_1$ and $v_3$ is red. Without loss of generality, $v_1$ is red. Since each partition class forms an independent set in $W'_n$, we have that $v_1$ does not belong to the class of $v_2$. But then $\{v_1,v_2\}$ dominates the red vertices and so $v_4,v_5,v_6$ are all blue. By a similar reasoning, $v_4,v_5,v_6$ all belong to the same blue partition class, or else there exists a vertex in $\{v_4,v_5,v_6\}$ dominating the remaining two. But each partition class is an independent set, contradicting the fact that $v_4$ is adjacent to $v_5$. 
\end{claimproof}
This concludes the proof of \Cref{K5unbound}.
\end{proof}

\begin{theorem}\label{K4unbounded}
The class of $(K_4,P_3+2P_2+P_1,2P_3+P_2)$-free graphs has unbounded mim-width.
\end{theorem}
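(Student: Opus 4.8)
The plan is to follow closely the scheme of \Cref{K5unbound}: exhibit a family of $(K_4,P_3+2P_2+P_1,2P_3+P_2)$-free graphs of unbounded mim-width obtained from a graph of unbounded mim-width by $2$-colouring its vertices and adding edges inside the two colour classes, but --- as announced in the paragraph preceding the theorem --- adding only a restricted set of such intra-class edges so that no $K_4$ is created. For the base graph I would take the graph $W_n$ used in the proof of \Cref{K5unbound} (a $2n\times 2n$ grid with a suitable set of vertical edges deleted), or a suitable subdivision thereof since \Cref{subd} is at our disposal; the family has unbounded mim-width (by \Cref{walls} together with \Cref{delete}, resp.\ \Cref{subd}), the graph is $C_4$-free in the sense that no two vertices have two common neighbours, it is bipartite with a natural proper $2$-colouring into a ``red'' and a ``blue'' class, each colour class refines into three parts (by residue of one coordinate modulo $3$, exactly as in \Cref{K5unbound}), and every vertex has exactly one neighbour in each of the three parts of the opposite colour.

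From this base graph I would build $W'_n$ by adding inside the red class, and symmetrically inside the blue class, a carefully chosen subset of the edges between its three parts: the point is to keep each colour class $K_3$-free --- or at least free of any triangle all of whose vertices have a common neighbour of the opposite colour --- while still making the colour classes ``dense'' enough that an induced $P_3$ or $P_2$ cannot be placed arbitrarily far from the rest of the graph. Since every added edge has both of its endpoints inside one of the two classes of a proper $2$-colouring of $W_n$, \Cref{bipcompl} applied with $k=2$ to the red/blue bipartition gives $\mimw(W'_n)\ge\tfrac12\,\mimw(W_n)$, so the family $\{W'_n\}$ still has unbounded mim-width.

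It then remains to check the three forbidden induced subgraphs. For $K_4$-freeness I would split on how many of the four vertices lie in the red class: four or zero is impossible because each class is $K_3$-free by construction; three-and-one is impossible because the three monochromatic vertices would span exactly the kind of triangle excluded in the construction while the fourth vertex is their common neighbour; and two-and-two is impossible because the two red vertices would then have the two blue vertices as common neighbours in $W_n$, contradicting $C_4$-freeness. For $(P_3+2P_2+P_1)$-freeness and $(2P_3+P_2)$-freeness I would argue on the connected components of a hypothetical induced copy: since each colour class is close to complete multipartite, two anticomplete components that each meet one colour class in a connected set of at least two vertices are forced to be complete to each other, so at most one component can be ``large and monochromatic''; a short classification of the possible mixed $P_2$- and $P_3$-components (for instance, a mixed $P_3$ whose monochromatic vertices lie in two different parts forces every other component to avoid that colour entirely) then bounds the number of pairwise anticomplete components containing an edge strictly below the three, respectively four, that $2P_3+P_2$ and $P_3+2P_2+P_1$ require.

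The main obstacle is the middle step: choosing exactly which intra-class edges to add. Making each colour class complete tripartite --- the choice that works for \Cref{K5unbound} --- produces a $K_4$ at every degree-$3$ vertex, since its three opposite-colour neighbours lie in three distinct parts and hence form a triangle; but deleting enough intra-class edges to destroy every such triangle reintroduces induced paths $a\,y\,c$ with $a,c$ non-adjacent, and one must ensure these cannot be combined with a far-away disjoint $P_2$ into an induced $2P_3+P_2$. Reconciling these opposing requirements, and then carrying out the component analysis for the two forbidden forests, is where the real work lies.
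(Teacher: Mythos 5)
Your proposal correctly identifies the general template (start from the wall-like graph $W_n$, $2$-colour it, add intra-class edges, invoke \Cref{bipcompl} with $k=2$, then do a case analysis on the forbidden subgraphs), but it contains a genuine gap that you yourself flag in your final paragraph: the construction is never actually specified. The entire difficulty of this theorem is in choosing \emph{which} edges to add so that no $K_4$ appears while the two linear forests remain excluded, and you leave this as an acknowledged open step, correctly observing that the naive choice (complete tripartite colour classes, as in \Cref{K5unbound}) fails because every degree-$3$ vertex of $W_n$ would then see a triangle among its three opposite-colour neighbours. An argument that says ``a carefully chosen subset of edges'' with properties to be determined, and then sketches a case analysis contingent on those unspecified properties, is not a proof; in particular your claims that each colour class can be kept ``$K_3$-free'' yet ``close to complete multipartite'' are in tension, and you give no witness that both can be achieved simultaneously.

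The paper resolves exactly this obstacle by a move you did not make: it first $1$-subdivides every edge of $W_n$ (using \Cref{subd} to preserve unboundedness) and works with the bipartition into \emph{original} vertices $X\cup Y$ versus \emph{subdivision} vertices $A\cup B\cup C$, rather than the red/blue classes of $W_n$. On the original side, $X$ and $Y$ are made complete to each other but each remains independent, so that side is triangle-free. On the subdivision side, all edges between distinct classes of $\{A,B,C\}$ are added \emph{except} the single edge joining the left and right neighbour of each original vertex (the pairs with equal second coordinate and first coordinates differing by $2$). This precise, structured set of omissions kills every potential $K_4$ (the three subdivision-neighbours of an original vertex no longer form a triangle, and two-and-two splits die by $C_4$-freeness of the subdivided wall), while an arithmetic observation on coordinates (the paper's \Cref{obs1,obs2}) shows the omitted edges are so sparse that any vertex outside a class is still adjacent to at least one of any two vertices inside it --- which is the engine driving the exclusion of $P_3+2P_2+P_1$ and $2P_3+P_2$. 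Without this concrete construction, or an equivalent one together with the verification, your argument does not go through.
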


\begin{proof} Let $W_n$ be the graph defined in the proof of \Cref{K5unbound}. Given $W_n$, we subdivide every edge $(i_1,j_1)(i_2,j_2)$ by adding a new vertex $(\frac{i_1+i_2}{2},\frac{j_1+j_2}{2})$.  We then multiply the coordinates of all vertices by $2$ (so, e.g., $(4,5.5)$ becomes $(8,11)$) and preserve the adjacencies between vertices in order to obtain a new graph $W'_n$. By \Cref{subd}, $\mimw(W'_n) \geq \mimw(W_n)$. We now define a partition of the vertices of $W'_n$ as follows (see \Cref{fig:K4unbounded}):
\begin{align*}
X &= \{(i,j) : i+j \equiv 2 \Mod{4}\} \\
Y &= \{(i,j) : i+j \equiv 0 \Mod{4}\} \\
A &= \{(i,j) : i+j \ \mbox{is odd},\ i \equiv 1 \Mod{3}\} \\
B &= \{(i,j) : i+j \ \mbox{is odd},\ i \equiv 2 \Mod{3}\} \\
C &= \{(i,j) : i+j \ \mbox{is odd},\ i \equiv 0 \Mod{3}\}
\end{align*}
Note that $X$ and $Y$ consist of the vertices of $W_n$, and $A$, $B$ and $C$ consist of the new vertices introduced after edge subdivisions. In particular, each partition class is an independent set. Moreover, $X$ is anticomplete to $Y$, and $A,B,C$ are pairwise anticomplete. Since $W'_n$ has no cycle of length $4$, each $x \in X$ and $y \in Y$ have at most one common neighbour in $A\cup B \cup C$. 

\begin{figure}[h!]
\centering
\includegraphics[scale=0.7]{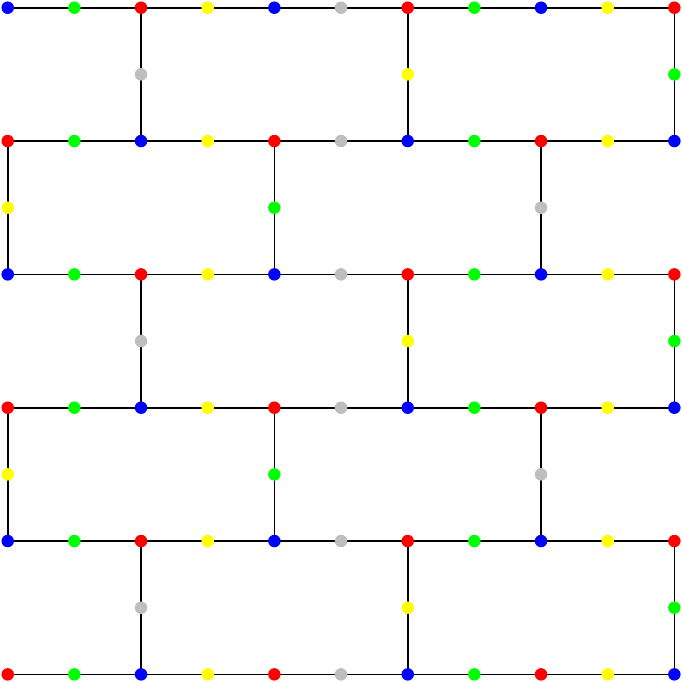}
\caption{The graph $W'_3$ in the proof of \Cref{K4unbounded}, together with a proper $5$-colouring: blue vertices correspond to $X$, red vertices to $Y$, grey vertices to $A$, yellow vertices to $B$ and green vertices to $C$.}
\label{fig:K4unbounded}
\end{figure}
\begin{observation}\label{obs1} Let $u_1 = (i_1,j_1)$ and $u_2 = (i_2,j_2)$ be two vertices belonging to the same partition class in $\{A, B, C\}$. The following hold: 
\begin{itemize}
\item $3$ divides $|i_1-i_2|$; 
\item If $i_1 = i_2$, then $2$ divides $|j_1-j_2|$. 
\end{itemize}
\end{observation}

\noindent We now proceed to the construction of the graph $W''_n$, obtained as follows. Start from $W'_n$ and
\begin{itemize}
\item Add all edges between $X$ and $Y$; \item For each pair of distinct sets $R$ and $S$ in $\{A, B, C\}$ and $r=(i_r,j_r) \in R$ and $s=(i_s,j_s) \in S$, add the edge $rs$, unless $j_r=j_s$ and $|i_r-i_s| = 2$, that is, unless $r$ and $s$ are the ``right neighbour'' and the ``left neighbour'' of a vertex in $X \cup Y$.
\end{itemize}
The edges left out in the second step above avoid the creation of copies of $K_4$, as will be shown shortly.

Since $(X \cup Y, A\cup B\cup C)$ is a bipartition of $V(W''_n)$, \Cref{bipcompl} implies that $\mimw(W''_n) \geq \mimw(W'_n)/2$. Hence the class of graphs $\{W''_n : n \geq 1\}$ has unbounded mim-width. It is then enough to show that $W''_n$ does not contain any graph in $\{K_4,P_3+2P_2+P_1,2P_3+P_2\}$ as an induced subgraph. This will be done in a series of claims. 

\begin{claim}\label{K4free} $W''_n$ is $K_4$-free. 
\end{claim}

\begin{claimproof}[Proof of \Cref{K4free}] Suppose, to the contrary, that $\{v_1,v_2,v_3,v_4\}$ induces a copy of $K_4$ in $W''_n$. Since each of $X,Y,A,B$ and $C$ is an independent set, the four vertices belong to four different partition classes. 

Suppose first that exactly one of $\{v_1,v_2,v_3,v_4\}$ belongs to $X \cup Y$. Without loss of generality, $v_1 \in X$ and $v_2,v_3,v_4 \in A \cup B \cup C$. Since no edges between $X$ and $A \cup B \cup C$ are added to $E(W'_n)$ in order to build $W''_n$, the vertices $v_2,v_3,v_4$ are adjacent to $v_1$ in $W'_n$. Suppose that $v_1 = (i,j)$. Then, up to relabelling, we must have that $v_2 = (i-1,j),v_3=(i+1,j)$ and $v_4 = (i,j \pm 1)$. In other words, $v_2$ and $v_3$ are the left neighbour and right neighbour of $v_1$, respectively. But then, by construction, $v_2v_3 \notin E(W''_n)$, a contradiction.

Suppose finally that exactly two vertices of $\{v_1,v_2,v_3,v_4\}$ belong to $X \cup Y$. Without loss of generality, $v_1,v_2 \in X \cup Y$, and $v_3,v_4 \in A \cup B \cup C$. Since no edges between $X \cup Y$ and $A \cup B \cup C$ are added to $E(W'_n)$ in order to build $W''_n$, both $v_1$ and $v_2$ are adjacent to $v_3$ and $v_4$ in $W'_n$, contradicting the fact that $W'_n$ does not contain any cycle of length $4$.
\end{claimproof}

\begin{claim}\label{obs2} Let $u_1,u_2$ be two distinct vertices from the same partition class in $\{A,B,C\}$. Let $u_3$ be a vertex from a partition class in $\{A,B,C\}$ different from that of $u_1$ and $u_2$. Then $u_3$ is adjacent to at least one of $u_1$ and $u_2$.
\end{claim}

\begin{claimproof}[Proof of \Cref{obs2}]
Let $u_1 = (i_1,j_1)$, $u_2 =(i_2,j_2)$ and $u_3 = (i_3,j_3)$. Suppose, to the contrary, that $u_3$ is non-adjacent to both $u_1$ and $u_2$. By construction of $W''_n$, this implies that $j_1 = j_3 = j_2$ and $|i_1-i_3| = 2 = |i_2-i_3|$. Since $u_1$ and $u_2$ are distinct, $i_1 \neq i_2$, which implies that $|i_1-i_2| = 4$, contradicting the first part of \Cref{obs1}.
\end{claimproof}

We now prove that $W''_n$ is $(P_3+2P_2+P_1)$-free and $(2P_3+2P_2)$-free. The following result will be used as the backbone of both proofs. 

\begin{claim}\label{322}
Suppose that $\{v_1,\ldots,v_7\}$ induces a copy of $P_3+2P_2$, where $v_2$ is adjacent to $v_1$ and $v_3$, $v_4$ is adjacent to $v_5$, $v_6$ is adjacent to $v_7$ and no other edges are present in $W''_n[\{v_1,\ldots,v_7\}]$. Then the following hold:
\begin{enumerate}
\item At least one of $v_1$ and $v_3$ belongs to $X\cup Y$;
\item $v_2 \in X\cup Y$.
\end{enumerate}
\end{claim}

\begin{claimproof}[Proof of \Cref{322}] We first show that at least one of $v_1$ and $v_3$ belongs to $X\cup Y$. Suppose, to the contrary, that both $v_1$ and $v_3$ belong to $A\cup B\cup C$. Since $A$, $B$ and $C$ are pairwise disjoint, $v_1,v_3 \in S \cup T$ for some distinct $S, T \in \{A, B, C\}$.

Observe that at least two of $v_4, v_5, v_6, v_7$, say $v_i$ and $v_j$, belong to $A \cup B \cup C$, or else at least three vertices among $v_4, v_5, v_6, v_7$ belong to $X\cup Y$ and so $W''_n[X\cup Y]$ contains a copy of $P_2+P_1$, contradicting the fact that $W''_n[X\cup Y]$ is a complete bipartite graph.

Observe now that, by \Cref{obs2} and the previous paragraph, $v_1$ and $v_3$ belong the same partition class. Without loss of generality, $v_1, v_3 \in S$. Since $S$ is an independent set, $v_2 \notin S$, and since each vertex in $X \cup Y$ has at most one neighbour in each of $A$, $B$ and $C$, we have that $v_2 \notin X \cup Y$.  Moreover, by \Cref{obs2}, $v_i$ and $v_j$ both belong to $S$. But this contradicts \Cref{obs2}, as $v_2 \in (A\cup B\cup C)\setminus S$. 

We finally show that $v_2 \in X \cup Y$. Suppose, to the contrary, that $v_2 \in R$, for some $R \in \{A, B, C\}$. Since $R$ is an independent set, $v_1,v_3 \not \in R$. In view of part 1, we distinguish two cases, according to which partition classes $v_1$ and $v_3$ belong. Let $S$ and $T$ be the two distinct partition classes in $\{A, B, C\} \setminus R$. 

\noindent \textbf{Case 1:} $v_1$ and $v_3$ both belong to $X\cup Y$. 

Since each vertex in $R$ is adjacent to at most one vertex in $X$ and at most one vertex in $Y$, one of $v_1$ and $v_3$ belongs to $X$ and the other to $Y$, contradicting the fact that $X$ is complete to $Y$.

\noindent \textbf{Case 2:} One of $v_1$ and $v_3$ belongs to $X \cup Y$ and the other to $S \cup T$.

Without loss of generality, $v_1 \in S$ and $v_3 \in X$. Since $X$ is complete to $Y$,  $v_4,v_5,v_6,v_7 \not \in Y$. Since $X$ is an independent set, at most one of $v_4$ and $v_5$ belongs to $X$ and at most one of $v_6$ and $v_7$ belongs to $X$. Without loss of generality, $v_4,v_6 \in A\cup B \cup C$. If both $v_4$ and $v_6$ belong to $R$, then $v_1 \in S$ is non-adjacent to both $v_4,v_6 \in R$, contradicting \Cref{obs2}. If exactly one of $v_4$ and $v_6$ belongs to $R$, say without loss of generality $v_4 \in R$ and $v_6 \notin R$, then $v_6 \in (A \cup B \cup C)\setminus R$ is non-adjacent to $v_2 \in R$ and $v_4 \in R$, contradicting \Cref{obs2}. Therefore, none of $v_4$ and $v_6$ belongs to $R$. If $v_4$ and $v_6$ belong to the same partition class in $(A \cup B \cup C)\setminus R$, then $v_2 \in R$ being non-adjacent to both of them contradicts \Cref{obs2}. Finally, if $v_4$ and $v_6$ belong to different partition classes in $(A \cup B \cup C)\setminus R$, then one of them belongs to the partition class $S$ of $v_1$, say without loss of generality $v_4 \in S$. But then $v_6$ being non-adjacent to both $v_1$ and $v_4$ contradicts \Cref{obs2}.
\end{claimproof}

\begin{claim}\label{P32P2} $W''_n$ is $(P_3+2P_2+P_1)$-free.
\end{claim}
    
\begin{claimproof}[Proof of \Cref{P32P2}] Suppose, to the contrary, that $\{v_1,\ldots,v_8\}$ induces a copy of $P_3+2P_2+P_1$, where $v_2$ is adjacent to $v_1$ and $v_3$, $v_4$ is adjacent to $v_5$, $v_6$ is adjacent to $v_7$ and no other edges are present in $W''_n[\{v_1,\ldots,v_8\}]$ (hence $v_8$ is the isolated vertex). By \Cref{322}, $v_2 \in X\cup Y$ and at least one of $v_1$ and $v_3$ belongs to $X\cup Y$. Without loss of generality, $v_2 \in X$ and $v_1 \in X \cup Y$. Since $X$ is an independent set, $v_1 \in Y$. Since $X$ is complete to $Y$, we have that $\{v_1,v_2\}$ dominates $X \cup Y$ and so $\{v_4,\ldots,v_8\} \subseteq A\cup B \cup C$. By the pigeonhole principle, there exists two vertices among $v_4,v_5,v_6,v_7$ that belong to the same partition class in $\{A,B,C\}$. Since these classes form independent sets, the two vertices are non-adjacent. Without loss of generality, $v_4,v_6 \in R$ for some $R \in \{A, B, C\}$. If $v_8 \in (A \cup B \cup C)\setminus R$, then $v_8$ is non-adjacent to both $v_4,v_6 \in R$, contradicting \Cref{obs2}. Therefore, $v_8 \in R$. Since $R$ is an independent set, $v_4 \in R$ implies that $v_5 \in (A \cup B \cup C)\setminus R$ and $v_5$ is non-adjacent to both $v_6,v_8 \in R$, contradicting \Cref{obs2}.
\end{claimproof}
    
\begin{claim}\label{2P3P2} $W''_n$ is $(2P_3+P_2)$-free.
\end{claim}
    
\begin{claimproof}[Proof of \Cref{2P3P2}] Suppose, to the contrary, that $\{v_1,\ldots,v_8\}$ induces a copy of $2P_3+P_2$, where $v_2$ is adjacent to $v_1$ and $v_3$, $v_4$ is adjacent to $v_5$, $v_7$ is adjacent to $v_6$ and $v_8$, and no other edges are present in $W''_n[\{v_1,\ldots,v_8\}]$. By \Cref{322}, $v_2 \in X\cup Y$ and at least one of $v_1$ and $v_3$ belongs to $X\cup Y$. Without loss of generality, $v_2 \in X$ and $v_1 \in X \cup Y$. As in the proof of \Cref{P32P2}, $\{v_1,v_2\}$ dominates $X \cup Y$ and so $\{v_4,\ldots,v_8\} \in A\cup B \cup C$. 
     
Suppose first that at least two vertices among $v_6$, $v_7$ and $v_8$ belong to the same partition class in $\{A,B,C\}$. These two vertices are non-adjacent, as $A,B,C$ are independent sets, and so they must be $v_6$ and $v_8$. Without loss of generality, $v_6,v_8 \in R$ for some $R \in \{A,B,C\}$. Similarly, at least one of $v_4$ and $v_5$ does not belong to $R$, say $v_4 \in (A \cup B \cup C)\setminus R$. Then $v_4$ is non-adjacent to both $v_6,v_8 \in R$, contradicting \Cref{obs2}.
     
Therefore, $v_6$, $v_7$ and $v_8$ belong to distinct partition classes in $\{A, B, C\}$. By \Cref{obs2}, none of $v_4$ and $v_5$ belongs to the partition class of either $v_6$ or $v_8$. But then $v_4$ and $v_5$ both belong to the partition class of $v_7$, contradicting the fact that every class is an independent set. 
\end{claimproof}
This concludes the proof of \Cref{K4unbounded}.
\end{proof}

\subsection{Summary results}

With the aid of \Cref{K5unbound,K4unbounded}, we can finally show \Cref{completefour,completefive}, which we restate for convenience.

\completefive*

\begin{proof} By \Cref{K5unbound}, if $H$ contains $P_3+P_2+P_1$, then the mim-width of the class of $(K_r, H)$-free graphs is unbounded. So we may assume that $u \leq 2$. If $u = 0$, then the mim-width is bounded by \citep[Theorem~30-(xiv)]{BHMPP22}. If $u = 1$, then the mim-width is unbounded for $t \geq 2$ and $s \geq 0$ or $t = 1$ and $s \geq 1$ (\Cref{K5unbound}), and bounded for $t=0$ (\citep[Theorem~30-(xii)]{BHMPP22}). This leaves open the case $H = P_3 + P_2$. Finally, if $u = 2$, then the mim-width is unbounded if one of $t$ and $s$ is at least $1$. This leaves open the case $H = 2P_3$.  
\end{proof}

\completefour*

\begin{proof} By \Cref{K4unbounded}, if $H$ contains $P_3+2P_2+P_1$ or $2P_3 + P_2$, then the mim-width of the class of $(K_r, H)$-free graphs is unbounded. So we may assume that $u \leq 2$. If $u = 0$, then the mim-width is bounded by \citep[Theorem~30-(xiv)]{BHMPP22}. If $u = 1$, then the mim-width is bounded for $t=0$ (\citep[Theorem~30-(xii)]{BHMPP22}), and unbounded for $t \geq 2$ and $s \geq 1$ or $t \geq 3$ and $s \geq 0$ (\Cref{K4unbounded}). This leaves open the cases $H = P_3 + 2P_2$ and $P_3+P_2+sP_1$. Finally, if $u = 2$, then the mim-width is unbounded for $t \geq 1$. This leaves open the case $H = 2P_3 + sP_1$.  
\end{proof}

\section{Concluding remarks and open problems}

In view of \Cref{reduction}, we believe that the main open problem related to algorithmic applications of sim-width is whether \textsc{Independent Set} is polynomial-time solvable for graph classes whose sim-width is bounded and quickly computable (this was first formulated in \citep{KKST17}). We highlight a possible connection. In \citep{DMS21}, it is asked whether there exists a $(\tw, \omega)$-bounded graph class for which \textsc{Independent Set} is $\mathsf{NP}$-hard. In view of these two open problems, it would be interesting to determine whether every $(\tw, \omega)$-bounded graph class has bounded sim-width (the converse does not hold, as mentioned in \Cref{ourresults}).

\citet{CH06} showed that \textsc{Independent $\mathcal{H}$-packing} is polynomial-time solvable for weakly chordal graphs, a superclass of chordal graphs, and for AT-free graphs, a superclass of co-comparability graphs. Both chordal graphs and co-comparability graphs have sim-width at most $1$ and in \citep{KKST17} it is asked whether weakly chordal graphs and AT-free graphs have bounded sim-width. We believe that \Cref{reduction} also provides strong motivation for studying the sim-width of weakly chordal and AT-free graphs. 

Finally, we conclude by asking to classify the mim-width for the remaining open cases in \Cref{completefour,completefive}. A particularly interesting open case is the mim-width of $(K_r, 2P_3)$-free graphs, for $r \geq 5$. In view of \Cref{listsim}, this is related to the open problem in \citep{HLS21} of whether there exists $k \in \mathbb{N}$ for which \textsc{List $k$-Colouring} restricted to $uP_3$-free graphs is $\mathsf{NP}$-hard for some $u \in \mathbb{N}$.

\section*{Acknowledgement}

We thank an anonymous referee for valuable comments.

\bibliographystyle{plainnat}
\bibliography{references}

\end{document}